\documentclass[twoside]{article}

%
\usepackage[accepted]{aistats2024}
%





\usepackage{amsmath,amssymb,trimclip,adjustbox}

\usepackage{algorithm}
\usepackage{algorithmic}


\usepackage{hyperref}
\usepackage{url}

\usepackage[english]{babel}
\usepackage{amsthm}

\usepackage{graphicx}
\usepackage{subcaption}

\theoremstyle{plain}

\theoremstyle{definition}
\newtheorem{definition}{Definition}[section]

\newtheorem{theorem}{Theorem}[section]

\newtheorem{lemma}[theorem]{Lemma}
\newtheorem*{remark}{Remark}

\newtheorem{exmp}{Example}[section]

\usepackage{textcomp}
\usepackage{xcolor}
\usepackage{array}

\newtheorem{assumption}{Assumption}

\newcommand{\E}{\mathbb{E}}
\newcommand{\Var}{\mathrm{Var}}
\newcommand{\Bias}{\mathrm{Bias}}
\newcommand{\AMISE}{\mathrm{AMISE}}
\newcommand{\Cov}{\mathrm{Cov}}
\newcommand{\Vol}{\mathrm{Vol}}
\newcommand{\bs}{\boldsymbol}

\DeclareMathOperator*{\argmax}{argmax} 
\DeclareMathOperator*{\argmin}{argmin}

\usepackage{mathtools}
\usepackage{xfrac}

\usepackage{caption}
\captionsetup{belowskip=4pt,aboveskip=4pt}

\usepackage{comment}

\usepackage{amsfonts, 
            empheq}%

\usepackage[nottoc,notlof,notlot]{tocbibind} 

\usepackage[round]{natbib}



\begin{document}

%

%

\twocolumn[

\aistatstitle{Formal Verification of Unknown Stochastic Systems via Non-parametric Estimation}

\aistatsauthor{ Zhi Zhang \And Chenyu Ma \And Saleh Soudijani 
\And  Sadegh Soudjani  }
\aistatsaddress{ Newcastle University\\
United Kingdom \And Newcastle University\\
United Kingdom \And CISPA\\
Germany \And MPI-SWS\\
Germany} 
]

\begin{abstract}
A novel data-driven method for formal verification is proposed to study complex systems operating in safety-critical domains. The proposed approach is able to formally verify discrete-time stochastic dynamical systems against temporal logic specifications only using observation samples and without the knowledge of the model, and provide a probabilistic guarantee on the satisfaction of the specification. We first propose the theoretical results for using non-parametric estimation to estimate an asymptotic upper bound for the \emph{Lipschitz constant} of the stochastic system, which can determine a finite abstraction of the system. Our results prove that the asymptotic convergence rate of the estimation is $O(n^{-\frac{1}{3+d}})$, where $d$ is the dimension of the system and $n$ is the data scale.
We then construct interval Markov decision processes using two different data-driven methods, namely non-parametric estimation and empirical estimation of transition probabilities, to perform formal verification against a given temporal logic specification.
Multiple case studies are presented to validate the effectiveness of the proposed methods.
\end{abstract}

\section{INTRODUCTION}\label{sec1}

For safety-critical systems, formal verification plays an essential role in analysing the system and providing formal safety guarantees \citep{baier2008principles}, which promotes the development of autonomous systems such as self-driving cars, power grids, medical robotics, and unmanned aircraft. Formal verification requires the knowledge of a model of the system under study \citep{baier2008principles,clarke1994model}.
Complex systems interacting with unpredictable environments are challenging to model \citep{yeh2018autonomous,corso2021survey}. The availability of large amounts of data from such systems necessitates developing data-driven formal verification techniques with a weak dependence on the information of the system's model.

Formal verification focuses on checking whether a system satisfies a given specification described using temporal logic \citep{belta2017formal,doyen2018verification,tabuada2009verification}.
Formal verification of stochastic systems has been studied by abstracting the system with a continuous state space into a finite Markov decision process (MDP) \citep{baier2008principles,clarke1994model}.
For systems with an unknown model, constructing the MDP is not available, due to the difficulty in (i) guaranteeing the closeness between the specifications of the original unknown system and its finite abstraction, and in (ii) the construction of transition probabilities between states of the finite abstraction. 
The first difficulty can be addressed by introducing the \emph{Lipschitz constant} (LC) of the transition kernel of the unknown system, which is the underlying assumption in many abstraction-based verification and synthesis approaches \citep{SA13,Lavaei_Survey}.
The second difficulty can be addressed by developing data-driven verification methods that would allow verification to be carried out solely based on data, with no prior knowledge of the system's model. In this paper, we will use non-parametric estimation and an empirical approach to develop such data-driven verification methods.

\begin{figure}[hbt!]
\vspace{-1cm}
\centerline{\includegraphics[width=0.4\textwidth]{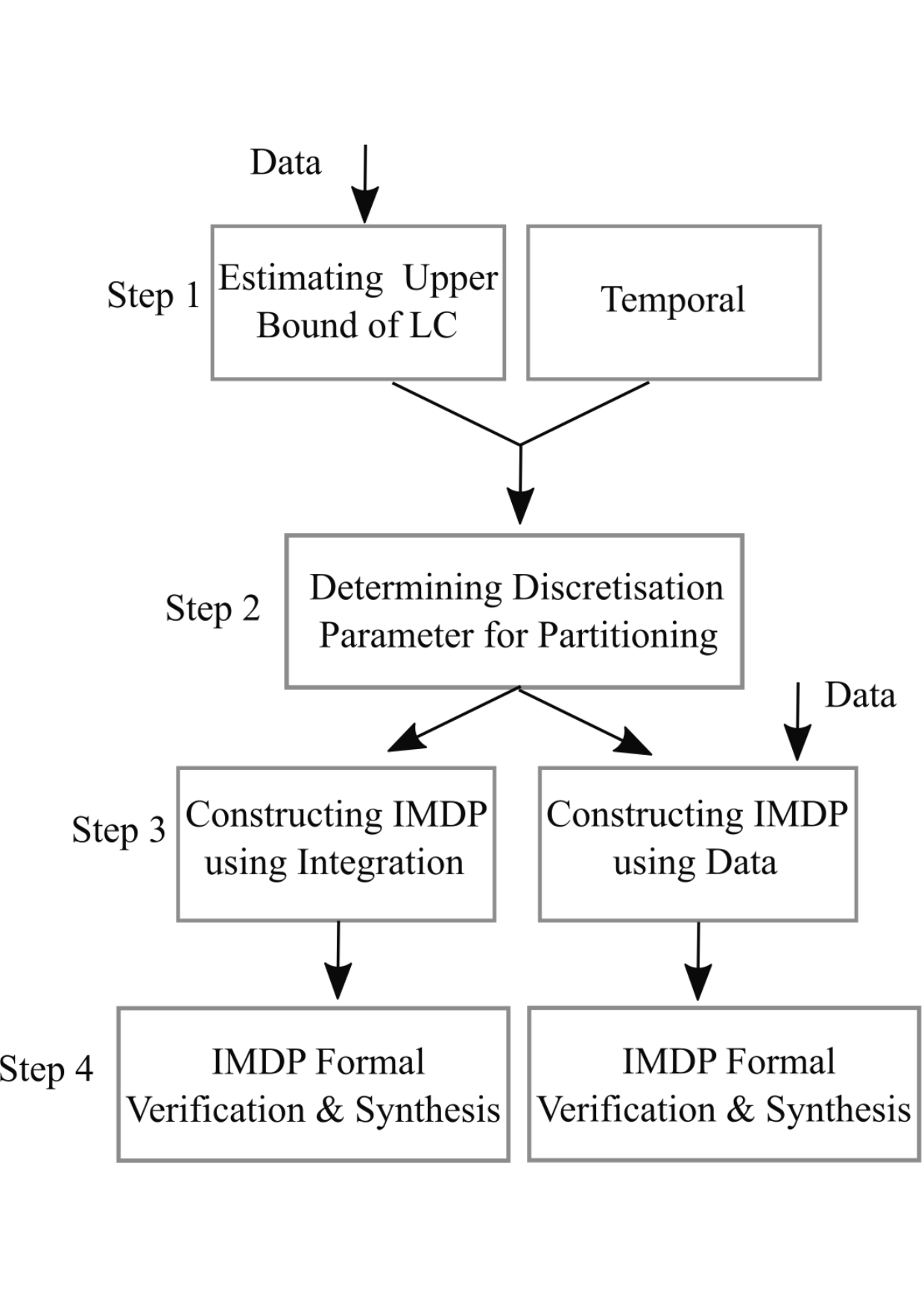}}
\vspace{-.7cm}
\caption{Workflow of the proposed data-driven formal verification framework.}
\label{verification_algorithm}
\vspace{-.8cm}
\end{figure}

Non-parametric methods, such as Gaussian process regression \citep{rasmussen2003gaussian}, non-parametric estimation (NPE) \citep{hardle2004nonparametric,scott2015multivariate}, and non-parametric least squares estimator \citep{ziemann2022single}, are widely used to study the dynamics of unknown systems based only on data.
For example, reachability \citep{jackson2021formal} and safety \citep{jagtap2020control,ahmadi2017safety} are studied for unknown dynamical systems by estimating a model with Gaussian process regression.
NPE has been used previously to approximate the invariant density of dynamical systems \citep{hang2018kernel}.
Among the aforementioned non-parametric methods, NPE has advantages in studying problems with no parametric form \citep{hardle2004nonparametric,scott2015multivariate}.
For instance, NPE is able to estimate the probability density function (PDF) of a random variable based on sampled data, ensuring it satisfies the statistical characteristics such as mean and variance and can be used for function regression.
For data-driven formal verification of unknown dynamical systems, the NPE can be employed to estimate the conditional stochastic kernel associated with the system dynamics. The LC of this estimated stochastic kernel provides a measure on the closeness between the system and its finite MDP abstraction. The transition probabilities between different states in the finite MDP can also be approximated using sampled data.
It should be noted that the LC estimation lacks well-established statistical guarantees, such as the bias of the LC, which affects the accuracy of the closeness mentioned in difficulty (i). This motivated us to give theoretical discussions on deriving asymptotic upper bound of the LC using NPE.
We present our results for the global LC of the stochastic kernel on a given domain. The same approach can be applied for computing LC locally on partitions of the state space, which then can be integrated with more efficient verification approaches based on local partitioning of the state space \citep{SA13}. 

The primary contribution of this paper is to present a data-driven formal verification approach for unknown stochastic systems based on the LC estimation as indicated in Fig.~\ref{verification_algorithm}.
We first adopt NPE to quantify an asymptotic upper bound for the LC estimation, which shows the LC estimating range has asymptotic convergence rate of $O(n^{-\frac{1}{3+d}})$ with $d\ge1$ being the state dimension of the system and $n$ being the data scale. This bound gives closeness guarantees between the system and its finite MDP abstraction. The process of determining the required partitioning for constructing the MDP abstraction is Step~2 in Fig.~\ref{verification_algorithm}, and is presented in Section~\ref{main_sec}. Then, two data-driven methods (an empirical approach and NPE), are used to construct an interval Markov decision processes (IMDPs) \citep{givan2000bounded} for the unknown stochastic system (Step~3 in Fig.~\ref{verification_algorithm}), which are presented in Section~\ref{datadriv_imdp}. Finally, we carry out formal verification on the constructed IMDP to verify the system against a given temporal logic specification, as presented in Section~\ref{case_sec}. 
Because of space limitations, the supplementary material in the appendix contains some preliminaries, proofs of statements, and numerical discussions.

\noindent\textbf{Related Work. }
A number of papers have studied the data-driven approaches for formal verification of dynamical systems. 
For systems with partial information of the model, safety and stability are studied using Bayesian framework \citep{schon2023verifying} and chance-constrained optimisation \citep{kenanian2019data}.
In addition, recently, formal verification for general unknown dynamical systems has drawn a lot of interest. For example, \cite{salamati2024data} verify the safety of unknown systems using barrier certificates which is construed by solving a scenario convex program based on sampled trajectories.
\cite{wicker2021certification} have studied reachability properties by employing Bayesian neural networks to make iterative predictions of the probability distributions of the system outputs and leveraging bound propagation techniques and backward recursion.
\cite{hashemi2023data} have proposed a data-driven approach for reachability analysis of stochastic systems with conformal inference.

A data-driven compositional reachability analysis framework is proposed by \cite{fan2017dryvr} for hybrid systems. Data-driven abstraction-based methods are studied by \cite{makdesi2021data,kazemi2024data} for control synthesis of non-probabilistic systems with formal guarantees.
Also, many data-driven formal verification approaches adopt Gaussian process regression to approximate the unknown system based on data and then take different formal verification techniques, such as IMDP abstraction \citep{lahijanian2015formal} and barrier certificates \citep{prajna2004safety}, to conduct formal verification \citep{jackson2021formal,jagtap2020control}.

Non-parametric estimation has been widely used in many fields including engineering, economics and biology \citep{hardle2004nonparametric,tsybakov2009introduction}. Parametric estimation assumes knowing a parameterised model of the system, and focuses on estimating the parameters. In contrast, non-parametric estimation does not assume any knowledge on the underlying model and estimates directly the model based on the observed data.
Non-parametric estimation has been used for deep learning and short-term forecasting \citep{huberman2021nonparametric}, estimation of graphical models \citep{zhu2017learning}, conditional information and divergence estimation \citep{poczos2012nonparametric},
and high-dimensional regression \citep{Izbicki2015NonparametricCD}.
Prediction of the LC of an unknown deterministic nonlinear function using the trajectory data is studied by \cite{chakrabarty2020safe}. Approximation of the invariant density of dynamical systems is studied by \cite{hang2018kernel}.

\section{PRELIMINARIES}\label{sec2}

\subsection{Problem Formulation}
We consider a discrete-time stochastic control system (DTSCS), which is a tuple $\Sigma_{ss}=(\mathcal S, U,w,f)$, 
where 
$\mathcal{S}\subset \mathbb R^n$ is the state space of the system,
$U$ is the input space of the system,
$w$ is a sequence of independent and identically-distributed (i.i.d.) random variables from a sample space $\Omega$ to the set $V_{w},$ i.e., $w:=\{ w(k):\Omega \to V_{w},k\in \mathbb N\},$
and $f:\mathcal{S}\times U\times V_{w}\to \mathcal{S}$ is a measurable function characterising the state evolution of $\Sigma_{ss}$ as
\begin{align}\label{dynamic_evolu}
    &x(k+1)=f(x(k),a(k),w(k)),\\
    &k\in \mathbb{N},~x(k)\in \mathcal{S},~ a(k)\in U \text{ and } w(k)\in V_{w}.\nonumber
\end{align}
Also, we define a set $\mathcal{U}_{\mathfrak a}$ that is the collection of sequences $\{a(k):\Omega\to U,k\in \mathbb N \}$. $a(k)$ is independent of $w(t)$ for any $k,t\in \mathbb N$ and $t\ge k$. 

In this paper, we assume that the system is unknown but data from sampled trajectories is available. For a compact presentation of the results, we focus on verifying the above system against temporal specifications through the data-driven construction of finite IMDPs when $\Sigma_{ss}$ is an autonomous system (i.e., the input space $U$ is a singleton). The presented approach is also applicable for control synthesis.

\subsection{Non-parametric Estimation of Density Functions}\label{sec:prelem}
Let $X=(X_1,\ldots,X_d)^T$ denote a $d$-dimensional random vector which has a continuous probability density function $f_{X}(\cdot)$.
For a given set of i.i.d. random samples $ \{\hat{X}_i = (\hat{X}_{i1},\ldots,\hat{X}_{id})^T, i=1,\ldots,n\}$, the general form of the multivariate kernel density estimator of $f_{X}(\cdot)$ is
\begin{equation}
\label{equ:MKDE}
\hat{f}_{X}(\bs{x})
=\frac{1}{n}
\sum_{i=1}^{n} 
K_{H}(\bs{x}-\hat{X}_i),\quad \forall \bs{x}\in \mathbb R^d,
\end{equation}
where
$K_{H}(\bs{u})=\frac{1}{|H|}K(H^{-1}\bs{u})$
and $K:\mathbb R^d\rightarrow\mathbb R_{\ge 0}$ is a \emph{multivariate kernel function}
 satisfying two moment conditions $\int K(\bs{u}) \,d\bs{u}=1$ and $\int \bs{u}K(\bs{u}) \,d\bs{u}=\bs{0}$.
 $H$ is a non-singular \emph{bandwidth matrix} and
 $|H|$ denotes the determinant of $H$.
 Examples of the univariate kernel function $K(\cdot)$ include uniform, triangle, quartic, and Gaussian kernel functions (see Appendix~\ref{kernel_bandwidth}). Multivariate kernel functions are typically chosen to be the product of univariate kernel functions \citep{hardle2004nonparametric}, i.e., the same kernel function with different bandwidths in each dimension:
 $$K(u_1,u_2,\ldots,u_d) = k(u_1)k(u_2)\ldots k(u_d)$$
for some univariate kernel function $k:\mathbb R\rightarrow\mathbb R_{\ge 0}$, and bandwidth matrix $H=diag(h_1,\ldots,h_d)$. A popular choice for the kernel function is the Gaussian kernel $k(u)=\frac{1}{\sqrt{2\pi}}\exp(-u^2/2) $, which leads to the following estimator for $f_X(\cdot)$
\begin{equation}
 \label{eq:gaussian}
    \hat{f}_{X}(\bs{x})
\!=\! \frac{(2\pi)^{-d/2}}{nh_1\cdots h_d} \cdot 
\sum_{i=1}^{n}
\prod_{j=1}^{d}
\exp\left[-\frac{1}{2}\left(\frac{x_j-\hat{X}_{ij}}{h_j}\right)^2\right], 
\end{equation}
with $x_j$ and $\hat X_{ij}$ being the $j^{\text{th}}$ elements of $\bs{x}$ and $\hat X_i$, respectively.
We will use this estimator in the rest of this paper to establish our theoretical results.

The accuracy of the estimation is widely assessed using the mean integrated squared error (MISE), which is used for selecting the kernel function and the bandwidth matrix.
The \emph{asymptotic} MISE, bias, and variance of the estimation is generally obtained by eliminating the higher-order terms.
The asymptotic bias and variance of $\hat{f}_X(\bs{x})$ in equation \eqref{equ:MKDE} are derived by \cite{hardle2004nonparametric} as
\begin{align*}
\Bias[\hat{f}_{X}(\bs{x})] & \approx \frac{1}{2}\mu_2(K) \,\textsf{tr}(H^T\mathcal{H}_f(\bs{x})H) \\
\Var[\hat{f}_{X}(\bs{x})]  &\approx \frac{1}{n\;|H|}\|K\|_2^2f_X(\bs{x}),
\end{align*}
where $\mu_2(K)$ is a constant defined with $\int \bs{u}\bs{u}^TK(\bs{u})\,d\bs{u}=\mu_2(K)\bs{I}_d$,
$\mathcal{H}_f(x)$ is the Hessian matrix of second partial derivatives of $f$, 
$\textsf{tr}(\cdot)$ denotes the trace of a matrix,
and $\|K\|_2$ is the $L_2$-norm of $K$.
Then, the asymptotic mean integrated squared error (AMISE) can be formulated as
\[
\AMISE(\hat{f}_{X} )\!=\!\frac{1}{4}\mu_2^2(K)\!\!\int\! \textsf{tr}(H^T\mathcal{H}_f(\bs{x})H)^2 d\bs{x} +\frac{\|K\|_2^2}{n|H|}.
\]
The AMISE is primarily determined by the choice of the bandwidth, with the kernel function having a minor effect only through specific characteristics such as the order of its first nonzero moments \citep{hardle2004nonparametric,scott2015multivariate}.
In addition, the optimal choice of the bandwidth improves the performance of kernel density estimators and keeps the balance between the bias and the variance.
Otherwise, the unsuitable bandwidths can result in a large variance and small bias (under-smoothing), or a small variance and large bias (over-smoothing). 
A tradeoff between over- and under-smoothing of the density estimator can be achieved by minimising the AMISE.
Related works on important properties of kernels and the choice of the bandwidth are presented in Appendix~\ref{kernel_bandwidth}.
{\color{black}In this paper, we mainly adopt Scott's formula shown in Appendix~\ref{kernel_bandwidth} to calculate the bandwidths.}


\paragraph{Estimating Conditional Density Functions.}
\citet{rosenblatt1969conditional} introduced the standard kernel estimator of a conditional density function (CoDF) by replacing the estimates of the joint and marginal densities in the definition of the CoDF. Using a kernel estimator in equation \eqref{equ:MKDE} that is the product of kernels for two random vectors $X$ and $Y$, the estimator of the joint density function of $(Y,X)$ and the marginal density function of $X$ are given by
\begin{align*}
\begin{split}
\hat{f}_{(Y,X)}(\bs{y},\bs{x})=&
\frac{1}{n}
\sum_{i=1}^n
K_{H_{\mathsf x}}(\bs{x}-\hat{X}_i)
\cdot K_{H_{\mathsf y}}(\bs{y}-\hat{Y}_i) \\
\hat{f}_{X}(\bs{x})=&
\frac{1}{n}
\sum_{j=1}^n
K_{H_{\mathsf x}}(\bs{x}-\hat{X}_j),
\end{split}
\end{align*}
where $K_{H_{\mathsf x}}$ and $K_{H_{\mathsf y}}$ are kernels with bandwidth matrices $H_{\mathsf x}$ and $H_{\mathsf y}$, respectively.
Thus, the kernel estimator of $f_{Y|X}$ is given by
\begin{align}
\hat{f}_{Y|X}(\bs{y},\bs{x})&=
\frac{\hat{f}_{(Y,X)}(\bs{y},\bs{x})}{\hat{f}_{X}(\bs{x})}\nonumber\\
&=
\frac{\sum_{i=1}^n
K_{H_{\mathsf x}}(\bs{x}-\hat{X}_i)
\cdot K_{H_{\mathsf y}}(\bs{y}-\hat{Y}_i)}
{\sum_{j=1}^n
K_{H_{\mathsf x}}(\bs{x}-\hat{X}_j)}.
\label{condi_densityestima}
\end{align}

The asymptotic bias and variance of the conditional density estimator~\eqref{condi_densityestima} have been obtained by \citet{hyndman1996estimating} for univariate $X$ and $Y$, and can be found in the supplementary material in equation \eqref{bias_from1996} and equation \eqref{var_from1996}. In the following sections, we use equation \eqref{condi_densityestima} to estimate the conditional density function of equation \eqref{dynamic_evolu}, which gives the density function of the next state as a random vector conditioned on the current state.

\subsection{Interval Markov Decision Processes}

Interval Markov Decision Process (IMDP) is a type of Markov decision process that has transition probabilities taking values inside given intervals \citep{givan2000bounded}.

\begin{definition}[\textbf{IMDP}]
An IMDP is a tuple $\Sigma=( Q, S_{\mathfrak a}, P_{lo}, P_{up}, AP, L ),$ where $Q$ is a finite set of states, $S_{\mathfrak a}$ is a finite set of actions and $S_{\mathfrak a}(q)$ is the set of actions at state $q\in Q$, $P_{lo}: Q\times S_{\mathfrak a}\times Q\to [0,1]$ is a function representing the lower bound of the transition probability from $q\in Q$ to $q^*\in Q$ under action $a\in S_{\mathfrak a}$, $P_{up}: Q\times S_{\mathfrak a}\times Q\to [0,1]$ is a function representing the upper bound of the transition probability from $q$ to $q^*$ under action $a\in S_{\mathfrak a}$, $AP$ is a finite set of atomic propositions, and $L:Q\to 2^{AP}$ is a labelling function assigning possibly several elements of $AP$ to each state $q$.
\end{definition}

For any $q,q^{*}\in Q$ and $a\in S_{\mathfrak a}(q)$, it is true that $P_{lo}(q,a,q^{*})\leq P_{up}(q,a,q^*)$ and $\sum_{q\in Q}P_{lo}(q,a,q^*)\leq1\leq \sum_{q\in Q}P_{up}(q,a,q^*)$.The set of probability distributions over $Q$ is denoted by $D(Q)$.
$\theta^{a}_{q}\in D(Q)$ represents a feasible distribution initiated from $q\in Q$ to all successor states in $Q$ under $a$, and satisfies $P_{lo}(q,a,q^*)\leq \theta^{a}_{q}(q^*)\leq P_{up}(q,a,q^*)$, where $q^*$ is the successor state. The set of all feasible distributions initiated from $q$ under $a$ is denoted by $\Theta^{a}_{q}$. A path $\nu=q_{0}\xrightarrow{a_0}q_{1}\xrightarrow{a_1}q_{2}\xrightarrow{a_2}\ldots,$ $a_{i}\in S_{\mathfrak a}(q_{i})$, represents a path of the IMDP and satisfies $P_{up}(q_{i}, a_{i}, q_{i+1})>0$ for all $i\in \mathbb{N}$. The last state of a finite path $\nu^{\mathsf{fin}}$ is denoted by $\textsf{last}(\nu^{\mathsf{fin}})$.
The sets of all finite and infinite paths are denoted by $\textsf{Paths}^{\mathsf{fin}}$ and $\textsf{Paths}$, respectively.
Let a function $\varpi: \textsf{Paths}^{\mathsf{fin}}\to S_{\mathfrak a}$ denote a strategy on the IMDP $\Sigma$, which maps a finite path $\nu^{\mathsf{fin}}$ of $\Sigma$ onto an action in $S_{\mathfrak a}$. The set of all such strategies is denoted by $\boldsymbol{\Pi}$. An MDP is an IMDP with all probability intervals being a singleton (i.e., with $P_{lo} = P_{up}$).

\begin{definition}[\textbf{Adversary}]
Consider an IMDP $\Sigma$. An adversary is a function $\kappa: \textsf{Paths}^{\mathsf{fin}}\times S_{\mathfrak a}\to D(Q)$, which maps the path-action pair $(\nu^{\mathsf{fin}},a)$  with $a\in S_{\mathfrak a}(\textsf{last}(\nu^{\mathsf{fin}}))$ to a feasible distribution $\theta^{a}_{q}\in\Theta^{a}_{\textsf{last}(\nu^{\mathsf{fin}})}$. The set of all adversaries is denoted by $\boldsymbol{K}$.
\end{definition}

\subsection{IMDP Verification}

Here, we give an introduction of IMDP verification and policy synthesis against a specification described in probabilistic computation tree logic (PCTL) \citep{baier2008principles}. The details of PCTL are provided in Appendix~\ref{PCTL}. 
For a PCTL path formula $\psi$ starting from an initial state $q\in Q$, the lower and upper bound of probabilities that the paths initialised at $q$ satisfy $\psi$ in $k$ steps can be defined as 
\begin{align}\label{lowprob_path} 
    P^{k}_{lo}(q)\!=\!\begin{cases}
       1,~\text{if}~q\in Q^{1},\\
    0,~\text{if}~q\in Q^{0}, \\
    0,~\text{if}~q\notin (Q^{0}\cup Q^{1})\wedge k=0,\\
    \min_{a}\min_{\theta^{a}_{q}} \sum_{q^{*}}\theta^{a}_{q}(q^{*})P^{k-1}_{lo}(q^{*}),~\text{otherwise},
\end{cases}
\end{align}
\begin{align}
\label{upprob_path}
     P^{k}_{up}(q)\!=\!\begin{cases}
       1,~\text{if}~q\in Q^{1},\\
    0,~\text{if}~q\in Q^{0},\\
    0,~\text{if}~q\notin (Q^{0}\cup Q^{1})\wedge k=0,\\
    \!\max_{a}\max_{\theta^{a}_{q}} \sum_{q^{*}}\theta^{a}_{q}(q^{*})P^{k-1}_{up}(q^{*}),~\text{otherwise},
\end{cases}
\end{align}
where $Q^{1}$ is the set of states that always satisfy the path formula $\psi$, $Q^{0}$ is the set of states that never satisfy $\psi$, and $\theta^{a}_{q}(q^*)\in [P_{lo}(q,a,q^*),P_{up}(q,a,q^*)]$, for any $q^{*}\in Q$.
The adversaries obtained from the procedures above determine a series of actions that lead to maximum and minimum probabilities satisfying path formula $\psi$ for each state.
The above recursive computation of the probability bounds can be performed in a finite number of steps for specifications with bounded until ($\mathcal{U}^{\leq k}$). The number of steps can be tuned with respect to any desired accuracy for specifications with unbounded until ($\mathcal{U}$).

\section{LC ESTIMATION METHOD AND CLOSENESS GUARANTEE}\label{main_sec}

In this section, we propose a novel algorithm using NPE to estimate the Lipschitz constant (LC) of the CoDF of a stochastic system $\Sigma_{ss}$, and give an asymptotic upper bound for the estimation of the LC. This upper bound is useful to find a partitioning strategy to guarantee the closeness between the satisfaction probabilities of the specifications on $\Sigma_{ss}$ and on its finite abstraction $\Sigma$.
Determining the partitioning strategy according to this upper bound is provided by \cite{SA13} as briefly discussed in Appendix~\ref{Prob_guarantee}.
In addition, the effectiveness of this estimation method is demonstrated by studying several cases in Appendix~\ref{supp_validation}.

\subsection{Estimation Method for LC}
We propose a method to estimate the LC of a CoDF $f_{Y|X}$ on a given domain $D_X\times D_Y$ using equation \eqref{condi_densityestima}. The estimation method is presented in Algorithm~\ref{algo_enviro}, which is based on taking samples of $X$ uniformly on $D_X$, then taking samples of $Y$ from $(Y|X)$ associated with samples of $X$, constructing $\hat f_{Y|X}$ according to equation \eqref{condi_densityestima},
taking partial derivative of $\hat f_{Y|X}$, and finally computing the maximum absolute values of derivatives on the domain $D_X\times D_Y$ and across dimensions.
The algorithm also iterates over these steps and compute the empirical mean of the results in Step~\ref{step_emp}. Note that the max operator in Step~\ref{step_max} corresponds to using infinity norm in the definition of the LC. Other norms could be used similarly.

\begin{algorithm}
\caption{Estimating the Lipschitz constant of $f_{Y|X}(\bs{y},
\bs{x})$}
\label{algo_enviro}
      
    \begin{algorithmic}[1]
    \REQUIRE  Domain $D_X\times D_Y$, sample generators of $(Y|X)$, number of iterations $m$
    \FOR{$\mu=1:m$}
    \STATE\label{step_H} Select bandwidths $H_{\mathsf x}$ and $H_{\mathsf y}$ and kernel $K(\cdot)$
    \STATE\label{step_sample} Select samples $ \{\hat{X}_i, i=1,\ldots,n\}$ uniformly from $D_X$
    \STATE For each $\hat X_i$, generate a sample $ \hat{Y}_i\in D_Y$ from $(Y|X_i)$, $i\in\{1,2,\ldots,n\}$
    \STATE Construct $\hat{f}_{Y|X}(\bs{y},
     \bs{x})$ using equation \eqref{condi_densityestima}, samples $(\hat Y_i,\hat X_i)$, and kernel $K(\cdot)$
    \STATE For each $j\in\{1,\ldots,d\}$, compute $\hat L_{\mu j}$ as
    \begin{equation}
    \label{eq:deriv_hat}
    \hat L_{\mu j} := \max_{(\bs{x},\bs{y})\in D_X\times D_Y}\left|\frac{\partial}{\partial x_{j}}\hat{f}_{Y|X}(\bs{y},\bs{x})\right|
    \end{equation}
    \ENDFOR
    \STATE\label{step_emp}Compute the empirical means $\hat L_j:=\frac{1}{m}\sum^{m}_{\mu=1}\hat L_{\mu j}$
    \STATE\label{step_max} Compute $\hat L=\max\{\hat L_{1},\hat L_{2},\ldots,\hat L_{d}\}$
    \ENSURE Estimated LC $\hat L$
    \end{algorithmic}
\end{algorithm}

In the rest of this section, we formulate bounds on the bias and variance of the estimator $\hat L_\mu$ for one- and multi-dimensional cases and discuss how to select the bandwidths $H_{\mathsf x}$ and $H_{\mathsf y}$ (cf. Step~\ref{step_H} of Algorithm~\ref{algo_enviro}) to tune the asymptotic bias of the estimation.
The total variance of the estimation can be reduced by increasing the iteration number $m$ of the algorithm.
Our theoretical results are established for the uniform distribution in Step~\ref{step_sample} of the algorithm. Similar results can be obtained for other distributions.

\subsection{Univariate Systems}
\label{theo_onedimen}

%
For the CoDF $f_{Y|X}(y,x)$ with one-dimensional $X$ and $Y$ and being continuously differentiable with respect to $x$, the LC on the domain $D_X\times D_Y\subset \mathbb R^2$ is 
\begin{equation}
\label{onelc_exp}
    L:=\max_{(x,y)\in D_X\times D_Y}|\frac{\mathrm{d}}{\mathrm{d}x}f_{Y|X}(y,x)|,
\end{equation}
with the LC estimator from one iteration of Algorithm~\ref{algo_enviro}:
\begin{equation}
\label{onelc_est}
\hat{L}:=\max_{(x,y)\in D_X\times D_Y}|\frac{\mathrm{d}}{\mathrm{d}x}\hat{f}_{Y|X}(y,x)|.
\end{equation}
Our main task is to show that the bias and variance of this $\hat L$ has nice asymptotic properties with respect to the data scale $n$.
We raise the following assumption.
\begin{assumption}
\label{ass_1}
\textbf{{\normalfont(a)}}
There exists a constant $C_{f}>0$ such that $| f_{Y|X}(y,x)|\leq C_{f}$ for all $(x,y)\in D_X\times D_Y$.
\textbf{{\normalfont(b)}}
There exist positive constants $C_{b1}$ and $C_{b2}$ such that $|\frac{\mathrm{d}^{3}}{\mathrm{d}x\mathrm{d}y^{2}}f_{Y|X}(y,x)|\leq C_{b1}$ and $|\frac{\mathrm{d}^{3}}{\mathrm{d}x^{3}}f_{Y|X}(y,x)|\leq C_{b2}$ for all $(x,y)\in D_X\times D_Y$.
\end{assumption}

\begin{remark}
Note that for the purpose of our discussions, it is sufficient to know any rough upper bounds $C_f,C_{b1},C_{b2}$.
In general, it is common to require some information on higher derivatives to make certain conclusions. For example, polynomial interpolation with degree $n$ requires a bound on the $(n+1)^{\text{st}}$ derivative to give the interpolation error. Otherwise, the convergence of the interpolation cannot be guaranteed \citep{burden2015numerical}. Another example is the unconstrained optimisation with the second order necessary condition that requires the Hessian matrix to be positive definite \citep{ruszczynski2011nonlinear}. In order to ensure the mean square error (MSE) of non-parametric density estimation of $f_X$ has certain asymptotic properties, the Hessian matrix $\mathcal{H}_f(x)$ of $f_X$ should satisfy the condition that $p^{T}\mathcal{H}_f(x)p$ is bounded for all $p\in \mathbb R^d$ \citep{hardle2004nonparametric}. Thus, it is natural and reasonable to require the rough upper bound of third derivatives as mentioned in Assumption~\ref{ass_1}(b).
\end{remark}

We first give upper bounds for the asymptotic bias and variance of $\frac{\mathrm{d}}{\mathrm{d}x}\hat{f}_{Y|X}(y,x)$ based on the approach of \citet{hyndman1996estimating} on the asymptotic bias and variance of univariate conditional density estimators.
We use the symbol $\lesssim$ to denote the asymptotic bound when $n\to+\infty$ by eliminating higher order terms.
In this subsection, 
we use $K:\mathbb{R} \to \mathbb{R}$ to be the \emph{Gaussian kernel function} in the estimator \eqref{condi_densityestima} with bandwidths $h_{\mathsf x}$ and $h_{\mathsf y}$.
We also denote $G_{ji}:=\int \nu^{i}K^{j}(\nu)d\nu$ for $i,j\in\{0,1,2,\dots,6\}$, $j\ge 1$.

\begin{lemma}
\label{le_vari}
Suppose that $f_{Y|X}(y,x)$ satisfies {\normalfont Assumption~\ref{ass_1}(a)}.
For any $(x,y)\in D_X\times D_Y$, $h_{\mathsf x},h_{\mathsf y}>0$, we have that for large $n$ if $nh_{\mathsf x}^{3}h_{\mathsf y}\to +\infty$ and $h_{\mathsf x},h_{\mathsf y}\to 0$ as $n\to +\infty$, then
\begin{align*}
    \Var\left[\frac{\mathrm{d}}{\mathrm{d}x}\hat{f}_{Y|X}(y,x)\right] \lesssim \frac{C_{1}}{n h_{\mathsf x}^{3}h_{\mathsf y}},
\end{align*}
where $\lesssim$ denotes an asymptotic bound for large $n$ and $C_{1}:=\Vol(D_X)G_{20}(K)C_{f}$ with $\Vol(\cdot)$ indicating the volume (Lebesgue measure) of a set.
\end{lemma}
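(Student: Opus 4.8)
The plan is to extend the univariate conditional-density variance analysis of \citet{hyndman1996estimating} to the $x$-derivative of the ratio estimator~\eqref{condi_densityestima}. Write $g(y,x):=\frac{1}{n}\sum_{i} K_{h_{\mathsf x}}(x-\hat{X}_i)K_{h_{\mathsf y}}(y-\hat{Y}_i)$ for the joint estimate and $p(x):=\frac{1}{n}\sum_{j} K_{h_{\mathsf x}}(x-\hat{X}_j)$ for the marginal estimate, so that $\hat{f}_{Y|X}=g/p$ and, by the quotient rule, $\frac{\mathrm{d}}{\mathrm{d}x}\hat{f}_{Y|X}=(g_x p-g p_x)/p^2$, where the subscript $x$ denotes $\partial/\partial x$. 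First I would record the sampling law: since $\hat{X}_i$ is uniform on $D_X$ (Step~\ref{step_sample}) and $\hat{Y}_i$ is drawn from $(Y\mid \hat{X}_i)$, the pair $(\hat{X}_i,\hat{Y}_i)$ has joint density $\frac{1}{\Vol(D_X)}f_{Y|X}(v,u)$, so for interior points $\E[p(x)]\to 1/\Vol(D_X)$ and $\E[g(y,x)]\to f_{Y|X}(y,x)/\Vol(D_X)$ as the bandwidths shrink. This is the source of the $\Vol(D_X)$ factor appearing in $C_1$.

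Next I would linearise the ratio around the deterministic limit of the denominator. Because $p(x)$ concentrates on $\E[p(x)]$ — its relative fluctuation is $O_p((nh_{\mathsf x})^{-1/2})$, which vanishes under $nh_{\mathsf x}^3h_{\mathsf y}\to\infty$ — the leading stochastic contribution to the derivative is $g_x(y,x)/\E[p(x)]=\Vol(D_X)\,g_x(y,x)$ up to lower-order terms, while $g p_x/p^2$ and the corrections from expanding $1/p^2$ are of strictly smaller order. The decisive point is a bandwidth accounting: differentiating the $x$-kernel replaces $K_{h_{\mathsf x}}$ by $h_{\mathsf x}^{-2}K'(\cdot/h_{\mathsf x})$, so $g_x$ carries the prefactor $h_{\mathsf x}^{-2}h_{\mathsf y}^{-1}$ and its variance is of order $(nh_{\mathsf x}^3h_{\mathsf y})^{-1}$, whereas $p_x$ only yields order $(nh_{\mathsf x}^3)^{-1}$, which is negligible relative to the former as $h_{\mathsf y}\to 0$. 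It therefore suffices to bound $\Var[\Vol(D_X)\,g_x]=\Vol(D_X)^2\,\Var[g_x]$ modulo vanishing terms.

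For this dominant term I would use the i.i.d. structure $\Var[g_x]=\frac{1}{n}\Var[Z_1]\le\frac{1}{n}\E[Z_1^2]$ with $Z_1:=h_{\mathsf x}^{-2}h_{\mathsf y}^{-1}K'((x-\hat{X}_1)/h_{\mathsf x})K((y-\hat{Y}_1)/h_{\mathsf y})$, and evaluate $\E[Z_1^2]$ against the sampling density by the substitution $s=(x-u)/h_{\mathsf x}$, $t=(y-v)/h_{\mathsf y}$. The Jacobian contributes a factor $h_{\mathsf x}h_{\mathsf y}$ against the prefactor $h_{\mathsf x}^{-4}h_{\mathsf y}^{-2}$, giving the order $h_{\mathsf x}^{-3}h_{\mathsf y}^{-1}$, and the integrand converges to $\frac{1}{\Vol(D_X)}f_{Y|X}(y,x)$ times the kernel integrals $\int K'(s)^2\,\mathrm{d}s$ (from the differentiated $x$-kernel) and $G_{20}(K)=\int K(t)^2\,\mathrm{d}t$ (from the undifferentiated $y$-kernel) as the bandwidths tend to zero. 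Multiplying by $\Vol(D_X)^2$, applying $f_{Y|X}(y,x)\le C_f$ from Assumption~\ref{ass_1}(a), and collecting terms then yields the asymptotic bound $\frac{C_1}{nh_{\mathsf x}^3 h_{\mathsf y}}$, where $C_1$ gathers the volume factor $\Vol(D_X)$, the bound $C_f$, and the kernel integral $G_{20}(K)$ (together with the differentiated-kernel integral $\int K'(s)^2\,\mathrm{d}s$).

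The main obstacle I anticipate is the ratio linearisation: one must show rigorously that replacing $p$ and $p^2$ by their deterministic limits perturbs the variance only at lower order, which is the standard but delicate step in conditional-density asymptotics and is exactly where the hypothesis $nh_{\mathsf x}^3h_{\mathsf y}\to\infty$ is indispensable, since it simultaneously forces $\Var[g_x]\to 0$ and makes the denominator's relative error negligible. A secondary point is truncating the Taylor expansion of $f_{Y|X}(y-h_{\mathsf y}t,\,x-h_{\mathsf x}s)$ inside the integral at its leading value; because the statement asks only for an upper bound on the variance rather than its exact leading coefficient, boundedness of $f_{Y|X}$ (Assumption~\ref{ass_1}(a)) suffices here, with the stronger third-derivative bounds of Assumption~\ref{ass_1}(b) reserved for the companion bias estimate.
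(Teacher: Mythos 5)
Your proposal is correct at the same (formal, delta-method) level of rigour as the paper's own argument, and it reaches the same rate and essentially the same constant, but the execution is organised differently. The paper starts from the same quotient-rule identity --- its equation \eqref{firderiv_1dimen} is exactly $(g_xp-gp_x)/p^2$ rearranged using the Gaussian identity $K'(u)=-uK(u)$ --- but then runs a fully explicit nested ratio analysis: it applies the second-order ratio approximation of \citet{hyndman1996estimating} (Lemma~\ref{baspro_cdf}) repeatedly, first to the inner ratio statistics $\lambda$, $\lambda^{*}$, $\hat{\lambda}$, then to $\lambda_{main}$, and finally to the outer ratio, with every kernel moment supplied by Lemma~\ref{element_expect}; this bookkeeping yields the exact leading term $\frac{G_{20}(K)G_{22}(K)f_{Y|X}(y,x)}{nh_{\mathsf x}^{3}h_{\mathsf y}f_{X}(x)}$ together with explicit remainders, from which the bound follows because $f_X\equiv 1/\Vol(D_X)$ under uniform sampling and $f_{Y|X}\leq C_f$. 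You instead linearise once around the deterministic limit of the denominator, kill the $gp_x/p^2$ branch by pure order counting ($\Var[p_x]=O((nh_{\mathsf x}^{3})^{-1})=o((nh_{\mathsf x}^{3}h_{\mathsf y})^{-1})$ since $h_{\mathsf y}\to 0$), and bound only $\Var[g_x]$ by a direct second-moment substitution --- legitimate because the lemma asks only for an upper bound, not the exact asymptotic variance. Your route buys a drastic reduction in bookkeeping (no nested ratio expansions, no covariance computations for auxiliary statistics); the paper's route buys the exact leading constant and explicit higher-order terms, and its machinery is reused in the companion bias analysis. Two small points to tighten in your sketch: (i) when you discard $gp_x/p^2$ you should also dispatch its cross-covariance with the dominant term, e.g.\ by Cauchy--Schwarz, which gives $O\bigl((nh_{\mathsf x}^{3}\sqrt{h_{\mathsf y}})^{-1}\bigr)=o\bigl((nh_{\mathsf x}^{3}h_{\mathsf y})^{-1}\bigr)$ --- your phrase ``modulo vanishing terms'' glosses over this; and (ii) for the Gaussian kernel $\int K'(s)^2\,\mathrm{d}s = G_{22}(K)<1$, so the constant you assemble, $\Vol(D_X)G_{20}(K)G_{22}(K)C_f$, coincides with the one derived inside the paper's proof and is subsumed by the slightly looser $C_{1}=\Vol(D_X)G_{20}(K)C_{f}$ appearing in the lemma statement.
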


\begin{lemma}\label{le_bias}
Suppose $f_{Y|X}(y,x)$ 
satisfies Assumption~\ref{ass_1}.
We have that if $h_{\mathsf x},h_{\mathsf y}\to 0$ as $n\to +\infty$, then
\begin{align*}
    &\left|\E\left[\frac{\mathrm{d}}{\mathrm{d}x}\hat{f}_{Y|X}(y,x)\right]-\frac{\mathrm{d}}{\mathrm{d}x}f_{Y|X}(y,x)\right|
    \lesssim \frac{h^{2}_{\mathsf x}}{2}A,
\end{align*}
with $A:=G_{12}(K)[\frac{h_{\mathsf y}^{2}}{h_{\mathsf x}^{2}}C_{b1}+C_{b2}].$
\end{lemma}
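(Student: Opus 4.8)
The plan is to reduce the bias of the one-iteration derivative estimator $\frac{\mathrm{d}}{\mathrm{d}x}\hat f_{Y|X}(y,x)$ to a differentiated version of the standard conditional-density bias, following the asymptotic analysis of \citet{hyndman1996estimating}. Write $\hat g:=\hat f_{(Y,X)}$, $\hat p:=\hat f_X$ for the joint and marginal kernel estimators and $g:=f_{(Y,X)}$, $p:=f_X$ for their population counterparts, so that $\hat f_{Y|X}=\hat g/\hat p$. First I would record the leading-order expectations of the two kernel averages by the usual recipe: substituting $u=(x-\hat X)/h_{\mathsf x}$, $v=(y-\hat Y)/h_{\mathsf y}$, Taylor-expanding $g$ (respectively $p$) about $(x,y)$, and invoking the Gaussian-kernel moment identities $\int K=1$, $\int \nu K(\nu)\,d\nu=0$ and $\int \nu^{2}K(\nu)\,d\nu=G_{12}(K)$. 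This yields $\E[\hat g]\approx g+\tfrac12 G_{12}(K)\big(h_{\mathsf x}^{2}\,\partial_{xx}g+h_{\mathsf y}^{2}\,\partial_{yy}g\big)$ and $\E[\hat p]\approx p+\tfrac12 G_{12}(K)\,h_{\mathsf x}^{2}\,\partial_{xx}p$.

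Next I would push these through the ratio $\hat g/\hat p$ by a first-order (delta-method) linearization of $1/\hat p$ about $1/p$, retaining only the leading bias contributions and discarding variance-order remainders; this is exactly where the hypotheses $h_{\mathsf x},h_{\mathsf y}\to 0$ and the $\lesssim$ convention enter, and where Assumption~\ref{ass_1}(a), which bounds $f_{Y|X}=g/p$ by $C_f$, is used to control the discarded remainder terms of the ratio expansion. After rewriting $g=f_{Y|X}\,p$ and simplifying, this reproduces the Hyndman conditional-density bias
\[
\Bias[\hat f_{Y|X}]\approx\tfrac12 G_{12}(K)\Big[h_{\mathsf x}^{2}\big(\partial_{xx}f_{Y|X}+2\,\partial_x f_{Y|X}\,\partial_x\log p\big)+h_{\mathsf y}^{2}\,\partial_{yy}f_{Y|X}\Big],
\]
in which the term $f_{Y|X}\,\partial_{xx}p/p$ cancels between the numerator and denominator contributions.

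I would then obtain the bias of the derivative by interchanging $\frac{\mathrm{d}}{\mathrm{d}x}$ with the expectation, so that $\Bias[\frac{\mathrm{d}}{\mathrm{d}x}\hat f_{Y|X}]$ equals, to leading order, $\frac{\mathrm{d}}{\mathrm{d}x}\Bias[\hat f_{Y|X}]$. The decisive simplification is the sampling rule of Step~\ref{step_sample}: since $X$ is drawn uniformly on $D_X$, the true marginal $p=f_X$ is constant there, so every $x$-derivative of $\log p$ vanishes and the whole cross term (and its derivative) drops out. Differentiating the two surviving terms once more in $x$ leaves $\tfrac12 G_{12}(K)\big(h_{\mathsf x}^{2}\,\partial_{xxx}f_{Y|X}+h_{\mathsf y}^{2}\,\partial_{xyy}f_{Y|X}\big)$. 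Taking absolute values, applying the third-derivative bounds of Assumption~\ref{ass_1}(b), $|\partial_{xxx}f_{Y|X}|\le C_{b2}$ and $|\partial_{xyy}f_{Y|X}|\le C_{b1}$, and factoring out $h_{\mathsf x}^{2}/2$ yields $\tfrac{h_{\mathsf x}^{2}}{2}G_{12}(K)\big[\tfrac{h_{\mathsf y}^{2}}{h_{\mathsf x}^{2}}C_{b1}+C_{b2}\big]=\tfrac{h_{\mathsf x}^{2}}{2}A$, the asserted bound.

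I expect the main obstacle to lie not in the kernel-moment algebra but in making the asymptotics rigorous through the random denominator: I must justify the delta-method linearization of $1/\hat p$, show that the neglected product- and denominator-fluctuation terms are genuinely of smaller order than $h_{\mathsf x}^{2}+h_{\mathsf y}^{2}$ uniformly on $D_X\times D_Y$, and, crucially, justify interchanging differentiation in $x$ with the expectation while verifying that the remainder of the bias expansion differentiates to a still-negligible term. Assumption~\ref{ass_1}(a) and the smoothness quantified by Assumption~\ref{ass_1}(b) are what make these remainder and interchange estimates go through.
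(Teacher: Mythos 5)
Your proposal is correct and takes essentially the same route as the paper: the paper's proof simply invokes the Hyndman--Bashtannyk--Grunwald conditional-density bias expansion (restated as equation \eqref{bias_from1996}), differentiates it in $x$, lets the uniform sampling of $X$ annihilate the $f_X'/f_X$ cross term, and bounds the two third derivatives by $C_{b1}$ and $C_{b2}$ --- exactly your steps, with your delta-method re-derivation of that expansion being the only added (self-contained) material. As a minor aside, your first-power constant $G_{12}(K)$ is the right one; the $G_{12}^{2}(K)$ appearing in the paper's display \eqref{bias_from1996} is evidently a typo, since the paper's own proof of the lemma proceeds with $G_{12}(K)$.
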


We get the following result by combining Lemmas~\ref{le_vari}--\ref{le_bias}.

\begin{theorem}\label{le_mse}
Suppose that $f_{Y|X}(y,x)$ satisfies Assumption~\ref{ass_1}. 
For any $(x,y)\in D_X\times D_Y$, $h_{\mathsf x},h_{\mathsf y}>0$, we have that for large $n$ if $nh_{\mathsf x}^{3}h_{\mathsf y}\to +\infty$ and $h_{\mathsf x},h_{\mathsf y}\to 0$ as $n\to +\infty$, then
\begin{equation*}
\label{eq:MSE}
    \E\left[\left(\frac{\mathrm{d}}{\mathrm{d}x}\hat{f}_{Y|X}(y,x)-\frac{\mathrm{d}}{\mathrm{d}x}f_{Y|X}(y,x)\right)^{2}\right]\lesssim \epsilon_{3},
\end{equation*}
where $\epsilon_{3}:=\frac{C_{1}}{nh_{\mathsf x}^{3}h_{\mathsf y}}+\frac{h_{\mathsf x}^{4}}{4}A^{2}$.
\end{theorem}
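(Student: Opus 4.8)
The plan is to establish the mean squared error bound by invoking the standard bias-variance decomposition, which reduces the problem entirely to the two preceding lemmas. For any estimator $\widehat{\theta}$ of a quantity $\theta$, we have the identity
\begin{equation*}
\E[(\widehat{\theta}-\theta)^2]=\Var[\widehat{\theta}]+\bigl(\E[\widehat{\theta}]-\theta\bigr)^2,
\end{equation*}
where $\widehat{\theta}=\frac{\mathrm{d}}{\mathrm{d}x}\hat{f}_{Y|X}(y,x)$ and $\theta=\frac{\mathrm{d}}{\mathrm{d}x}f_{Y|X}(y,x)$. The first term is the variance, controlled by Lemma~\ref{le_vari}, and the second term is the squared bias, controlled by Lemma~\ref{le_bias}.

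First I would verify that the hypotheses of both lemmas hold under the assumptions of the theorem. Lemma~\ref{le_vari} requires Assumption~\ref{ass_1}(a) together with $nh_{\mathsf x}^{3}h_{\mathsf y}\to+\infty$ and $h_{\mathsf x},h_{\mathsf y}\to0$, and Lemma~\ref{le_bias} requires the full Assumption~\ref{ass_1} together with $h_{\mathsf x},h_{\mathsf y}\to0$; all of these are precisely the standing hypotheses of Theorem~\ref{le_mse}, so both lemmas apply directly. Substituting the variance bound $\Var[\widehat{\theta}]\lesssim \frac{C_1}{nh_{\mathsf x}^3 h_{\mathsf y}}$ and squaring the bias bound $|\E[\widehat{\theta}]-\theta|\lesssim \frac{h_{\mathsf x}^2}{2}A$ to obtain $(\E[\widehat{\theta}]-\theta)^2\lesssim \frac{h_{\mathsf x}^4}{4}A^2$, the two contributions add to give exactly $\epsilon_3=\frac{C_1}{nh_{\mathsf x}^3 h_{\mathsf y}}+\frac{h_{\mathsf x}^4}{4}A^2$.

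The only subtlety worth addressing carefully is the meaning of combining two asymptotic ($\lesssim$) bounds through addition: since $\lesssim$ denotes an asymptotic upper bound obtained by discarding higher-order terms as $n\to+\infty$, I would note that the sum of the leading-order variance and leading-order squared-bias terms is itself the leading-order expression for the MSE, with the dropped higher-order terms from each lemma remaining higher-order in the sum. This is the step that requires the most care, because one must confirm that no cross-term or lower-order interaction between the bias and variance expansions contributes at leading order; the bias-variance identity is exact (no cross term appears), so this concern is resolved by the algebraic form of the decomposition itself rather than by any additional estimate. With that observation, the bound $\E[(\widehat{\theta}-\theta)^2]\lesssim\epsilon_3$ follows immediately, completing the proof.
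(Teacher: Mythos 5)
Your proof is correct and follows exactly the paper's own argument: the bias--variance decomposition $\E[(\widehat\theta-\theta)^2]=\Var[\widehat\theta]+(\E[\widehat\theta]-\theta)^2$, followed by substitution of the variance bound from Lemma~\ref{le_vari} and the squared bias bound from Lemma~\ref{le_bias}, yielding $\epsilon_3$. Your additional remark that the decomposition is an exact identity (so no cross-term can interfere with combining the two asymptotic bounds) is a sound observation that the paper leaves implicit.
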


The above theorem has already presented an upper bound of the MSE between $\frac{\mathrm{d}}{\mathrm{d}x}\hat{f}_{Y|X}(y,x)$ and $\frac{\mathrm{d}}{\mathrm{d}x}f_{Y|X}(y,x)$, which can be used to give the following main result.

\begin{theorem}[\textbf{Bias of the Estimation}]
\label{LC_main}
Suppose $f_{Y|X}(y,x)$ satisfies {\normalfont Assumption~\ref{ass_1}}.
For any $(x,y)\in D_X\times D_Y$, $h_{\mathsf x},h_{\mathsf y}>0$, we can have for large $n$ if $nh_{\mathsf x}^{3}h_{\mathsf y}\to +\infty$ and $h_{\mathsf x},h_{\mathsf y}\to 0$ as $n\to +\infty$, then
\begin{equation*}
|\E[\hat{L}]-L|\lesssim \epsilon_{3}^{\frac{1}{2}}, 
\end{equation*}
with $\epsilon_{3}\!=\!\frac{C_{1}}{nh_{\mathsf x}^{3}h_{\mathsf y}}\!+\!\frac{h_{\mathsf x}^{4}}{4}A^{2}$ defined in Theorem~\ref{le_mse}.
\end{theorem}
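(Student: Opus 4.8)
The plan is to bound the bias of the estimated Lipschitz constant $|\E[\hat L]-L|$ by transferring the pointwise mean-squared error guarantee of Theorem~\ref{le_mse} through the maximum operators that define $L$ and $\hat L$. The key structural observation is that both quantities are suprema over the compact domain $D_X\times D_Y$ of the absolute derivative: $L=\max|\frac{\mathrm{d}}{\mathrm{d}x}f_{Y|X}|$ and $\hat L=\max|\frac{\mathrm{d}}{\mathrm{d}x}\hat f_{Y|X}|$. First I would invoke the elementary fact that the map $g\mapsto\sup|g|$ is $1$-Lipschitz with respect to the sup norm, so that pointwise closeness of the derivatives controls the gap between their maxima. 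Concretely, I would argue
\begin{equation*}
\left|\hat L-L\right|=\Bigl|\max_{(x,y)}\bigl|\tfrac{\mathrm{d}}{\mathrm{d}x}\hat f_{Y|X}\bigr|-\max_{(x,y)}\bigl|\tfrac{\mathrm{d}}{\mathrm{d}x}f_{Y|X}\bigr|\Bigr|\leq\max_{(x,y)}\Bigl|\tfrac{\mathrm{d}}{\mathrm{d}x}\hat f_{Y|X}-\tfrac{\mathrm{d}}{\mathrm{d}x}f_{Y|X}\Bigr|,
\end{equation*}
using the reverse triangle inequality for suprema.

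Next I would take expectations and push $\E$ inside the maximum. Since $|\E[\hat L]-L|\leq\E|\hat L-L|$ by Jensen's inequality (noting $L$ is deterministic), and then $\E|\hat L-L|\leq\E\bigl[\max_{(x,y)}|\frac{\mathrm{d}}{\mathrm{d}x}\hat f_{Y|X}-\frac{\mathrm{d}}{\mathrm{d}x}f_{Y|X}|\bigr]$, the task reduces to bounding the expected sup of the pointwise error. To connect this with Theorem~\ref{le_mse}, which controls the pointwise MSE $\E[(\frac{\mathrm{d}}{\mathrm{d}x}\hat f_{Y|X}-\frac{\mathrm{d}}{\mathrm{d}x}f_{Y|X})^2]\lesssim\epsilon_3$, I would apply the Cauchy--Schwarz (or power-mean) inequality $\E|Z|\leq(\E[Z^2])^{1/2}$ to a single representative point, giving the target rate $\epsilon_3^{1/2}$. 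The cleanest route, consistent with the asymptotic $\lesssim$ bookkeeping used throughout, is to treat the maximizing point and argue that the uniform-in-$(x,y)$ nature of the bound in Theorem~\ref{le_mse} (the constants $C_1$ and $A$ do not depend on the evaluation point, only on the global bounds $C_f,C_{b1},C_{b2}$) lets the pointwise square-root bound pass to the supremum.

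The main obstacle is the exchange of the maximum with the expectation: in general $\E[\max_{(x,y)}|Z_{(x,y)}|]$ is strictly larger than $\max_{(x,y)}\E|Z_{(x,y)}|$, so a naive pointwise application of Theorem~\ref{le_mse} does not immediately yield a bound on the expected supremum. I expect the paper to finesse this within its asymptotic framework by observing that the bias bound $\frac{h_{\mathsf x}^2}{2}A$ of Lemma~\ref{le_bias} and the variance bound $\frac{C_1}{nh_{\mathsf x}^3h_{\mathsf y}}$ of Lemma~\ref{le_vari} hold \emph{uniformly} over the domain with domain-independent constants, so that the asymptotic rate $\epsilon_3^{1/2}$ is the same at every point and therefore governs the worst case. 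Thus the clean argument is: bound $|\E[\hat L]-L|\leq\E|\hat L-L|\leq\sup_{(x,y)}\bigl(\E[(\frac{\mathrm{d}}{\mathrm{d}x}\hat f_{Y|X}-\frac{\mathrm{d}}{\mathrm{d}x}f_{Y|X})^2]\bigr)^{1/2}\lesssim\epsilon_3^{1/2}$, where the uniformity of the constants makes the supremum of the square-root MSE equal to $\epsilon_3^{1/2}$ asymptotically. If a fully rigorous treatment of the expected supremum is desired, one would instead need a uniform law of large numbers or a bound on the modulus of continuity of the error process over the compact domain, but within the asymptotic convention of this section the uniform pointwise bound suffices.
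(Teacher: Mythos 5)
Your opening reductions (the reverse triangle inequality for suprema and Jensen's inequality) are sound and consistent with how the paper sets things up, but the step you yourself flag as the ``main obstacle'' --- replacing $\E\bigl[\sup_{(x,y)}|\tfrac{\mathrm{d}}{\mathrm{d}x}\hat f_{Y|X}-\tfrac{\mathrm{d}}{\mathrm{d}x}f_{Y|X}|\bigr]$ by $\sup_{(x,y)}\bigl(\E[(\tfrac{\mathrm{d}}{\mathrm{d}x}\hat f_{Y|X}-\tfrac{\mathrm{d}}{\mathrm{d}x}f_{Y|X})^2]\bigr)^{1/2}$ --- is a genuine gap, and the uniformity of the constants $C_1$ and $A$ over the domain does not close it. Uniform pointwise moment bounds never control an expected supremum by themselves: $n$ i.i.d.\ standard Gaussians have second moment exactly $1$ at every index, yet the expected maximum grows like $\sqrt{2\log n}$. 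The error field $(x,y)\mapsto \tfrac{\mathrm{d}}{\mathrm{d}x}\hat f_{Y|X}(y,x)-\tfrac{\mathrm{d}}{\mathrm{d}x}f_{Y|X}(y,x)$ is a dependent random process over a continuum, and bounding its expected supremum requires an argument about its joint fluctuations (discretization plus a maximal inequality, or chaining), not just its pointwise MSE. Asserting that ``within the asymptotic convention the uniform pointwise bound suffices'' is precisely the claim that needs proof.

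The paper's proof is devoted almost entirely to this point. It introduces a grid of $M=n$ points $(x_j,y_j)$ with mesh $D_B/M$ and splits the bias into (i) a grid term $\E\bigl[\max_{1\le j\le M}|\tfrac{\mathrm{d}}{\mathrm{d}x}\hat f_{Y|X}(y_j,x_j)|\bigr]-\max_{j}|\tfrac{\mathrm{d}}{\mathrm{d}x}f_{Y|X}|$ and (ii) a modulus-of-continuity term $\E\bigl[\max_{\|(x,y)-(x',y')\|\le D_B/M}|\tfrac{\mathrm{d}}{\mathrm{d}x}\hat f_{Y|X}(y,x)-\tfrac{\mathrm{d}}{\mathrm{d}x}\hat f_{Y|X}(y',x')|\bigr]$. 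Term (i) is handled by an exponential-moment (soft-max) argument with a free parameter $t>0$: bounding $\exp\bigl(t\,\E[\max_j|\cdot|]\bigr)$ by $M\max_j$ of conditional exponential moments, expanding those by a second-order Taylor approximation (producing the factors $1+\tfrac12 t^2 v_{lc1}$, $1+\tfrac12 t^2 v_{lc2}$), and then taking logarithms and dividing by $t$, so that the penalty $\tfrac{\log M}{t}+\tfrac{\log(1+C_4 t^2)}{t}$ vanishes for large $t$; what survives is $\max_j\bigl(\E[(\cdot)^2]\bigr)^{1/2}\lesssim\epsilon_3^{1/2}$ via Theorem~\ref{le_mse}. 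Term (ii) is bounded by $\hat L_{df}\,D_B/M$, which is negligible because $M=n$. To repair your proposal you would need to supply some version of this machinery (your closing sentence gestures at it but defers it); as written, the conclusion does not follow from Theorem~\ref{le_mse} alone.
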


\begin{remark}
The bandwidths should be selected appropriately such that $\epsilon_{3}\to 0$ for large $n$. The best convergence rate for $\epsilon_{3}$ is $O(n^{-1/2})$, which is obtained by setting $h_{\mathsf x} = h_{\mathsf y}$ in the order of $n^{-\frac{1}{8}}$. Hence, $\epsilon_{3}^{\frac{1}{2}}$ has the best convergence rate of $O(n^{-\frac{1}{4}})$.
\end{remark}

\begin{remark}
For a given precision $\epsilon^*$, we can use Theorem~\ref{LC_main} to select a large enough data scale $n$ and bandwidths $h_{\mathsf x} = h_{\mathsf y} = n^{-\frac{1}{8}}$ such that the bias in the output of the estimation algorithm is at most $\epsilon^*$.
\end{remark}

\subsection{Multi-variate Systems}\label{theo_twodimen}
The results obtained in Section~\ref{theo_onedimen} can be extended to high-dimensional cases $d\ge2$.
Consider the original and estimated LC across each dimension defined as
\begin{align*}
L_{i}& =\max_{(\bs{x},\bs{y})\in D_X\times D_Y}|\frac{\partial}{\partial x_{i}}f_{Y|X}(\bs{y},\bs{x})|,\\
\hat{L}_{i}& =\max_{(\bs{x},\bs{y})\in D_X\times D_Y}|\frac{\partial}{\partial x_{i}}\hat{f}_{Y|X}(\bs{y},\bs{x})|,
\end{align*}
for $i\in \{1,\ldots,d\}$ with $D_X$, $D_Y\subset \mathbb R^{d}$.
Thus, the original and estimated LC are $L:=\max_{i=1,\ldots,d}\{L_i\}$ and $\hat{L}:=\max_{i=1,\ldots,d}\{\hat L_i\}$.
\begin{assumption}\label{asstwod_1}
\textbf{{\normalfont(a)}}
There exists a constant $C_{f}\!>\!0$ such that $| f_{Y|X}(\bs{y},\bs{x})|\!\leq\! C_{f}$, for all $(\bs{x},\bs{y})\in D_X\!\times\! D_Y$.
\textbf{{\normalfont(b)}}
There exist constants $C_{ij}\!>\!0$, $i,j\in \{1,\ldots,d\}$, such that $|\frac{\partial^{3}}{\partial x_{i}\partial y_{j}^{2}}f_{Y|X}(\bs{y},\bs{x})|\!\leq\! C_{ij}$, for all $(\bs{x},\bs{y})\in D_X\!\times\! D_Y$.
\textbf{{\normalfont(c)}}
There exist constants $C_{xi}\!>\!0$, $i\in \{1,\ldots,d\}$, such that $| \frac{\partial^{3}}{\partial x_{i}\partial x^{2}_{\varsigma}}f_{Y|X}(\bs{y},\bs{x})|\!\leq\! C_{xi}$, for all $(\bs{x},\bs{y})\in D_X\!\times\! D_Y$ and $i\ne \varsigma$, $\varsigma\in\{1,\ldots,d \}$.
\end{assumption}

Following a similar method as before, we have
\begin{align*}
    \Var[& \frac{\partial}{\partial x_{i}} \hat{f}_{ Y|X }(\bs{y},\bs{x}) ]\\
    \approx&\! \frac{1}{nh_{\mathsf x i}^{2} \prod_{j=1}^d\! h_{\mathsf x j}h_{\mathsf y j}}G^{2d-1\!}_{2,0}(K)\Vol(D_{X})f_{Y|X}(\bs{y},\bs{x}),
    \end{align*}
\begin{align*}
    &\left|\E[\frac{\partial}{\partial x_{i}} \hat{f}_{ Y|X }(\bs{y},\bs{x})]-\frac{\partial}{\partial x_{i}} f_{ Y|X }(\bs{y},\bs{x})\right|\\
    &\quad\approx \frac{1}{2}h_{\mathsf x i}^{2} \bigg| \sum^{d}_{j=1}\frac{h^2_{\mathsf y j}}{h^2_{\mathsf x i}}\frac{\partial ^{3}}{\partial x_{i}\partial y^{2}_{j}} f_{Y|X}(\bs{y},\bs{x})\\
    &\quad\quad+\sum^{d}_{\varsigma\ne i}\frac{h_{\mathsf x \varsigma}^{2}}{h_{\mathsf x i}^{2}}\frac{\partial^{3}}{\partial x_{i}\partial x^{2}_{\varsigma}}f_{Y|X}(\bs{y},\bs{x})\bigg|,~i\in\{1,\ldots,d\},
\end{align*}
if $nh_{\mathsf x i}^{2} \prod_{j=1}^d\! h_{\mathsf x j}h_{\mathsf y j}\to +\infty$ and $h_{\mathsf x j},h_{\mathsf y j}\to 0$, $j\in\{1,\ldots,d\}$, as $n\to +\infty$,
where $\bs{x}:=(x_1,\ldots,x_{d})^{T}\in D_X$, $\bs{y}:=(y_1,\ldots,y_2)^{T}\in D_Y$, $h_{\mathsf x j}$ and $h_{\mathsf y j}$, $j\in\{1,\ldots,d\}$ are the bandwidths of d-dimensional random vector $\hat{X}$ and $\hat{Y}$.
Assuming that 
$
    A_{i}\!:=\!\sum^{d}_{j=1}\frac{h^2_{\mathsf y j}}{h^2_{\mathsf x i}}\frac{\partial ^{3}}{\partial x_{i}\partial y^{2}_{j}} f_{Y|X}(\bs{y},\bs{x})
    +\sum^{d}_{\varsigma\ne i}\frac{h_{\mathsf x \varsigma}^{2}}{h_{\mathsf x i}^{2}}\frac{\partial^{3}}{\partial x_{i}\partial x^{2}_{\varsigma}}f_{Y|X}(\bs{y},\bs{x})
$
and 
$
    \hat{C}:=\Vol(D_X)G_{20}^{2d-1}(K)C_{f}.
$
Then, we can guarantee that 
\begin{equation*}
    \label{main_twodimen}
     \left|\E[\hat{L}_{i}] - L_{i}\right|\lesssim \epsilon_{3i}^{\frac{1}{2}}, 
\end{equation*}
where $\epsilon_{3i}\!:=\!\frac{\hat{C}}{nh_{\mathsf x i}^{2} \prod_{j=1}^d\! h_{\mathsf x j}h_{\mathsf y j}}+\frac{h_{\mathsf x i}^{4}}{4}A^{2}_{i}, ~i\in \{1,\ldots,d\}.$

The bandwidths should be selected appropriately such that $\epsilon_{3i}\to 0$ for large $n$. The best convergence rate for $\epsilon_{3i}$ is $O(n^{-\frac{2}{3+d}})$, which is obtained by setting $h_{\mathsf x i} = h_{\mathsf x j}=h_{\mathsf y i} = h_{\mathsf y j}$, $i\ne j$, $i,j\in \{1,\ldots,d\}$, in the order of $n^{-\frac{1}{6+2d}}$.
Hence, $\epsilon^{\frac{1}{2}}_{3,i}$ has the best convergence rate of $O(n^{-\frac{1}{3+d}})$.


\subsection{Compositional Estimation for Structured Systems}\label{structured_system}
In this subsection, we discuss how the computation of the LC can be adapted to any structure of the system. Consider the dynamical system in the form of $\bs{x}(k+1) = g(\bs{x}(k),\bs{w}(k))$, $k\in\{0,1,2,\ldots\}$, written explicitly with its states $\bs{x}= [x_{1},\ldots,x_d]^T$, the vector field $g=[g_{1},\ldots,g_{d}]^T$, and stochastic disturbances $\bs{w}=[w_{1},\ldots,w_d]^T$, as follows: 
\begin{align*}
    &x_{1}(k+1)=g_{1}(x_{1}(k),x_{2}(k),\ldots,x_{d}(k),w_1(k)),\nonumber\\
    &x_{2}(k+1)=g_{2}(x_{1}(k),x_{2}(k),\ldots,x_{d}(k),w_2(k)),\nonumber\\
    &\vdots\\
    &x_{d}(k+1)=g_{d}(x_{1}(k),x_{2}(k),\ldots,x_{d}(k),w_d(k)).\nonumber
\end{align*}
If $w_{1},\ldots,w_{d}$ are independent, the CoDF of the system will take the following product form
\begin{equation}\label{pdf_structsyst}
    T(\bar{\bs{x}}|\bs{x})=T_{1}(\bar x_{1}|\bs{x})\times T_{2}(\bar x_{2}|\bs{x})\times \ldots \times T_{d}(\bar x_{d}|\bs{x}),
\end{equation}
where $\bar{\bs{x}} = [\bar x_{1},\ldots,\bar x_d]^T$,  the function $T_{i}:\mathbb R\times \mathbb R^{d}\to \mathbb R_{\ge 0}$ depends on $g_{i}$ and the distribution of $w_{i}$.
The estimation of the LC of $T(\bar{\bs{x}}|\bs{x})$ in equation \eqref{pdf_structsyst} can be reduced to the estimation of the LC of each $T_{i}$.
Moreover, any additional information on the dependency of the functions $g_i$ to $x_j$ can be reflected into the structure of $T_i$. This will result in estimating the LC of conditional densities that have smaller number of variables, thus improving the computational efficiency of the estimation. 
The asymptotic upper bound of each LC of $T_{i}$ can be derived utilising the theories presented in the section.
We demonstrate in Appendix~\ref{supp_validation} the estimation of the LC for a 7-dimensional autonomous vehicle \citep{althoff2019commonroad} using its structure. The model is in Appendix~\ref{auto_car} for reference.

\section{DATA-DRIVEN CONSTRUCTION OF IMDP}\label{datadriv_imdp}


\subsection{IMDP Based on an Empirical Approach}
\label{empirical_method}
Here, we use an empirical approach and Chebyshev's inequality to construct an IMDP as a finite abstraction of the system $\Sigma_{ss}=(\mathcal S, U,w,f)$ with formal closeness guarantees.
First consider an MDP $\hat \Sigma_{ss}=(Q, S_{\mathfrak a}, P, AP, L )$ with $Q$ representing a partition of the state space $\mathcal S$ with partition sets denoted by $q\in Q$. The input space $S_{\mathfrak a} = U$. For the transition probabilities $P^{a}_{ij}$, select representative points $\bar q\in q$ for each partition set. Define $P^{a}_{ij} :=\textsf{Prob}_w(f(\bar q_i,a,w)\in q_j)$. Next we build an IMDP $\bar \Sigma_{ss}=(Q, S_{\mathfrak a}, P_{lo},P_{up}, AP, L )$ from the empirical estimation of $P^{a}_{ij}$. 
By gathering $N_{ij}$ pair of sampled trajectories $(x(k) = \bar q_i, x(k+1) = q_j^m)$, $m\in\{1,2,\ldots,N_{ij}\}$ we can define the empirical value for $P^{a}_{ij}$ as $\bar{P}^{a}_{ij}=\frac{1}{N_{ij}}\sum^{N_{ij}}_{m=1}\boldsymbol{1}(q^m_j\in q_j|x(k) = \bar{q}_i)$, where $\boldsymbol{1}(\cdot)$ is an indicator function which is one if $q^m_j\in q_j$ and zero otherwise. 
By a-priori fixing a threshold $\bar{\epsilon}\in (0,1]$ and a confidence $\Bar{\beta}\in(0,1)$, according to Chebyshev's inequality \citep{saw1984chebyshev}, we have 
\begin{equation}
    \label{ineq_empircalappro}
    \mathbb{P}\{\Bar{P}^{a}_{ij}-\Bar{\epsilon}\leq P^{a}_{ij}\leq \Bar{P}^{a}_{ij}+\Bar{\epsilon} \}\ge 1-\Bar{\beta},
\end{equation}
where $N_{ij}\ge \frac{1}{4\Bar{\beta}\Bar{\epsilon}^2}.$ 
The inequality~\eqref{ineq_empircalappro} gives the probability interval $[\Bar{P}^{a}_{ij}-\Bar{\epsilon},\Bar{P}^{a}_{ij}+\Bar{\epsilon}]$ for $P^{a}_{ij}$, which holds with confidence at least $1-\Bar{\beta}$. 


The probability of verification against PCTL path formula using the empirical approach can be bounded to a certain range around its true probability by appropriately choosing $\Bar{\epsilon}$ satisfying the following lemma. The proof of the following lemma is shown in Appendix~\ref{proof_sec41}.

\begin{lemma}\label{truevalue_distan}
     Given $\epsilon_g\in (0,1)$, if $\Bar{\epsilon}=\frac{\epsilon_g}{2kn_Q}$, where $n_Q$ is the number of elements of $Q$ and $k$ is the time step, then $$|P^{k}_{up}(q)-\hat{P}^{k}_{up}(q)|\leq \epsilon_g,$$ where $P^{k}_{up}(q)$ is the solution of \eqref{upprob_path} obtained for the IMDP $\bar \Sigma_{ss}$
     and $\hat{P}^{k}_{up}(q)$ is the solutions of \eqref{upprob_path} computed for the MDP $\hat \Sigma_{ss}$.
\end{lemma}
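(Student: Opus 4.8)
The plan is to prove the bound by induction on the time horizon $k$, propagating a uniform (sup-norm over states) error through the value-iteration recursion \eqref{upprob_path}. I would define $E_k := \max_{q\in Q}|P^{k}_{up}(q) - \hat{P}^{k}_{up}(q)|$ and establish $E_k \le 2kn_Q\bar{\epsilon}$. The single quantitative input is the interval width: on the event underlying \eqref{ineq_empircalappro} (valid with confidence at least $1-\bar{\beta}$), the true probability $P^a_{q,q^*}$ lies in $[\bar{P}^a_{q,q^*}-\bar{\epsilon},\bar{P}^a_{q,q^*}+\bar{\epsilon}]$, so any feasible IMDP distribution $\theta^a_q\in\Theta^a_q$ satisfies $|\theta^a_q(q^*)-P^a_{q,q^*}|\le 2\bar{\epsilon}$ for every successor $q^*$, since both $\theta^a_q(q^*)$ and $P^a_{q,q^*}$ lie within $\bar{\epsilon}$ of the empirical center $\bar{P}^a_{q,q^*}$. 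This is the only place the interval half-width enters the estimate, and the same event makes the true kernel an admissible adversary, which I exploit below.

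For the base case, all three non-recursive branches of \eqref{upprob_path} agree for the IMDP and the MDP: states in $Q^1$ give value $1$, states in $Q^0$ give value $0$, and every remaining state gives value $0$ at $k=0$; hence $E_0=0$ and $|P^{k}_{up}(q)-\hat{P}^{k}_{up}(q)|=0$ for all $q\in Q^0\cup Q^1$ at every $k$. For the inductive step I would take $q\notin Q^0\cup Q^1$, $k\ge 1$, and bound the two directions separately. In the first direction, let $a^\star,\theta^\star$ attain the IMDP maximum; comparing against the MDP evaluated at the same action and inserting $\pm\sum_{q^*}P^{a^\star}_{q,q^*}P^{k-1}_{up}(q^*)$ splits the gap into a transition term $\sum_{q^*}[\theta^\star(q^*)-P^{a^\star}_{q,q^*}]P^{k-1}_{up}(q^*)$, bounded by $2n_Q\bar{\epsilon}$ using $0\le P^{k-1}_{up}\le 1$, the $2\bar{\epsilon}$ estimate, and the $n_Q$ successors, plus a value term $\sum_{q^*}P^{a^\star}_{q,q^*}[P^{k-1}_{up}(q^*)-\hat{P}^{k-1}_{up}(q^*)]$, bounded by $E_{k-1}$ because the $P^{a^\star}_{q,\cdot}$ sum to one. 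In the opposite direction, let $\hat a$ attain the MDP maximum; since the true distribution $P^{\hat a}_{q,\cdot}$ is itself feasible in the IMDP, it is an admissible choice of $\theta^{\hat a}_q$, so the IMDP maximum dominates the MDP value up to only the value-function error, giving $\hat{P}^{k}_{up}(q)-P^{k}_{up}(q)\le E_{k-1}$ with no transition term. Combining both directions yields $|P^{k}_{up}(q)-\hat{P}^{k}_{up}(q)|\le 2n_Q\bar{\epsilon}+E_{k-1}$, and maximising over $q$ (boundary states contribute $0$) gives $E_k\le 2n_Q\bar{\epsilon}+E_{k-1}$.

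Unrolling from $E_0=0$ gives $E_k\le 2kn_Q\bar{\epsilon}$, and substituting $\bar{\epsilon}=\frac{\epsilon_g}{2kn_Q}$ yields $E_k\le\epsilon_g$, which is the claim. The main obstacle I anticipate is the bookkeeping around the two nested maximisations: one must pick the optimising action and distribution on the correct side of each inequality and, crucially, recognise in the second direction that the true kernel is a feasible adversary, so that no transition error is incurred there. A naive symmetric argument would instead charge $2\bar{\epsilon}$ on both sides and double the accumulated error, breaking the tight match with the prescribed $\bar{\epsilon}$. A secondary point worth stating explicitly is that the whole estimate is conditional on the Chebyshev event of \eqref{ineq_empircalappro}, which simultaneously justifies the $2\bar{\epsilon}$ deviation bound and the feasibility of the true distribution.
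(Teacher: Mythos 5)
Your proof is correct and takes exactly the approach the paper indicates: induction on $k$ through the value-iteration recursion \eqref{upprob_path}. In fact, the paper's own proof is only a one-sentence sketch ("induction on $k$ utilising the recursive relation"), and your argument supplies all the details that sketch omits — the $2\bar{\epsilon}$ per-transition bound obtained by passing through the empirical center, the asymmetric two-directional comparison in which the true kernel serves as a feasible adversary, and the unrolling of $E_k\leq 2n_Q\bar{\epsilon}+E_{k-1}$ to $2kn_Q\bar{\epsilon}=\epsilon_g$ — as well as the observation (implicit in the paper) that the bound is conditional on the Chebyshev event of \eqref{ineq_empircalappro}.
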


Using this lemma, we establish the closeness between DTSCS $\Sigma_{ss}$ and its finite abstraction based on empirical data as follows.

\begin{theorem}\label{empirical_guarantee}
    Let $\Sigma_{ss}$ be the DTSCS and $\Bar{\Sigma}_{ss}$ be its finite abstraction based on empirical data with its transition probabilities are obtained from equation \eqref{ineq_empircalappro}. For any given PCTL specification $\psi$ through a certain strategy $\varpi\in \boldsymbol{\Pi}$ satisfying procedure \eqref{upprob_path}, if $\Bar{\epsilon}=\frac{\epsilon_g}{2kn_Q},~\epsilon_g\in (0,1)$, then we can have
    \begin{align*}
        |P(\Sigma_{ss}\vDash \psi)-P(\Bar{\Sigma}_{ss}\vDash \psi)|\leq \epsilon+\epsilon_g, \text{with $\epsilon=k\delta B_L \mathfrak{L}$,}
    \end{align*}
     where $P(\Sigma_{ss}\vDash \psi)$ is the probability that $\Sigma_{ss}$ satisfies the specification $\psi$ under the strategy $\varpi$, 
     $k$ is the number of steps, $n_Q$ is the number of elements of $Q$, $\delta$ is the state discretisation parameter, $B_L$ is the asymptotic upper bound of LC, and $\mathfrak{L}$ is the Lebesgue measure of the specification set.
\end{theorem}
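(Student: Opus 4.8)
The plan is to route the comparison through an intermediate, idealised finite abstraction and split the total error into a \emph{discretisation} part and an \emph{estimation} part via the triangle inequality. Concretely, I would introduce the MDP $\hat\Sigma_{ss}=(Q,S_{\mathfrak a},P,AP,L)$ whose transition probabilities are the \emph{exact} values $P^{a}_{ij}=\textsf{Prob}_w(f(\bar q_i,a,w)\in q_j)$ rather than their empirical estimates, so that $P(\hat\Sigma_{ss}\vDash\psi)$ coincides with the solution $\hat P^{k}_{up}(q)$ of \eqref{upprob_path} and $P(\bar\Sigma_{ss}\vDash\psi)$ coincides with the corresponding IMDP solution $P^{k}_{up}(q)$. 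Then
\begin{align*}
|P(\Sigma_{ss}\vDash\psi)-P(\bar\Sigma_{ss}\vDash\psi)|
\le{}& |P(\Sigma_{ss}\vDash\psi)-P(\hat\Sigma_{ss}\vDash\psi)|\\
&+|P(\hat\Sigma_{ss}\vDash\psi)-P(\bar\Sigma_{ss}\vDash\psi)|,
\end{align*}
and it suffices to bound the first summand by $\epsilon=k\delta B_L\mathfrak L$ and the second by $\epsilon_g$.

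The second summand is immediate from Lemma~\ref{truevalue_distan}: with the choice $\bar\epsilon=\frac{\epsilon_g}{2kn_Q}$ it yields $|\hat P^{k}_{up}(q)-P^{k}_{up}(q)|\le\epsilon_g$ exactly, capturing the error incurred by replacing each $P^{a}_{ij}$ with its Chebyshev interval $[\bar P^{a}_{ij}-\bar\epsilon,\bar P^{a}_{ij}+\bar\epsilon]$. The first summand is the classical continuous-to-finite abstraction error, which I would control through the Lipschitz continuity of the conditional kernel as in \cite{SA13}. The crucial single-step estimate is that, within a partition cell of diameter at most $\delta$ around its representative point $\bar q_i$, the conditional density obeys $|f_{Y|X}(\bs y,\bs x)-f_{Y|X}(\bs y,\bar q_i)|\le B_L\delta$, since $B_L$ is an asymptotic upper bound on the LC of $f_{Y|X}$ delivered by Section~\ref{main_sec}. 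Integrating this pointwise density gap over the specification set, whose Lebesgue measure is $\mathfrak L$, bounds the one-step discrepancy between the true transition measure and the piecewise-constant abstracted measure by $\delta B_L\mathfrak L$.

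The main obstacle is showing that these per-step kernel discrepancies accumulate only \emph{linearly} in the horizon $k$ instead of being amplified through the recursion \eqref{upprob_path}. Here I would argue by induction on the number of steps: the satisfaction probability is a value function taking values in $[0,1]$, and each step applies a Bellman-type operator that integrates the current value against a transition probability measure. Because transition kernels are probability measures and the value function is bounded by $1$, this operator is non-expansive with respect to the per-step kernel error, so the total-variation-type gap $\delta B_L\mathfrak L$ introduced at each of the $k$ steps sums to at most $k\delta B_L\mathfrak L=\epsilon$. Combining this with the estimation bound $\epsilon_g$ through the triangle inequality yields the claimed bound $\epsilon+\epsilon_g$. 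The delicate points are making the diameter/Lipschitz argument uniform across all cells and all actions, and ensuring the bounded-until recursion of \eqref{upprob_path} terminates so that ``step $k$'' is well defined; both follow from the finiteness of $Q$ and the uniform LC bound $B_L$.
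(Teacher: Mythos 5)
Your proposal is correct and follows essentially the same route as the paper's proof: you introduce the same intermediate exact-probability MDP $\hat\Sigma_{ss}$, apply the same triangle-inequality decomposition, bound the estimation term by $\epsilon_g$ via Lemma~\ref{truevalue_distan}, and bound the discretisation term by $k\delta B_L\mathfrak{L}$. The only difference is that the paper invokes Theorem~\ref{SA13_bound} (the result of \cite{SA13}) as a black box for the discretisation part, whereas you sketch its internal Lipschitz-plus-nonexpansive-Bellman argument; that sketch is consistent with the cited result.
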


\subsection{IMDP Based on Non-Parametric Estimation}

In this section, NPE is used to construct the IMDP from data. The upper and lower bounds of the transition probability from $q_{i}\in Q$ to $q_j\in Q$, $i,j\in\{1,\ldots,n_Q\}$, can be represented as  
\begin{align*}
    &P_{lo}(q_{i},q_{j})=\min_{x\in q_{i}} \int_{q_{j}} \hat{f}_{Y|X}(y,x) dy, \\
    &P_{up}(q_{i},q_{j})=\max_{x\in q_{i}} \int_{q_{j}} \hat{f}_{Y|X}(y,x) dy,
\end{align*}
where $\hat{f}_{Y|X}$ is the estimator of CoDF, shown in equation \eqref{condi_densityestima}.
Then, we can conduct formal verification against a PCTL path formula using NPE.
In case the system has control input $a$, $\hat{f}_{Y|X}$ must be computed for each value of the input, thus $P_{lo}$ and $P_{up}$ will also depend on $a$.
It should be pointed out that the reliability of this method can be assessed based on its statistical properties, such as variance, bias, and AMISE \citep{hardle2004nonparametric,scott2015multivariate}.

\section{CASE STUDIES}\label{case_sec}
\begin{exmp}[Verification]\label{verif_case1}
Consider an unknown linear stochastic system
$
X(k+1)=AX(k)+W(k)
$
with noise $W(\cdot)\sim \mathcal{N}(\mu,\Sigma)$. 
Its CoDF and relevant parameters are shown in Appendix~\ref{supp_verif_case1}. Labels $D$ and $O$ are the destination and avoiding regions, respectively.
The PCTL formula $\psi=\neg O \mathcal{U}^{\leq 3} D$ requires that the system does not visit $O$ until visiting D in $3$ steps. 
We assume that the upper bounds of the third derivatives of the system's CoDF is $0.2$. We select $\epsilon_g=0.2$ for the empirical approach and data scale $n=2000$ for the NPE. Based on Algorithm~\ref{algo_enviro} and Theorem~\ref{LC_main}, the asymptotic upper bound of the LC is $0.0722$. 
Thus, we can determine the state discretisation parameter $\delta = 0.1$ that ensures the distance between the satisfaction probabilities of the specification on the system and its finite abstraction is less than $0.1$.

\begin{figure}[hbt!]
\vspace{-1.3cm}
\centerline{\includegraphics[width=0.5\textwidth]{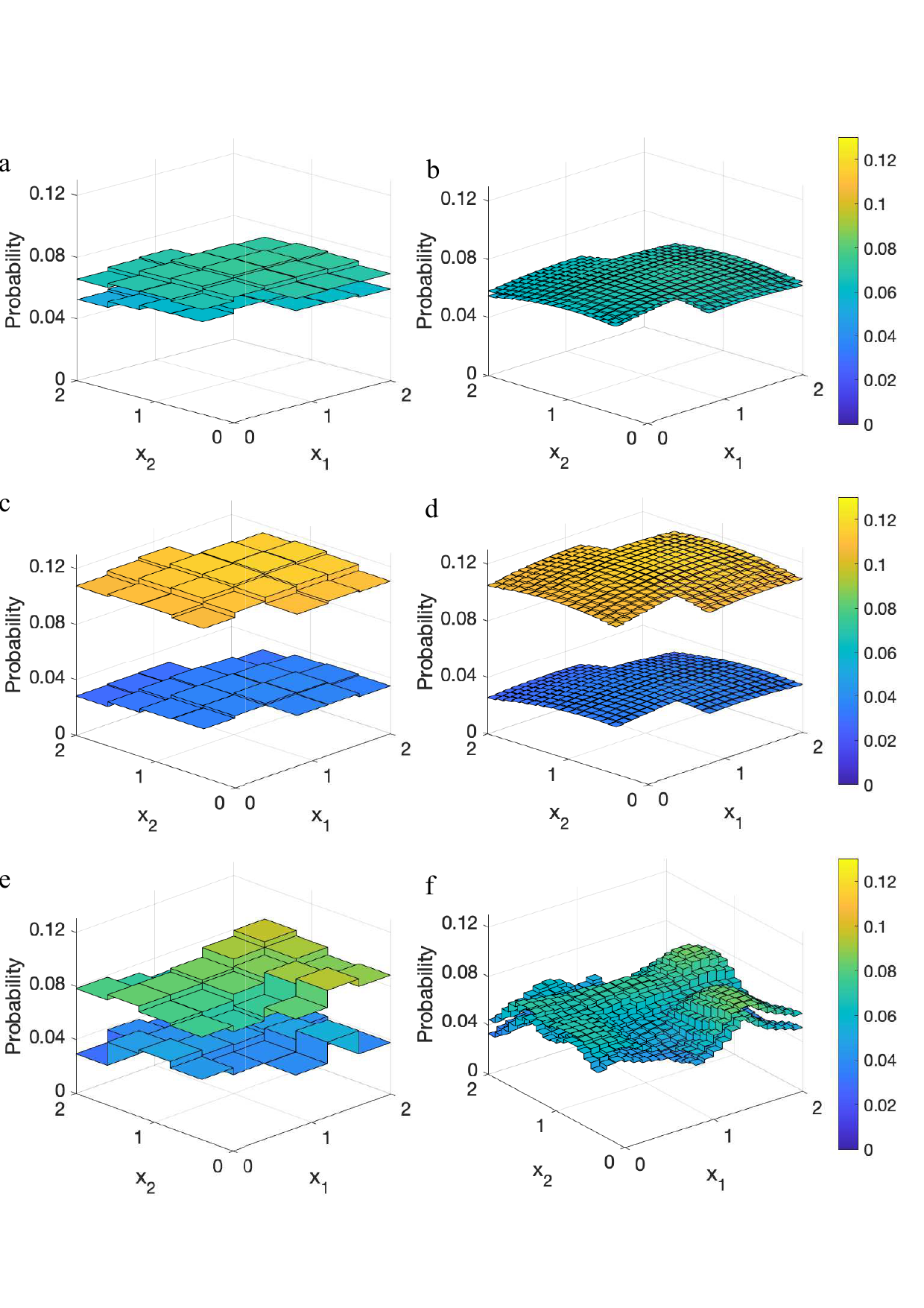}}
\vspace{-.7cm}
\caption{The upper and lower bounds on the probability of satisfying the specification by the linear system with $\delta=0.4$ in the left and $\delta=0.1$ in the right, and $O=[1.2,2]\times [1.6,2]$ and $D=[0,0.8]\times [0,0.4]$. The panels (a) and (b) show the results from a model-based approach. The panels (c) and (d) show the results of the data-driven approximation using the empirical approach. The panels (e) and (f) are for NPE.}
\label{no_strategy}
\vspace{-0.3cm}
\end{figure}

We apply the results under two different state discretisation parameters as shown in Fig.~\ref{no_strategy} with $\delta=0.4$ in the left and $\delta=0.1$ in the right. The top panels are the results of the model-based approach, the middle panels show the results of the empirical approach, and the bottom panels are for the NPE approach.
The distance between the upper and lower bounds of the satisfaction probabilities decreases with smaller $\delta$ for the model-based and the NPE approaches, but does not change significantly for the empirical approach.
The main reason is that we use the approach in Lemma~\ref{truevalue_distan} to allocate a value to $\bar{\epsilon}$. This phenomenon can be overcome using the approach in Lemma~\ref{truevalue_distan}, while relying on more data for obtaining $\Bar{P}_{ij}$ in equation~\eqref{ineq_empircalappro}.
%
Meanwhile, for NPE, the probabilities in Fig.~\ref{no_strategy}f are close to the results of the model-based approach in  Fig.~\ref{no_strategy}b and exhibit greater variation as a function of state than the results in Fig.~\ref{no_strategy}e.
\end{exmp}

\begin{exmp}[Synthesis]\label{verif_case2}
Consider an unknown switched system with two actions $S_{\mathfrak a}=\{a_1,a_2\}$, dynamics $X(k+1)=A_{i}X(k)+W(k)$ for each action $a_{i}$, $i=1,2$,
and noise $W\sim \mathcal{N}(\mu,\Sigma)$. Its CoDF and relevant parameters are shown in Appendix~\ref{supp_verif_case2}. 
In this example, we also consider the same specification $\psi$ as in the previous example. We assume that the upper bounds of the third derivatives of its CoDF are $0.2$. Select $\epsilon_g=0.2$ for the empirical approach and data scale $n=2000$ for the NPE. The asymptotic upper bound of the LC is $0.0918$. Thus, we can determine the state discretisation parameter $\delta = 0.1$ that ensures the distance between the satisfaction probabilities of the specification on the system and its finite abstraction is less than $0.1$.

\begin{figure}[hbt!]
\vspace{-1.3cm}
\centerline{\includegraphics[width=0.5\textwidth]{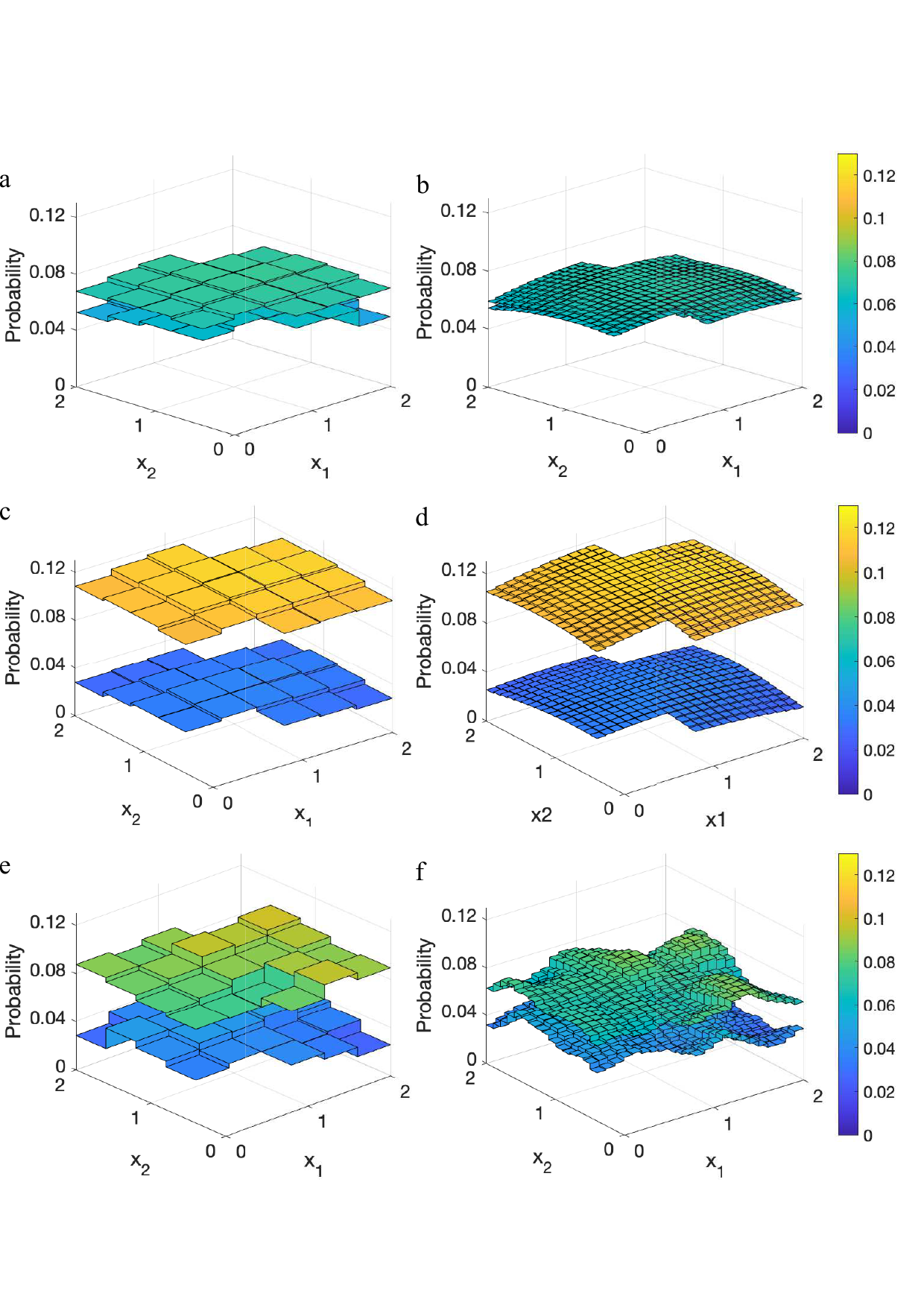}}
\vspace{-.7cm}
\caption{
The upper and lower bounds on the probability of satisfying the specification by the switched system with $\delta=0.4$ in the left and $\delta=0.1$ in the right, and $O=[1.2,2]\times [1.6,2]$ and $D=[0,0.8]\times [0,0.4]$. The panels (a) and (b) show the results from a model-based approach. The panels (c) and (d) show the results of the data-driven approximation using the empirical approach. The panels (e) and (f) are for NPE.
}
\label{strategy_probability}
\end{figure}

We apply the results under two different state discretisation parameters as shown in Fig.~\ref{strategy_probability} with $\delta=0.4$ in the left and $\delta=0.1$ in the right. The top panels are the results of the model-based approach, the middle panels show the results of the empirical approach, and the bottom panels are for the NPE approach.
The distance between the upper and lower bounds of the satisfaction probabilities decreases with smaller $\delta$ for the model-based and the NPE approaches, but does not change significantly for the empirical approach.
This can be overcome by relying on more data for obtaining $\Bar{P}_{ij}$ in equation~\eqref{ineq_empircalappro}.
In addition, the details of how to select an action at each state are presented in Figs.~\ref{allaction04}--\ref{allaction01}, as shown in Appendix~\ref{supp_verif_case2}.

\end{exmp}

\section{CONCLUSION}\label{concl_sec}
In this paper, we proposed a data-driven approach to perform verification of unknown stochastic systems with a guaranteed closeness relying on the asymptotic upper bound of the Lipschitz constant (LC) estimation. We provided theoretical results on quantifying the asymptotic bias of the LC estimation, and showed that the provided upper bound converges to the actual LC. This bound determines the partitioning size of the continuous space for building a finite abstraction and gives the distance between satisfaction probabilities of the specification on the original system and its abstraction.
Then, two data-driven methods were used to construct interval Markov decision processes containing the unknown finite abstraction. 
The effectiveness of the proposed data-driven framework was validated through formal verification and synthesis against PCTL specifications.
In the future, we plan to integrate our estimation method with discretisation-free approaches for  control synthesis to design control policies for complex unknown systems and satisfy high-level temporal requirements.

\subsubsection*{Acknowledgements}
The research was supported by the following grants: EPSRC EP/V043676/1, EIC 101070802, and ERC 101089047.

\bibliography{aistats2024}
\bibliographystyle{apalike}

\newpage



\clearpage
\appendix

\thispagestyle{empty}
\onecolumn
\aistatstitle{Formal Verification of Unknown Stochastic Systems via Non-parametric Estimation: Supplementary Material}



\section{SUPPLEMENTARY OF SECTION~\ref{sec2}}\label{prel_appen}
\subsection{Supplementary of Section~\ref{sec:prelem}}\label{kernel_bandwidth}
\paragraph{Choice of the Kernel Function.}
For two different kernel functions, in practice it is possible to get approximately the same degree of smoothness by multiplying one of the bandwidths with an adjustment factor \citep{hardle2004nonparametric}. This adjustment factor for two kernels $A$ and $B$ can be computed from $h_B=h_A\delta_0^B/\delta_0^A$ where $\delta_0$ is the canonical bandwidth. Table~\ref{tab:kernel} gives commonly used kernels and their canonical bandwidths. For the selection of the kernel, the conservative recommendation is the kernel which is smooth, clearly unimodal, and symmetric 
around the origin. Also, some factors (e.g., ease of computation and differentiability) should be considered, rather than concerning the loss of efficiency.
\begin{table*}[hbt!]
\centering
\caption{Commonly used kernels and their canonical bandwidths}
\begin{tabular}[t]{lcc}
\hline Kernel & Equation & Canonical Bandwidths \\
\hline
Uniform & $\frac{1}{2}I(|u|\leq 1)$ & 1.3510 \\
Triangle & $(1-|u|)I(|u|\leq 1)$ & 1.8890\\
Epanechnikov & $\frac{3}{4}(1-u^2)I(|u|\leq 1)$ & 1.7188 \\
Quartic (Biweight) & $\frac{15}{16}(1-u^2)^2I(|u|\leq 1)$ & 2.0362 \\
Triweight & $\frac{35}{32}(1-u^2)^3I(|u|\leq 1)$ & 2.3122 \\ 

Gaussian & $k(u)=\frac{1}{\sqrt{2\pi}}exp(-u^2/2)$& 0.7764 \\
\hline
\end{tabular}
\label{tab:kernel}
\end{table*}
\vspace{-0.5cm}

\paragraph{Related Works on the Selection of Bandwidth.}
There are several data-driven methods for choosing the optimal bandwidth. 
For example,
\cite{hardle2004nonparametric} proposed using Silverman’s rule-of-thumb bandwidth for unimodal distributions that are fairly symmetric and are not heavy-tailed, and using the cross-validation method which is fairly independent of the special structure of the parameter or function estimate.
\cite{tsybakov2009introduction} have employed the cross-validation method to choose the ideal value of the bandwidth and then constructed unbiased risk estimators using the Fourier analysis of density estimators.
\cite{sheather2004density} have provided a practical description of kernel density estimation methods and compared the performance of three methods for selecting the value of the bandwidth, including Rules of Thumb, Cross-Validation, and Plug-in Methods. 
\cite{wand1992error} have demonstrated numerical minimisation of the AMISE for general $H$, which can be used as a data-driven method for choosing the optimal bandwidth using a plug-in approach.
Several papers have recommended to construct a family of density estimates based on a number of values of the bandwidth \citep{marron2001presentation,scott2015multivariate}.
In general, the above methods for selection of the bandwidth is with respect to keeping the balance between the bias and the variance to avoid under-smoothing and over-smoothing scenarios.

\vspace{-0.1cm}
\paragraph{Scott's Formula for Selecting the Bandwidth.}
Scott's formula provides a method for selecting the bandwidth of the estimator for a normal distribution with covariance matrix $\Sigma\!=\!diag(\sigma_1^2,\ldots,\sigma_d^2)$, and ensures the optimal convergence rate $O(n^{-\frac{4}{4+d}})$ for the AMISE \citep{hardle2004nonparametric}.
The optimal bandwidth is 
$
H=n^{-1/(d+4)}\hat{\Sigma}^{1/2},
$
with
$\hat\Sigma    :=\frac{1}{n}\sum^{n}_{i=1}(\hat{X}_{i}-\bar{X})^{2}$ and $\bar{X}:=\frac{1}{n}\sum^{n}_{i=1}\hat{X}_{i}$.

\vspace{-0.1cm}

\paragraph{Cross-Validation.}
The cross-validation (CV) method finds the best bandwidth by minimising an unbiased or approximately unbiased estimator of MISE instead of minimising MISE. The CV selects the optimal bandwidth $H_{CV}$ by performing the minimisation 
\begin{align}
\label{cross-valid}
H_{CV}=&\argmin_{H>0}  CV(H), \text{ with }\\
CV(H) =&
\frac{1}{n^2 |H|}
\sum_{i=1}^{n}\sum_{j=1}^{n}
K\star K\{H^{-1}(\hat{X}_j-\hat{X}_i)\}-\frac{2}{n(n-1)} \sum_{i=1}^{n}\sum_{j=1\,j\ne i}^{n} K\{H^{-1}(\hat{X}_j-\hat{X}_i)\},\nonumber
\end{align}
where $K\!\star\! K(\bs{u})\!:=\!\int\! K(\bs{u}-\bs{v})K(\bs{v})d\bs{v}$.

The asymptotic bias and variance of the univariate conditional density estimator~\eqref{condi_densityestima} are as follows
\begin{align}\label{bias_from1996}
    \Bias[\hat{f}_{Y|X}(y,x)]=&\frac{h_{\mathsf x}^{2}G_{12}^{2}(K)}{2}\left\{ 2\frac{f'_{X}(x)}{f_{X}(x)}\frac{\mathrm{d}}{\mathrm{d}x}f_{Y|X}(y,x)+\frac{\mathrm{d}^{2}}{\mathrm{d}^{2}x}f_{Y|X}(y,x)
    +\frac{h_{\mathsf y}^{2}}{h_{\mathsf x}^{2}}\frac{\mathrm{d}^{2}}{\mathrm{d}y^{2}}f_{Y|X}(y,x)\right\}\nonumber\\
    &+O(h_{\mathsf x}^4)+O(h_{\mathsf y}^4)+O(h_{\mathsf x}^2h_{\mathsf y}^2)+O(\frac{1}{nh_{\mathsf x}}),
\end{align}
and
\begin{equation}\label{var_from1996}
    \Var[\hat{f}_{Y|X}(y,x)]
    =\frac{G_{20}(K)f_{Y|X}(y,x)}{nh_{\mathsf y}h_{\mathsf x}f_{X}(x)}[G_{20}(K)-h_{\mathsf y}f_{Y|X}(y,x)]+O(\frac{1}{n})+O(\frac{h_{\mathsf y}}{nh_{\mathsf x}})+O(\frac{h_{\mathsf x}}{nh_{\mathsf y}}),
\end{equation}
where $G_{12}(K)=\int u^{2}K(u)du$ and $G_{20}(K)=\int K^{2}(u)du$, if $h_{\mathsf x},~h_{\mathsf y}\to 0$ and $n\to +\infty$.

\subsection{Review of Probabilistic Computation Tree Logic (PCTL)}\label{PCTL}

PCTL \citep{baier2008principles} is a formal language for expressing the requirements on complex behaviours of stochastic systems.

\begin{definition}[\textbf{Syntax of PCTL}]
For a given set of atomic propositions $AP$, formulas in PCTL can be recursively defined as follows:
\begin{align*}
    &\text{State Formula } \phi:=\textsf{true}~|~\rho~|~\neg \phi~|~\phi \wedge \phi~|~ P_{\bowtie p}[\psi],\\
    &\text{Path Formula }~ \psi:= \mathcal{X} \phi~|~\phi~\mathcal{U}^{\leq k}~\phi~|~ \phi~\mathcal{U}~ \phi, 
\end{align*}
where $\rho \in AP$, $\neg$ is the negation operator, $\wedge$ is the conjunction operator, $P_{\bowtie p}$ is the probabilistic operator, $\bowtie\in \{\leq, <, \ge,>\} $ is a relation placeholder, and $p\in [0,1]$. $\mathcal{X}$ (next), $\mathcal{U}^{\leq k}$ (bounded until), and $\mathcal{U}$ (until) are temporal operators.
\end{definition}

\begin{definition}[\textbf{PCTL Semantics}]
For a labelling function $L:Q\rightarrow 2^{AP}$, the satisfaction relation $\vDash$ is defined inductively as follows. For any state $q\in Q$,
(1) $q\vDash\textsf{true}$ for all $q\in Q$;
(2) $q\vDash \rho \iff \rho \in L(q)$;
(3) $q\vDash (\phi_1 \wedge \phi_2) \iff (q\vDash \phi_1) \wedge (q\vDash \phi _2)$;
(4) $q\vDash \neg \phi \iff q\nvDash \phi$;
(5) $q\vDash P_{\bowtie p} [\psi] \iff \textsf{Prob}_{q}(\psi)\bowtie p,$ where $\textsf{Prob}_{q}(\psi)$ is the probability that infinite trajectories initialised at $q$ satisfy $\psi$.
Also, for any path $\upsilon\in \textsf{Paths}$, the satisfaction relation $\vDash$ is defined as: (1) $\upsilon \vDash \mathcal{X}\phi \iff \upsilon(1)\vDash \phi;$ (2) $\upsilon \vDash \phi_1 \mathcal{U}^{\leq k}\phi_2 \iff  \exists i\leq k~s.t.~\upsilon(i)\vDash \phi_2 \wedge \upsilon(j)\vDash \phi_1,~\forall~j\in [0,i);$ (3) $\upsilon\vDash \phi_1\mathcal{U}\phi_2 \iff \exists i\ge 0~s.t.~\upsilon(i)\vDash\phi_2 \wedge \upsilon(j)\vDash \phi_1,~\forall~j\in [0,i).$
\end{definition}

$\Diamond^{\leq k}$ (bounded eventually) and $\Diamond$ (eventually) are defined as $P_{\bowtie p}[\Diamond^{\leq k }\phi ]\equiv P_{\bowtie p}[\textsf{true}~ \mathcal{U}^{\leq k} \phi]$ and $P_{\bowtie p}[\Diamond \phi]\equiv P_{\bowtie p}[\textsf{true}~\mathcal{U} \phi]$ representing $\phi$ is satisfied within $k$ time steps and $\phi$ is satisfied at some point in the future, respectively.

\section{SUPPLEMENTARY OF SECTION~\ref{main_sec}}

\subsection{Probabilistic Closeness Guarantee Between DTSCS and its Finite Abstraction}\label{Prob_guarantee}

Here, we approximate a DTSCS $\Sigma_{ss}$ with a finite MDP $\hat \Sigma_{ss}=(Q, S_{\mathfrak a}, P, AP, L )$ with $Q$ representing a partition of the state space $\mathcal S$ with partition sets denoted by $q\in Q$. The input space $S_{\mathfrak a} = U$. For the transition probabilities $P^{a}_{ij}$, select representative points $\bar q\in q$ for each partition set. Define $P^{a}_{ij} :=\textsf{Prob}_w(f(\bar q_i,a,w)\in q_j)$.
%
%
Define the state discretisation parameter $\delta:=\sup\{ \| x-x'\|,x,x'\in q,\,\, q\in Q\}$. 
DTSCS $\Sigma_{ss}$ and its finite MDP abstraction $\hat{\Sigma}_{ss}$ under any strategy $\hat{\varpi}(\cdot)\in \mathcal{U}_{\mathfrak a}$ are denoted as $\Sigma^{\hat{\varpi}}_{ss}$ and $\hat{\Sigma}^{\hat{\varpi}}_{ss}$, respectively.
The following theorem \citep{SA13} provides the closeness guarantee between $\Sigma_{ss}$ and its finite abstraction $\hat{\Sigma}_{ss}$.

\begin{theorem}\label{SA13_bound}
      For a given PCTL specification $\psi$ over a finite horizon and any strategy $\hat{\varpi}(\cdot)\in \mathcal{U}_{\mathfrak a}$, the closeness between $\Sigma^{\hat{\varpi}}_{ss}$ and $\hat{\Sigma}^{\hat{\varpi}}_{ss}$ can be obtained as 
\begin{align*}
    |P(\Sigma^{\hat{\varpi}}_{ss}\vDash \psi)-P(\hat{\Sigma}^{\hat{\varpi}}_{ss}\vDash \psi) |\leq \epsilon, \text{ with $\epsilon:=T\delta L \mathfrak{L}$,}
\end{align*}
where $T$ is the finite time horizon, $\delta$ is the state discretisation parameter, $L$ is the Lipschitz constant of the stochastic kernel, and $\mathfrak{L}$ is the Lebesgue measure of the specification set. 

\end{theorem}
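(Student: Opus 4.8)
The plan is to characterise the finite-horizon satisfaction probability $P(\cdot\vDash\psi)$ through the backward value-iteration recursion of \eqref{upprob_path}, and then to control how the error incurred by replacing each state with the representative point of its partition cell accumulates over the $T$ steps. Write $\mathcal{T}(\cdot\mid x)$ for the one-step stochastic kernel of $\Sigma_{ss}$ and let $V_k(\cdot)$ be the continuous value function with $V_0=\mathbf{1}_{Q^1}$, $V_k\equiv 1$ on $Q^1$, $V_k\equiv 0$ on $Q^0$, and on the undecided region $\mathcal S\setminus(Q^0\cup Q^1)$, $V_k(x)=\int V_{k-1}(x')\,\mathcal{T}(x'\mid x)\,dx'$. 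Let $\hat V_k$ be the piecewise-constant value function of $\hat\Sigma_{ss}$, obtained from the same recursion but evaluating the kernel at the representative point $\bar q(x)\in q$ of the cell containing $x$ (equivalently, using the discrete probabilities $P^a_{ij}=\int_{q_j}\mathcal{T}(x'\mid\bar q_i)\,dx'$). By construction every $V_k,\hat V_k$ takes values in $[0,1]$, and $P(\Sigma^{\hat\varpi}_{ss}\vDash\psi)$ and $P(\hat\Sigma^{\hat\varpi}_{ss}\vDash\psi)$ are exactly $V_T$ and $\hat V_T$ evaluated at the initial state.

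The key ingredient is a one-step discretisation bound. For any measurable $V:\mathcal S\to[0,1]$ that vanishes outside the specification-relevant domain of Lebesgue measure $\mathfrak{L}$, and any state $x$ with representative $\bar q(x)$, Lipschitz continuity gives $|\mathcal{T}(x'\mid x)-\mathcal{T}(x'\mid \bar q(x))|\le L\,\|x-\bar q(x)\|\le L\delta$ pointwise, so that
\begin{align*}
\left|\int V(x')\big[\mathcal{T}(x'\mid x)-\mathcal{T}(x'\mid \bar q(x))\big]\,dx'\right|\le L\delta\,\mathfrak{L},
\end{align*}
where I used $0\le V\le 1$ and integrated the bound $L\delta$ over the region of measure $\mathfrak{L}$. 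This is precisely the per-step error introduced by moving from the true state to the representative point.

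I would then close the argument by induction on $k$. Setting $e_k:=\sup_{x}|V_k(x)-\hat V_k(x)|$, the recursion splits the step-$k$ error into a \emph{discretisation} term and a \emph{propagation} term:
\begin{align*}
|V_k(x)-\hat V_k(x)|\le \left|\int V_{k-1}(x')\big[\mathcal{T}(x'\mid x)-\mathcal{T}(x'\mid \bar q(x))\big]\,dx'\right| + \int |V_{k-1}(x')-\hat V_{k-1}(x')|\,\mathcal{T}(x'\mid \bar q(x))\,dx'.
\end{align*}
The first term is $\le L\delta\mathfrak{L}$ by the one-step bound; the second is $\le e_{k-1}$ because $\mathcal{T}(\cdot\mid\bar q(x))$ integrates to one. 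With $e_0=0$ this yields $e_k\le e_{k-1}+L\delta\mathfrak{L}$, hence $e_T\le T L\delta\mathfrak{L}=\epsilon$, which is the claim after evaluating at the initial state.

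The main obstacle I anticipate is not the recursion itself but the bookkeeping that keeps the one-step bound clean. One must assume (or arrange the partition so) that cells do not straddle the boundaries of the labelled regions $Q^0,Q^1$, so that the indicator factors $\mathbf{1}_{Q^1},\mathbf{1}_{Q^0}$ coincide for $x$ and $\bar q(x)$ and the two recursions align term by term; one must verify measurability and the uniform bound $V_k\in[0,1]$ used to extract the $\|V_{k-1}\|_\infty\le 1$ factor from the discretisation integral; and one must correctly identify $\mathfrak{L}$ as the Lebesgue measure of the domain over which the value function is integrated at each step so that the per-step error is genuinely $L\delta\mathfrak{L}$ rather than $L\delta\,\Vol(\mathcal S)$.
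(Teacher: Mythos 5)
Your proposal is correct and takes essentially the same route as the source: the paper states Theorem~\ref{SA13_bound} without proof, citing \citet{SA13}, and the argument there is precisely your value-iteration induction --- a one-step kernel-discretisation bound of $L\delta\mathfrak{L}$, obtained from the Lipschitz property $|\mathcal{T}(x'\mid x)-\mathcal{T}(x'\mid \bar q(x))|\leq L\delta$ integrated over the specification set, propagated through the Bellman recursion so the error accumulates linearly to $T\delta L\mathfrak{L}$. The bookkeeping caveats you flag (cells aligned with the boundaries of $Q^0,Q^1$ so that $e_0=0$, and $\mathfrak{L}$ taken as the measure of the set supporting the value function rather than $\Vol(\mathcal S)$) are exactly the conditions handled in the cited reference, so nothing essential is missing.
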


\begin{remark}
The upper bound $L$ of the LC impacts the algorithm as follows: One can initially fix the desired threshold $\epsilon$ in advance, and then select the partition parameter $\delta=\frac{\epsilon}{T L \mathfrak{L}}$ according to the values of $T$, $L$, $\mathfrak{L}$.
This partition provides a guarantee for the verification based on MDP abstraction, which ensures the absolute distance between the satisfaction probability of the original system and that of its finite MDP abstraction is smaller than $\epsilon$.
\end{remark}

\subsection{Preparation for Main Result}

In order to obtain the main results in Section~\ref{theo_onedimen}, we introduce the following lemmas.

\begin{lemma}\label{exp_main}
Suppose that one-dimensional random variable $X$ has density function $f_{X}(x)$, $K:\mathbb{R} \to \mathbb{R}$ is the Gaussian kernel function with bandwidth $h_{\mathsf x}$, $\theta(x)$ is at least twice continuously differentiable, $\hat{X}$ is the random sample of $X$, and $u_{x}:=\frac{\hat{X}-x}{h_{\mathsf x}}$. We have that if $h_{\mathsf x}\to 0$, then
\begin{align*}
    \E[K^{2}(u_{x})\theta(\hat{X})]=h_{\mathsf x} G_{20}(K)\theta(x)f_{X}(x)
    +\frac{1}{2}h_{\mathsf x}^{3} G_{22}(K)\frac{\mathrm{d}^{2}}{\mathrm{d}x^{2}}[\theta(x)f_{X}(x)]+O(h_{\mathsf x}^{5}),
\end{align*}
\begin{align*}
    \E[u^{2}_{x}K^{2}(u_{x})\theta(\hat{X})]
    =h_{\mathsf x} G_{22}(K)\theta(x)f_{X}(x)+\frac{1}{2}h_{\mathsf x}^{3}G_{24}(K)\frac{\mathrm{d}^{2}}{\mathrm{d}x^{2}}[\theta(x)f_{X}(x)]+O(h_{\mathsf x}^{5}),
\end{align*}
\begin{align*}
    \E[K(u_{x})\theta(\hat{X})]
    =h_{\mathsf x} \theta(x)f_{X}(x)+\frac{1}{2}h_{\mathsf x}^{3}G_{12}(K)\frac{\mathrm{d}^2}{\mathrm{d}x^2}[\theta(x)f_{X}(x)]+O(h_{\mathsf x}^{5}),
\end{align*}
\begin{align*}
    \E[u^{2}_{x}K^{4}(u_{x})\theta(\hat{X})]
    =h_{\mathsf x} \theta(x)f_{X}(x)G_{42}(K)+\frac{1}{2}h_{\mathsf x}^{3}G_{44}(K)\frac{\mathrm{d}^2}{\mathrm{d}x^2}[\theta(x)f_{X}(x)]
    +O(h_{\mathsf x}^{5}),
\end{align*}
\begin{align*}
    \E[u^{2}_{x}K^{3}(u_{x})\theta(\hat{X})]
    =&h_{\mathsf x} \theta(x)f_{X}(x)G_{32}(K)+\frac{1}{2}h_{\mathsf x}^{3}G_{34}(K)\frac{\mathrm{d}^2}{\mathrm{d}x^2}[\theta(x)f_{X}(x)]+O(h_{\mathsf x}^{5}),
\end{align*}
\begin{align*}
    \E[K^{4}(u_{x})\theta(\hat{X})]
    =h_{\mathsf x} \theta(x)f_{X}(x)G_{40}(K)+\frac{1}{2}h_{\mathsf x}^{3}G_{42}(K)\frac{\mathrm{d}^2}{\mathrm{d}x^2}[\theta(x)f_{X}(x)]
    +O(h_{\mathsf x}^{5}),
\end{align*}
\begin{align*}
    \E[u^{4}_{x}K^{6}(u_{x})\theta(\hat{X})]
    =h_{\mathsf x} \theta(x)f_{X}(x)G_{64}(K)+\frac{1}{2}h_{\mathsf x}^{3}G_{66}(K)\frac{\mathrm{d}^2}{\mathrm{d}x^2}[\theta(x)f_{X}(x)]+O(h_{\mathsf x}^5)
\end{align*}
\begin{align*}
    \E[u_{x}K(u_{x})\theta(\hat{X})]
    =&h_{\mathsf x}^{2}G_{12}(K)\frac{\mathrm{d}}{\mathrm{d}x}[\theta(x)f_{X}(x)]\nonumber\\
    &+\frac{1}{2}h_{\mathsf x}^{4}G_{14}(K)[\frac{\mathrm{d}}{\mathrm{d}x}\theta(x)\frac{\mathrm{d}^2}{\mathrm{d}x^2}f_{X}(x)+\frac{\mathrm{d}^{2}}{\mathrm{d}x^{2}}\theta(x)\frac{\mathrm{d}}{\mathrm{d}x}f_{X}(x)]+O(h_{\mathsf x}^{6}),
\end{align*}
\begin{align*}
    \E[u_{x}K^{2}(u_{x})\theta(\hat{X})]
    =&h_{\mathsf x}^{2}G_{22}(K)\frac{\mathrm{d}}{\mathrm{d}x}[\theta(x)f_{X}(x)]\nonumber\\
    &+\frac{1}{2}h_{\mathsf x}^{4}G_{24}(K)[\frac{\mathrm{d}}{\mathrm{d}x}\theta(x)\frac{\mathrm{d}^2}{\mathrm{d}x^2}f_{X}(x)+\frac{\mathrm{d}^{2}}{\mathrm{d}x^{2}}\theta(x)\frac{\mathrm{d}}{\mathrm{d}x}f_{X}(x)]+O(h_{\mathsf x}^{6}),
\end{align*}
\begin{align*}
    \E[u_{x}K^{3}(u_{x})\theta(\hat{X})]
    =&h_{\mathsf x}^{2}G_{32}(K)\frac{\mathrm{d}}{\mathrm{d}x}[\theta(x)f_{X}(x)]\nonumber\\
    &+\frac{1}{2}h_{\mathsf x}^{4}G_{34}(K)[\frac{\mathrm{d}}{\mathrm{d}x}\theta(x)\frac{\mathrm{d}^2}{\mathrm{d}x^2}f_{X}(x)+\frac{\mathrm{d}^{2}}{\mathrm{d}x^{2}}\theta(x)\frac{\mathrm{d}}{\mathrm{d}x}f_{X}(x)]
    +O(h_{\mathsf x}^{6}),
\end{align*}
and 
\begin{align*}
    \E[u_{x}K^{4}(u_{x})\theta(\hat{X})]
    =&h_{\mathsf x}^{2}G_{42}(K)\frac{\mathrm{d}}{\mathrm{d}x}[\theta(x)f_{X}(x)]\nonumber\\
    &+\frac{1}{2}h_{\mathsf x}^{4}G_{44}(K)[\frac{\mathrm{d}}{\mathrm{d}x}\theta(x)\frac{\mathrm{d}^2}{\mathrm{d}x^2}f_{X}(x)+\frac{\mathrm{d}^{2}}{\mathrm{d}x^{2}}\theta(x)\frac{\mathrm{d}}{\mathrm{d}x}f_{X}(x)]
    +O(h_{\mathsf x}^{6}),
\end{align*}
where 
$G_{ji}=\int \nu^{i}K^{j}(\nu)d\nu$, $j\in \{1,\dots,6\}$ and $i\in \{0,\dots,6\}$.
\end{lemma}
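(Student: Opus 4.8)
The plan is to prove each of the eleven asymptotic expansions in Lemma~\ref{exp_main} by a single unified computation, since they are all instances of the same underlying integral. Write $\hat X = x + h_{\mathsf x} u_x$, so that $u_x = (\hat X - x)/h_{\mathsf x}$ is the standardised sample. For a generic expectation of the form $\E[u_x^i K^j(u_x)\theta(\hat X)]$, I would pass to the density of $\hat X$ and substitute $\nu = u_x$, i.e. $\hat X = x + h_{\mathsf x}\nu$ with $d\hat X = h_{\mathsf x}\,d\nu$, giving
\begin{equation*}
\E[u_x^i K^j(u_x)\theta(\hat X)] = h_{\mathsf x}\int \nu^i K^j(\nu)\,\theta(x+h_{\mathsf x}\nu)\,f_X(x+h_{\mathsf x}\nu)\,d\nu.
\end{equation*}
This single identity is the engine behind all eleven formulas; the differences are only in the exponents $i$ and $j$, which select the constants $G_{ji} = \int \nu^i K^j(\nu)\,d\nu$.

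Next I would Taylor-expand the smooth product $g(t) := \theta(t)f_X(t)$ about $t = x$ to the appropriate order in $h_{\mathsf x}\nu$, using that $\theta$ is at least twice continuously differentiable (and $f_X$ correspondingly smooth). Writing $g(x+h_{\mathsf x}\nu) = g(x) + h_{\mathsf x}\nu\,g'(x) + \tfrac12 h_{\mathsf x}^2\nu^2 g''(x) + \tfrac16 h_{\mathsf x}^3\nu^3 g'''(x)+\cdots$, I substitute into the integral and integrate term by term against $\nu^i K^j(\nu)$. The key simplification comes from parity: since the Gaussian kernel $K$ is even, the moments $G_{ji}$ vanish whenever $i$ is odd. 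For the expansions whose leading exponent $i$ is even (the first seven formulas, with $i\in\{0,2,4\}$), the odd-order Taylor terms integrate to zero, so the surviving terms are the $h_{\mathsf x}\,g(x)\,G_{ji}$ leading term and the $\tfrac12 h_{\mathsf x}^3 g''(x)\,G_{j,i+2}$ correction, matching the stated right-hand sides with error $O(h_{\mathsf x}^5)$. For the four $u_x$-weighted expansions (with $i=1$), the leading term vanishes by parity and the surviving contributions start at order $h_{\mathsf x}^2$, pairing $g'(x)$ with $G_{j2}$ and then the $h_{\mathsf x}^4$ correction with $G_{j4}$ and a combination of second derivatives, giving error $O(h_{\mathsf x}^6)$.

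The only genuine bookkeeping issue, which I would handle carefully but not belabour, is tracking which $G_{ji}$ appears where: each formula's leading constant is $G_{j,i}$ and its correction constant is $G_{j,i+2}$, where $j$ is the power of the kernel and $i$ the power of $u_x$ in the expectation. I would also note that for the $i=1$ cases the correction term $\tfrac12 h_{\mathsf x}^4 G_{j4}[\theta'f_X'' + \theta''f_X']$ arises because differentiating $g = \theta f_X$ twice and then once (to reach the $\nu^3$ Taylor coefficient paired against the odd moment $G_{j,1+3}=G_{j4}$) produces a cross-derivative expression rather than a single $g'''$; expanding $g''' = (\theta f_X)'''$ by Leibniz and retaining only the terms that survive against the relevant moment yields the stated mixed-derivative form. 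The main obstacle is therefore purely organisational—ensuring the Taylor coefficients are matched to the correct even moments and that the parity cancellations are applied consistently across all eleven cases—rather than any analytic difficulty, since smoothness of $\theta f_X$ and the existence of the Gaussian moments $G_{ji}$ guarantee every integral converges and every remainder is of the claimed order.
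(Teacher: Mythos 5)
Your strategy coincides with the paper's own proof: the paper likewise rewrites the expectation as $\int K^{2}\!\left(\frac{\mu-x}{h_{\mathsf x}}\right)\theta(\mu)f_{X}(\mu)\,\mathrm{d}\mu$, substitutes $z=(\mu-x)/h_{\mathsf x}$ (picking up the factor $h_{\mathsf x}$), Taylor-expands, and lets the odd moments of the even Gaussian kernel vanish. The only cosmetic differences are that you expand the product $\theta f_X$ while the paper expands the two factors separately and multiplies out, and that the paper works only the first identity in detail, dismissing the remaining ten with ``using similar derivations''. Your unified master identity and your treatment of the seven even-power cases (leading term $h_{\mathsf x}G_{ji}\,\theta(x) f_X(x)$, correction $\tfrac12 h_{\mathsf x}^{3}G_{j,i+2}\,(\theta f_X)''(x)$, error $O(h_{\mathsf x}^{5})$) are correct and match the statement.

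The gap is in the four $u_x$-weighted cases. Carried out honestly, your own computation gives
\begin{align*}
\E[u_xK^{j}(u_x)\theta(\hat X)] \;=\; h_{\mathsf x}^{2}G_{j2}\,(\theta f_X)'(x) \;+\; \tfrac{1}{6}h_{\mathsf x}^{4}G_{j4}\,(\theta f_X)'''(x) \;+\; O(h_{\mathsf x}^{6}),
\end{align*}
and by Leibniz $\tfrac16(\theta f_X)''' = \tfrac16\theta''' f_X + \tfrac12\theta'' f_X' + \tfrac12\theta' f_X'' + \tfrac16\theta f_X'''$, whereas the lemma's right-hand side retains only the two middle terms. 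Your proposed reconciliation --- expanding $g'''$ by Leibniz and ``retaining only the terms that survive against the relevant moment'' --- is not a real argument: all four Leibniz terms pair with the \emph{same} moment $G_{j4}$, and parity removes none of them, so nothing in your derivation produces the stated mixed-derivative form rather than the full $\tfrac16(\theta f_X)'''$. What you can actually prove differs from the statement by $\tfrac16 h_{\mathsf x}^{4}G_{j4}\left[\theta''' f_X + \theta f_X'''\right]$. In fairness, this defect originates in the paper: its own product expansion in the proof of the first identity contains exactly these two extra $z^{3}$ terms, its assumption that $\theta$ is only twice continuously differentiable makes $\theta'''$ unavailable in the first place, and every downstream use of the lemma absorbs these $h_{\mathsf x}^{4}$ corrections into $O(\cdot)$ remainders, so nothing later breaks. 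But as a blind proof of the lemma as stated, your final step fails; you should either prove the corrected expansion (which is all the subsequent lemmas actually need) or state explicitly the extra conditions under which the two discarded terms vanish, rather than hand-waving the discrepancy away.
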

\begin{proof}
The first equation can be obtained as follows
\begin{align*}
    &\E[K^{2}(u_{x})\theta(\hat{X})]\\
    =&\int K^{2}(\frac{\mu-x}{h_{\mathsf x}})\theta(\mu)f_{X}(\mu)\mathrm{d} \mu\\
    =&\int h_{\mathsf x} K^{2}(z)[\theta(x)+zh_{\mathsf x} \frac{\mathrm{d}}{\mathrm{d}x}\theta(x)+\frac{1}{2}z^{2}h_{\mathsf x}^{2}\frac{\mathrm{d}^2}{\mathrm{d}x^2}\theta(x)+\frac{1}{6}z^{3}h_{\mathsf x}^{3}\frac{\mathrm{d}^3}{\mathrm{d}x^3}\theta(x)+O(h_{\mathsf x}^{4})][f_{X}(x)\\
    &+zh_{\mathsf x}\frac{\mathrm{d}}{\mathrm{d}x}f_{X}(x)
    +\frac{1}{2}z^{2}h_{\mathsf x}^{2}\frac{\mathrm{d}^2}{\mathrm{d}x^2}f_{X}(x)+\frac{1}{6}z^{3}h_{\mathsf x}^{3}\frac{\mathrm{d}^2}{\mathrm{d}x^3}f_{X}(x)+O(h_{\mathsf x}^{4})]\mathrm{d} z\\
    &\text{where $z=\frac{\mu-x}{h_{\mathsf x}}$ and using Taylor series expansion,}\\
    =&\int h_{\mathsf x} K^{2}(z)[\theta(x)f_{X}(x)+zh_{\mathsf x} \theta(x)\frac{\mathrm{d}}{\mathrm{d}x}f_{X}(x)+\frac{1}{2}z^{2}h_{\mathsf x}^{2}\theta(x)\frac{\mathrm{d}^2}{\mathrm{d}x^2}f_{X}(x)+\frac{1}{6}z^3 h_{\mathsf x}^3 \theta(x)\frac{\mathrm{d}^3}{\mathrm{d}x^3}f_{X}(x)\\
    &+zh_{\mathsf x} f_{X}(x)\frac{\mathrm{d}}{\mathrm{d}x}\theta(x)+z^{2}h_{\mathsf x}^{2}\frac{\mathrm{d}}{\mathrm{d}x}\theta(x)\frac{\mathrm{d}}{\mathrm{d}x}f_{X}(x)+\frac{1}{2}z^{3}h_{\mathsf x}^{3}\frac{\mathrm{d}}{\mathrm{d}x}\theta(x)\frac{\mathrm{d}^2}{\mathrm{d}x^2}f_{X}(x)\\
    &+\frac{1}{2}z^{2}h_{\mathsf x}^{2}f_{X}(x)\frac{\mathrm{d}^2}{\mathrm{d}x^2}\theta(x)+\frac{1}{2}z^{3}h_{\mathsf x}^{3}\frac{\mathrm{d}^2}{\mathrm{d}x^2}\theta(x)\frac{\mathrm{d}}{\mathrm{d}x}f_{X}(x)+\frac{1}{6}z^3 h_{\mathsf x}^3f_{X}(x)\frac{\mathrm{d}^3}{\mathrm{d}x^3}\theta(x)+O(h_{\mathsf x}^{4})]\mathrm{d} z\\
    =&h_{\mathsf x} G_{20}(K) \theta(x)f_{X}(x)+\frac{1}{2}h_{\mathsf x}^{3} G_{22}(K)\frac{\mathrm{d}^{2}}{\mathrm{d}x^{2}}[\theta(x)f_{X}(x)]+O(h_{\mathsf x}^{5}).
\end{align*}
Using similar derivations, the rest of the results can be obtained.
\end{proof}

The following lemma can be derived based on Lemma~\ref{exp_main}.

\begin{lemma}\label{element_expect}
Suppose that $X$ and $Y$ are one-dimensional random variables, there is a CoDF $f_{Y|X}(y,x)$ for all $(x,y)\in D_X\times D_Y$, $X$ is from the uniform distribution with density function $f_{X}(x)$, and $K:\mathbb{R} \to \mathbb{R}$ is the Gaussian kernel function with bandwidths $h_{\mathsf x}$ and $h_{\mathsf y}$. In addition, samples $\{\hat{X}_{i},~i=1,\ldots,n \}$ are selected uniformly from $D_{X}$ with density function $f_{X}(x)$, and for each $\hat{X}_{i}$, sample $\hat{Y}_{i}$ is generated from $(Y|\hat{X}_{i})$ with the CoDF $f_{Y|X}(y,x)$, $i\in \{1,\ldots,n \}$. Denoting $u_{x_{i}}:=\frac{\hat{X}_{i}-x}{h_{\mathsf x}}$ and $u_{y_{i}}:=\frac{\hat{Y}_{i}-y}{h_{\mathsf y}}$, $i\in \{ 1,\ldots,n\}$.
For $\hat{X}_{i}$ and $\hat{Y}_{i}$, $i\in \{1,\ldots,n\}$, we have that if $h_{\mathsf x},~h_{\mathsf y}\to 0$ as $n\to+\infty$, then
\begin{align*}
    \E[u_{x_{i}}K(u_{x_{i}})K(u_{y_{i}})]\nonumber
    =&h_{\mathsf x}^{2}h_{\mathsf y} G_{12}(K)f_{X}(x) \frac{\mathrm{d}}{\mathrm{d}x}f_{Y|X}(y,x)\\
    &+\frac{1}{2}h_{\mathsf x}^{2}h_{\mathsf y}^{3}G^{2}_{12}(K)f_{X}(x)\frac{\mathrm{d}^{3}}{\mathrm{d}x\mathrm{d}y^{2}}f_{Y|X}(y,x)
    +O(h_{\mathsf x}^{2}h_{\mathsf xy}^{5})+O(h_{\mathsf x}^{6}),
\end{align*}
\begin{align*}
    \E[K(u_{x_{i}})K(u_{y_{i}})]
    =&h_{\mathsf x} h_{\mathsf y} f_{X}(x)f_{Y|X}(y,x)+\frac{1}{2}h_{\mathsf x} h_{\mathsf y}^{3}G_{12}(K)f_{X}(x)[ \frac{\mathrm{d}^{2}}{\mathrm{d}y^{2}}f_{Y|X}(y,x)\\
    &+\frac{h_{\mathsf x}^{2}}{h_{\mathsf y}^{2}}\frac{\mathrm{d}^{2}}{\mathrm{d}x^{2}}f_{Y|X}(y,x) ]
    +O(h_{\mathsf x}^{5})+O(h_{\mathsf x}^{3}h_{\mathsf y}^{3})
    +O(h_{\mathsf x} h_{\mathsf y}^{5}),
\end{align*}
\begin{align*}
    \E[K^{2}(u_{x_{i}})K^{2}(u_{y_{i}})]
    =&h_{\mathsf x} h_{\mathsf y}G^{2}_{20}(K)f_{X}(x)f_{Y|X}(y,x)\\
    &+\frac{1}{2}h_{\mathsf x} h_{\mathsf y}^{3}G_{20}(K)G_{22}(K)f_{X}(x)[ \frac{\mathrm{d}^{2}}{\mathrm{d}y^{2}}f_{Y|X}(y,x)+\frac{h_{\mathsf x}^{2}}{h_{\mathsf y}^{2}}\frac{\mathrm{d}^{2}}{\mathrm{d}x^{2}}f_{Y|X}(y,x) ]\\
    &+O(h_{\mathsf x} h_{\mathsf y}^{5})+O(h_{\mathsf x}^{5}),
\end{align*}
\begin{align*}
    \E[u^{2}_{x_{i}}K^{2}(u_{x_{i}})K^{2}(u_{y_{i}})]
    =&h_{\mathsf x} h_{\mathsf y}G_{20}(K)G_{22}(K)f_{X}(x)f_{Y|X}(y,x)\\
    &
    +\frac{1}{2}h_{\mathsf x} h_{\mathsf y}^{3}G^2_{22}(K)f_{X}(x) \frac{\mathrm{d}^{2}}{\mathrm{d}y^{2}}f_{Y|X}(y,x)\\
    &+\frac{1}{2}h^3_{\mathsf x} h_{\mathsf y}G_{20}(K)G_{24}(K)\frac{\mathrm{d}^{2}}{\mathrm{d}x^{2}}f_{Y|X}(y,x)
    +O(h_{\mathsf x}^{5})+O(h_{\mathsf x} h_{\mathsf y}^{5})\\
    &+O(h_{\mathsf x}^3 h_{\mathsf y}^{3}),
\end{align*}
\begin{align*}
    \E[u^{2}_{x_{i}}K^{4}(u_{x_{i}})K^{2}(u_{y_{i}})]
    =&h_{\mathsf x} h_{\mathsf y} G_{20}(K)G_{42}(K)f_{X}(x)f_{Y|X}(y,x)\\
    &+\frac{1}{2}h_{\mathsf x} h_{\mathsf y}^{3}G_{22}(K)G_{42}(K)f_{X}(x)\frac{\mathrm{d}^{2}}{\mathrm{d}y^{2}}f_{Y|X}(y,x)\nonumber\\
    &+\frac{1}{2}h_{\mathsf x}^{3} h_{\mathsf y} G_{20}(K)G_{44}(K)f_{X}(x)\frac{\mathrm{d}^{2}}{\mathrm{d}x^{2}}f_{Y|X}(y,x)
    +O(h_{\mathsf x}^{5})+O(h_{\mathsf x}^{3}h_{\mathsf y}^{3})\\
    &+O(h_{\mathsf x}h_{\mathsf y}^{5})\nonumber,
\end{align*}
\begin{align*}
    \E[u_{x_{i}}K^{2}(u_{x_{i}})K(u_{y_i})]
    =&h_{\mathsf x}^{2} h_{\mathsf y} G_{22}(K)f_{X}(x)\frac{\mathrm{d}}{\mathrm{d}x}f_{Y|X}(y,x)\\
    &+\frac{1}{2}h_{\mathsf x}^{2} h_{\mathsf y}^{3}G_{12}(K)G_{22}(K)f_{X}(x)\frac{\mathrm{d}^{3}}{\mathrm{d}x\mathrm{d}y^{2}}f_{Y|X}(y,x)
    +O(h_{\mathsf x}^{6})+O(h_{\mathsf x}^{2}h_{\mathsf y}^{5}),
\end{align*}
\begin{align*}
    \E[u^{2}_{x_{i}}K^{4}(u_{x_{i}})K^{4}(u_{y_{i}})]
    =&h_{\mathsf x} h_{\mathsf y} G_{40}(K)G_{42}(K)f_{X}(x)f_{Y|X}(y,x)
    +\frac{1}{2}h_{\mathsf x} h_{\mathsf y}^{3}G^{2}_{42}(K)f_{X}(x)\frac{\mathrm{d}^{3}}{\mathrm{d}x\mathrm{d}y^{2}}f_{Y|X}(y,x)\\
    &+\frac{1}{2}h_{\mathsf x}^{3}h_{\mathsf y} G_{40}(K)G_{44}(K)f_{X}(x)\frac{\mathrm{d}^{3}}{\mathrm{d}x\mathrm{d}x^{2}}f_{Y|X}(y,x)
    +O(h_{\mathsf x}^{5})+O(h_{\mathsf x} h_{\mathsf y}^{5})+O(h_{\mathsf x}^{3}h_{\mathsf y}^{3})\nonumber,
\end{align*}
\begin{align*}
    \E[u_{x_{i}}K^{2}(u_{x_{i}})K^{2}(u_{y_{i}})]
    =&h_{\mathsf x}^{2} h_{\mathsf y} G_{20}(K)G_{22}(K)f_{X}(x)\frac{\mathrm{d}}{\mathrm{d}x}f_{Y|X}(y,x)
    +\frac{1}{2}h_{\mathsf x}^{2}h_{\mathsf y}^{3}G^{2}_{22}(K)f_{X}(x)\frac{\mathrm{d}^{3}}{\mathrm{d}x\mathrm{d}y^{2}}f_{Y|X}(y,x)\\
    &+O(h_{\mathsf x}^{2}h_{\mathsf y}^{5})
    +O(h_{\mathsf x}^{6})
\end{align*}
\begin{align*}
    \E[u^{4}_{x_{i}}K^{6}(u_{x_{i}})K^{4}(u_{y_{i}})]
    =&h_{\mathsf x} h_{\mathsf y} G_{40}(K)G_{64}(K)f_{X}(x)f_{Y|X}(y,x)
    +\frac{1}{2}h_{\mathsf x} h_{\mathsf y}^{3}G_{42}(K)G_{64}(K)f_{X}(x)\frac{\mathrm{d}^{2}}{\mathrm{d}y^{2}}f_{Y|X}(y,x)\\
    &+\frac{1}{2}h_{\mathsf x}^{3}h_{\mathsf y} G_{40}(K)G_{66}(K)f_{X}(x)\frac{\mathrm{d}^{2}}{\mathrm{d}y^{2}}f_{Y|X}(y,x)+O(h_{\mathsf x}^{5})
    +O(h_{\mathsf x}^{3}h_{\mathsf y}^{3})+O(h_{\mathsf x}h_{\mathsf y}^{5}),
\end{align*}
\begin{align*}
    \E[u^{2}_{x_{i}}K^{3}(u_{x_{i}})K^{2}(u_{y_{i}})]
    =&h_{\mathsf x} h_{\mathsf y} G_{20}(K)G_{32}(K)f_{X}(x)f_{Y|X}(y,x)
    +\frac{1}{2}h_{\mathsf x} h_{\mathsf y}^{3}G_{22}(K)G_{32}(K)f_{X}(x)\frac{\mathrm{d}^{2}}{\mathrm{d}y^{2}}f_{Y|X}(y,x)\\
    &+\frac{1}{2}h_{\mathsf x}^{3}h_{\mathsf y} G_{20}(K)G_{34}(K)f_{X}(x)\frac{\mathrm{d}^{2}}{\mathrm{d}x^{2}}f_{Y|X}(y,x)
    +O(h_{\mathsf x}^{5})+O(h_{\mathsf x}h_{\mathsf y}^{5})
    +O(h_{\mathsf x}^{3}h_{\mathsf y}^{3}),
\end{align*}
\begin{align*}
    \E[u_{x_{i}}K^{3}(u_{x_{i}})K(u_{y_{i}})]
    =&h_{\mathsf x}^{2} h_{\mathsf y} G_{32}(K)f_{X}(x)\frac{\mathrm{d}}{\mathrm{d}x}f_{Y|X}(y,x)
    +\frac{1}{2}h_{\mathsf x}^{2} h_{\mathsf y}^{3}G_{12}(K)G_{32}(K)f_{X}(x)\frac{\mathrm{d}^{3}}{\mathrm{d}x\mathrm{d}y^{2}}f_{Y|X}(y,x)\\
    &+O(h_{\mathsf x}^{2}h_{\mathsf y}^{5})+O(h_{\mathsf x}^{6}),
\end{align*}
and
\begin{align*}
    \E[u_{x_{i}}K^{4}(u_{x_{i}})K(u_{y_{i}})]
    =&h_{\mathsf x}^{2} h_{\mathsf y} G_{42}(K)f_{X}(x)\frac{\mathrm{d}}{\mathrm{d}x}f_{Y|X}(y,x)
    +\frac{1}{2}h_{\mathsf x}^{2} h_{\mathsf y}^{3}G_{12}(K)G_{42}(K)f_{X}(x)\frac{\mathrm{d}^{3}}{\mathrm{d}x\mathrm{d}y^{2}}f_{Y|X}(y,x)\\
    &+O(h_{\mathsf x}^{2}h_{\mathsf y}^{5})+O(h_{\mathsf x}^{6}).
\end{align*}
\end{lemma}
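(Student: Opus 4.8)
The plan is to reduce every one of the listed identities to a single generic computation and then repeatedly invoke Lemma~\ref{exp_main}. Since the samples $(\hat X_i,\hat Y_i)$ are i.i.d., it suffices to treat one generic pair $(\hat X,\hat Y)$ with $u_x:=(\hat X-x)/h_{\mathsf x}$, $u_y:=(\hat Y-y)/h_{\mathsf y}$, and by the sampling scheme its joint density is $f_X(\hat x)\,f_{Y|X}(\hat y,\hat x)$. Each target expectation fits the common template $\E[u_x^{p}K^{a}(u_x)K^{b}(u_y)]$ with $(p,a,b)$ read off the particular line (e.g.\ $(1,1,1)$ for the first identity, $(0,1,1)$ for the second, $(2,2,2)$ for the fourth), so I would prove the template once and specialise. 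The first move is the tower property: conditioning on $\hat X$ and pulling the $\hat Y$-free factor outside,
\[
\E\bigl[u_x^{p}K^{a}(u_x)K^{b}(u_y)\bigr]
=\E_{\hat X}\Bigl[u_x^{p}K^{a}(u_x)\,\E_{\hat Y\mid\hat X}\bigl[K^{b}(u_y)\bigr]\Bigr].
\]

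Next I would evaluate the inner conditional expectation. Writing $z_y=(\hat y-y)/h_{\mathsf y}$ and changing variables gives
\[
\E_{\hat Y\mid\hat X=\hat x}\bigl[K^{b}(u_y)\bigr]
= h_{\mathsf y}\!\int K^{b}(z_y)\,f_{Y|X}(y+h_{\mathsf y}z_y,\hat x)\,dz_y ,
\]
and Taylor-expanding $f_{Y|X}(\cdot,\hat x)$ in $h_{\mathsf y}$ about $y$, together with the symmetry of the Gaussian kernel (which kills every odd moment, $\int z_y^{2\ell+1}K^{b}(z_y)\,dz_y=0$), yields
\[
h_{\mathsf y}\Bigl[G_{b0}(K)f_{Y|X}(y,\hat x)+\tfrac12 h_{\mathsf y}^{2}G_{b2}(K)\tfrac{\partial^{2}}{\partial y^{2}}f_{Y|X}(y,\hat x)+O(h_{\mathsf y}^{4})\Bigr].
\]
The key point is that each surviving coefficient is a function of $\hat x$ alone, namely an even-order $y$-partial of $f_{Y|X}$ evaluated at the fixed $y$; this is exactly the role played by the generic function $\theta(\hat X)$ in Lemma~\ref{exp_main}, which is admissible because the smoothness assumed of $f_{Y|X}$ makes each such $\theta$ at least twice continuously differentiable.

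I would then substitute this expansion back and apply Lemma~\ref{exp_main} termwise, taking $\theta(\hat x)=\tfrac{\partial^{2m}}{\partial y^{2m}}f_{Y|X}(y,\hat x)$ for $m=0,1,\dots$ and matching the power $u_x^{p}K^{a}(u_x)$ to the corresponding line of that lemma (every $x$-part occurring in the statement — $K$, $K^2$, $u_xK$, $u_x^2K^2$, $u_x^2K^4$, $u_x^4K^6$, and so on — appears there). Here the uniformity of $X$ is essential: since $f_X$ is constant on $D_X$, every factor $\tfrac{\mathrm d^{j}}{\mathrm dx^{j}}[\theta(x)f_X(x)]$ produced by Lemma~\ref{exp_main} collapses to $f_X(x)\tfrac{\mathrm d^{j}}{\mathrm dx^{j}}\theta(x)$, so the $x$-derivatives act directly on the $y$-partials of $f_{Y|X}$ and generate the mixed partials appearing in the claim (interior boundary contributions from the Gaussian tails reaching outside $D_X$ are exponentially small and are absorbed into the remainder). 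Multiplying the $h_{\mathsf y}$-expansion against the $h_{\mathsf x}$-expansion and collecting the leading two nonvanishing orders reproduces each identity, with the products of one-dimensional moments $G_{a\cdot}(K)G_{b\cdot}(K)$ assembling as written.

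The routine but delicate part — and the place I expect the bookkeeping to be heaviest — is organising this double expansion so that the leading term, the two first-correction terms (one inherited from the $h_{\mathsf y}^{2}$-coefficient of the inner $\hat Y$-expansion, the other from the $h_{\mathsf x}^{2}$-coefficient supplied by Lemma~\ref{exp_main}), and the remainder are each pinned to the correct joint order in $(h_{\mathsf x},h_{\mathsf y})$. Whether the surviving derivative in each correction is a pure second-order or a mixed third-order partial is dictated by the parity of $p$: for even $p$ the matching line of Lemma~\ref{exp_main} contributes $\theta$ together with $\tfrac{\mathrm d^{2}}{\mathrm dx^{2}}\theta$, whereas for odd $p$ its leading contribution is already $\tfrac{\mathrm d}{\mathrm dx}\theta$, turning an even $y$-partial into a mixed partial such as $\tfrac{\partial^{3}}{\partial x\,\partial y^{2}}f_{Y|X}$. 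The symmetry of the Gaussian kernel in both arguments is what guarantees that the odd-order cross terms drop out and that the surviving structure is exactly the product form claimed. Once the template is verified, the remaining identities follow by reading off $(p,a,b)$ and substituting the matching line of Lemma~\ref{exp_main}, so no genuinely new computation is required.
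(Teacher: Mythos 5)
Your proposal is correct and takes essentially the same route the paper intends: the paper offers no written proof of this lemma, stating only that it ``can be derived based on Lemma~\ref{exp_main}'', and your argument---tower property to condition on $\hat X$, change of variables and Taylor expansion of the inner expectation $\E[K^{b}(u_y)\mid \hat X]$ with odd Gaussian moments vanishing, then termwise application of Lemma~\ref{exp_main} with $\theta$ taken to be the even $y$-partials of $f_{Y|X}$, simplified by the uniformity of $f_X$---is precisely that derivation made explicit. The parity observation (odd powers of $u_x$ routing through the $\frac{\mathrm{d}}{\mathrm{d}x}[\theta f_X]$ lines of Lemma~\ref{exp_main} and producing the mixed partials $\frac{\partial^{3}}{\partial x\,\partial y^{2}}f_{Y|X}$) matches the structure of every identity in the statement, so nothing is missing.
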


We also use Lemma 2 from \citet{hyndman1996estimating}, which is restated as Lemma~\ref{baspro_cdf} and will be applied to the proofs in Section~\ref{main_sec}.

\begin{lemma}\label{baspro_cdf}
Supposing that $\{X_{1},\dots,X_{n} \}$ is an i.i.d. sequence of random variables and $q_{1}(X_{i})$ and $q_{2}(X_{i})$ are two random variables with means $u_{1}$ and $u_{2}$ and variances $v_{1}$ and $v_{2}$ respectively, and with covariance $v_{12}$. Defining $\hat{\Sigma}_{1}=\frac{1}{n}\Sigma^{n}_{i=1}q_{1}(X_{i})$ , $\hat{\Sigma}_{2}=\frac{1}{n}\Sigma^{n}_{i=1}q_{2}(X_{i})$ and $\hat{R}:=\hat{R}(\hat{\Sigma}_{1}, \hat{\Sigma}_{2})=\frac{\hat{\Sigma}_{1}}{\hat{\Sigma}_{2}}$. Then the second-order approximation of $\E [\hat{R}]$ is 
\begin{equation}\label{eq:mean_appro}
    \E [\hat{R}]\approx \frac{\mu_{1}}{\mu_{2}}+\frac{1}{n}(\frac{\mu_{1}v_{2}}{\mu^{3}_{2}}-\frac{v_{12}}{\mu^{2}_{2}}),
\end{equation}
and the first-order approximation of $\Var[\hat{R}]$ is 
\begin{equation}\label{eq:var_appro}
    \Var [\hat{R}]\approx \frac{1}{n\mu^{2}_{2}}( v_{1}+\frac{\mu^{2}_{1}v_{2}}{\mu^{2}_{2}}-2\frac{\mu_{1}v_{12}}{\mu_{2}} ).
\end{equation}
\end{lemma}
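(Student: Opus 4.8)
The plan is to apply the delta method, i.e.\ a second-order Taylor expansion of the ratio about its mean, together with the i.i.d.\ structure of the samples. First I would record the elementary moment facts implied by the hypotheses. Since $\hat{\Sigma}_1$ and $\hat{\Sigma}_2$ are sample means of i.i.d.\ terms, we have $\E[\hat{\Sigma}_1]=\mu_1$, $\E[\hat{\Sigma}_2]=\mu_2$, and, using independence to kill the cross-index covariances,
\begin{align*}
\Var[\hat{\Sigma}_1]=\frac{v_1}{n},\qquad \Var[\hat{\Sigma}_2]=\frac{v_2}{n},\qquad \Cov[\hat{\Sigma}_1,\hat{\Sigma}_2]=\frac{v_{12}}{n}.
\end{align*}
Next I would set $\delta_1:=\hat{\Sigma}_1-\mu_1$ and $\delta_2:=\hat{\Sigma}_2-\mu_2$, so that $\E[\delta_1]=\E[\delta_2]=0$, $\E[\delta_1^2]=v_1/n$, $\E[\delta_2^2]=v_2/n$, and $\E[\delta_1\delta_2]=v_{12}/n$, each of which is $O(1/n)$.

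Expanding the ratio via the geometric series $\tfrac{1}{1+\delta_2/\mu_2}=1-\tfrac{\delta_2}{\mu_2}+\tfrac{\delta_2^2}{\mu_2^2}-\cdots$ gives
\begin{align*}
\hat{R}=\frac{\mu_1+\delta_1}{\mu_2}\Bigl(1-\frac{\delta_2}{\mu_2}+\frac{\delta_2^2}{\mu_2^2}-\cdots\Bigr)=\frac{\mu_1}{\mu_2}+\frac{\delta_1}{\mu_2}-\frac{\mu_1\delta_2}{\mu_2^2}-\frac{\delta_1\delta_2}{\mu_2^2}+\frac{\mu_1\delta_2^2}{\mu_2^3}+\cdots,
\end{align*}
where the omitted terms involve cubic and higher powers of $(\delta_1,\delta_2)$. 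For the mean I would take expectations term by term: the two linear terms vanish, while the two quadratic terms contribute $-\E[\delta_1\delta_2]/\mu_2^2=-v_{12}/(n\mu_2^2)$ and $\mu_1\E[\delta_2^2]/\mu_2^3=\mu_1 v_2/(n\mu_2^3)$, yielding exactly \eqref{eq:mean_appro}. For the variance I would retain only the first-order part $\hat{R}-\mu_1/\mu_2\approx \delta_1/\mu_2-\mu_1\delta_2/\mu_2^2$ and compute
\begin{align*}
\Var[\hat{R}]\approx \frac{\E[\delta_1^2]}{\mu_2^2}-\frac{2\mu_1\E[\delta_1\delta_2]}{\mu_2^3}+\frac{\mu_1^2\E[\delta_2^2]}{\mu_2^4}=\frac{1}{n\mu_2^2}\Bigl(v_1-\frac{2\mu_1 v_{12}}{\mu_2}+\frac{\mu_1^2 v_2}{\mu_2^2}\Bigr),
\end{align*}
which is \eqref{eq:var_appro}.

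The main obstacle is making the truncation rigorous rather than purely formal: one must argue that the neglected higher-order terms in the geometric/Taylor expansion are genuinely of smaller order in $1/n$. Because the conclusions are stated only as approximations ($\approx$) at second and first order respectively, it suffices to track the $O(1/n)$ bookkeeping carefully; each additional factor of $\delta_1$ or $\delta_2$ contributes in expectation a further factor that is $O(n^{-1/2})$ in magnitude, so the cubic and higher terms fall strictly below the retained orders. A fully rigorous treatment would additionally invoke a moment or boundedness condition guaranteeing that the expansion remainder is integrable and that $\mu_2\neq 0$ keeps the denominator bounded away from zero. Since this statement merely restates Lemma~2 of \citet{hyndman1996estimating} and is applied only asymptotically, I would cite these standard regularity conditions rather than reprove them.
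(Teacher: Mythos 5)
Your proof is correct and takes essentially the same route as the paper: the paper states this result as Lemma~2 of \citet{hyndman1996estimating} and, in the accompanying remark, derives it via the same second-order Taylor (delta-method) expansion of the ratio about the means, computing $\E$ of the quadratic terms for \eqref{eq:mean_appro} and the variance of the linearisation for \eqref{eq:var_appro}, exactly as you do. The only cosmetic difference is that the paper expands $\hat R(x,y)=x/y$ using its partial derivatives evaluated at the mean point, whereas you expand $1/(\mu_2+\delta_2)$ as a geometric series and multiply out; the resulting retained terms, and the $O(1/n)$ bookkeeping from the i.i.d.\ sample-mean moments, are identical.
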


\smallskip
\begin{remark}
Let $X$ and $Y$ be two random variables with means $u_{x}$ and $u_{y}$ and variances $v_{x}$ and $v_{y}$ respectively, and with covariance $v_{xy}$. Let $\hat{R}:=\frac{X}{Y}$.
The second-order approximation of $\E [\hat{R}]$ is
\begin{align*}
&\E [\hat{R}(X,Y)]\\
\approx&\E\Bigg[\hat{R}(\hat{\Sigma})+\frac{\mathrm{d}}{\mathrm{d}x}\hat{R}(\hat{\Sigma})(X-u_x)+\frac{\mathrm{d}}{\mathrm{d}y}\hat{R}(\hat{\Sigma})(Y-u_y)\\
&+\frac{1}{2}\left[ \frac{\mathrm{d}^2}{\mathrm{d}x^2}\hat{R}(\hat{\Sigma})(X-u_x)^2
+2\frac{\mathrm{d}^2}{\mathrm{d}x\mathrm{d}y}\hat{R}(\hat{\Sigma})(X-u_x)(Y-u_y)+\frac{\mathrm{d}^2}{\mathrm{d}y^2}\hat{R}(\hat{\Sigma})(Y-u_y)^2 \right] \Bigg]\\
=&\frac{u_x}{u_y}+\frac{1}{2}\left[ \frac{\mathrm{d}^2}{\mathrm{d}x^2}\hat{R}(\hat{\Sigma})v_{x}+2\frac{\mathrm{d}^2}{\mathrm{d}x\mathrm{d}y}\hat{R}(\hat{\Sigma})v_{xy}+\frac{\mathrm{d}^2}{\mathrm{d}y^2}\hat{R}(\hat{\Sigma})v_{y} \right],
\end{align*}
and the first-order approximation of $\Var[\hat{R}]$ is
\begin{align*}
\Var[\hat{R}(X,Y)]\approx& \E\left[ \left( \hat{R}(\hat{\Sigma})+\frac{\mathrm{d}}{\mathrm{d}x}\hat{R}(\hat{\Sigma})(X-u_x)+\frac{\mathrm{d}}{\mathrm{d}y}\hat{R}(\hat{\Sigma})(Y-u_y)-\hat{R}(\hat{\Sigma})\right)^2 \right]\\
=&(\frac{\mathrm{d}}{\mathrm{d}x}\hat{R}(\hat{\Sigma}))^2v_{x}+2\frac{\mathrm{d}}{\mathrm{d}x}\hat{R}(\hat{\Sigma})\frac{\mathrm{d}}{\mathrm{d}y} \hat{R}(\hat{\Sigma})v_{xy} +(\frac{\mathrm{d}}{\mathrm{d}y}\hat{R}(\hat{\Sigma}))^2 v_{y}\\
=&\frac{1}{u^2_{y}}\left[v^2_{x}-2\frac{u_{x}v_{xy}}{u_{y}}+\frac{u^2_{x}v_{y}}{u^2_y} \right],
\end{align*}
where 
$\hat{\Sigma}=(u_{x},u_{y})$.
\end{remark}

\subsection{Proofs of the Main Theorems in Section~\ref{main_sec} }

\subsubsection{The Proof of Lemma~\ref{le_vari}.}
\begin{proof}
Based on the univariate form of the kernel estimator \eqref{condi_densityestima}, we have 
\begin{align}\label{firderiv_1dimen}
    \frac{\mathrm{d}}{\mathrm{d}x}\hat{f}_{Y|X}(y,x)=\left(\sum^{n}_{j=1}K(u_{x_{j}})\right)^{-1}\sum^{n}_{j=1} \frac{1}{h_{\mathsf x}h_{\mathsf y}}\left[ u_{x_{j}}K(u_{x_{j}})K(u_{y_{j}})
    -\frac{ \sum_{i=1}^{n}u_{x_{i}}K(u_{x_{i}})K(u_{x_{j}})K(u_{y_{j}}) }{\sum_{i=1}^{n}K(u_{x_{i}})}  \right],
\end{align}
where $u_{x_{i}}=\frac{\hat{X}_{i}-x}{h_{\mathsf x}}$ and $u_{y_{i}}=\frac{\hat{Y}_{i}-y}{h_{\mathsf y}}$.

In order to obtain the variance of $ \frac{\mathrm{d}}{\mathrm{d}x}\hat{f}_{Y|X}(y,x)$, according to Lemma~\ref{baspro_cdf}, we need to calculate the variances and means of $K(u_{x_{i}})$ and
\begin{align*}
    \lambda_{main}:=&\frac{1}{h_{\mathsf x} h_{\mathsf y}}\left[ u_{x_{j}}K(u_{x_{j}})K(u_{y_{j}})
    -\frac{ \sum_{i=1}^{n}u_{x_{i}}K(u_{x_{i}})K(u_{x_{j}})K(u_{y_{j}}) }{\sum_{i=1}^{n}K(u_{x_{i}})} \right],
\end{align*}
and their covariance.

For $\lambda_{main}$, we obtain that  
\begin{align}\label{var_variance}
    &\Var\left[\frac{1}{h_{\mathsf x} h_{\mathsf y}} \left( u_{x_{j}}K(u_{x_{j}})K(u_{y_{j}})
    -\frac{ \sum_{i=1}^{n}u_{x_{i}}K(u_{x_{i}})K(u_{x_{j}})K(u_{y_{j}}) }{\sum_{i=1}^{n}K(u_{x_{i}})}  \right) \right]\nonumber\\
    =&\frac{1}{h_{\mathsf x}^{2} h_{\mathsf y}^{2}}\Var\left[\frac{ \sum_{i=1}^{n}u_{x_{i}}K(u_{x_{i}})K(u_{x_{j}})K(u_{y_{j}}) }{\sum_{i=1}^{n}K(u_{x_{i}})}\right]
    +\frac{1}{h_{\mathsf x}^{2} h_{\mathsf y}^{2}}\Var[u_{x_{j}}K(u_{x_{j}})K(u_{y_{j}})]\nonumber\\
    &-\frac{2}{h_{\mathsf x}^{2} h_{\mathsf x}^{2}} \E\left[\frac{ \sum_{i=1}^{n}u_{x_{i}}K(u_{x_{i}})u_{x_{j}}K^{2}(u_{x_{j}})K^{2}(u_{y_{j}}) }{\sum_{i=1}^{n}K(u_{x_{i}})}\right]\nonumber\\
    &+ \frac{2}{h_{\mathsf x}^{2} h_{\mathsf x}^{2}}\E\left[\frac{ \sum_{i=1}^{n}u_{x_{i}}K(u_{x_{i}})K(u_{x_{j}})K(u_{y_{j}}) }{\sum_{i=1}^{n}K(u_{x_{i}})}\right]\E[u_{x_{j}}K(u_{x_{j}})K(u_{y_{j}})] ,
\end{align}
\begin{align}\label{mean_variance}
    &\E\left[\frac{1}{h_{\mathsf x} h_{\mathsf y}}\left( u_{x_{j}}K(u_{x_{j}})K(u_{y_{j}})
    -\frac{ \sum_{i=1}^{n}u_{x_{i}}K(u_{x_{i}})K(u_{x_{j}})K(u_{y_{j}}) }{\sum_{i=1}^{n}K(u_{x_{i}})}  \right)\right]\nonumber\\
    =&\frac{1}{h_{\mathsf x} h_{\mathsf y}}\E[u_{x_{j}}K(u_{x_{j}})K(u_{y_{j}})]
    -\frac{1}{h_{\mathsf x} h_{\mathsf y}}E\left[\frac{ \sum_{i=1}^{n}u_{x_{i}}K(u_{x_{i}})K(u_{x_{j}})K(u_{y_{j}}) }{\sum_{i=1}^{n}K(u_{x_{i}})} \right],\end{align}
and the covariance 
\begin{align}\label{covar_variance}
    &\Cov\left[\frac{1}{h_{\mathsf x} h_{\mathsf y}}\left( u_{x_{j}}K(u_{x_{j}})K(u_{y_{j}})
    -\frac{ \sum_{i=1}^{n}u_{x_{i}}K(u_{x_{i}})K(u_{x_{j}})K(u_{y_{j}}) }{\sum_{i=1}^{n}K(u_{x_{i}})}  \right),K(u_{x_{j}})\right]\nonumber \\
    =&\frac{1}{h_{\mathsf x} h_{\mathsf y}} \E[ u_{x_{j}}K^{2}(u_{x_{j}})K(u_{y_{j}})]
    -\frac{1}{h_{\mathsf x} h_{\mathsf y}}\E\left[\frac{ \sum_{i=1}^{n}u_{x_{i}}K(u_{x_{i}})K^{2}(u_{x_{j}})K(u_{y_{j}}) }{\sum_{i=1}^{n}K(u_{x_{i}})} \right]\nonumber\\
    &+\frac{1}{h_{\mathsf x} h_{\mathsf y}}\E\left[\frac{ \sum_{i=1}^{n}u_{x_{i}}K(u_{x_{i}})K(u_{x_{j}})K(u_{y_{j}}) }{\sum_{i=1}^{n}K(u_{x_{i}})} \right]\E[ K(u_{x_{j}}) ]\nonumber\\
    &-\frac{1}{h_{\mathsf x}h_{\mathsf y}}\E[ u_{x_{j}}K(u_{x_{j}})K(u_{y_{j}})] \E[ K(u_{x_{j}}) ].
\end{align}

In order to calculate equation \eqref{var_variance}, \eqref{mean_variance} and \eqref{covar_variance}, we have to calculate the mean and variance of
\begin{equation*}
    \lambda:=\frac{ \sum_{i=1}^{n}u_{x_{i}}K(u_{x_{i}})K(u_{x_{j}})K(u_{y_{j}}) }{\sum_{i=1}^{n}K(u_{x_{i}})},
\end{equation*}
and the means of 
\begin{equation*}
     \lambda^*:=\frac{ \sum_{i=1}^{n}u_{x_{i}}K(u_{x_{i}})K^{2}(u_{x_{j}})K(u_{y_{j}}) }{\sum_{i=1}^{n}K(u_{x_{i}})},
\end{equation*}
and 
\begin{equation*}
    \hat{\lambda}:=\frac{ \sum_{i=1}^{n}u_{x_{i}}K(u_{x_{i}})u_{x_{j}}K^{2}(u_{x_{j}})K^{2}(u_{y_{j}}) }{\sum_{i=1}^{n}K(u_{x_{i}})}.
\end{equation*}

Thus, at this step,
we need to calculate the variance and mean of $ \lambda$.
Similarly to the previous steps, we also first need to calculate the variances and means of $u_{x_{i}}K(u_{x_{i}})K(u_{x_{j}})K(u_{y_{j}})$ and $K(u_{x_{i}})$, and their covariance.

For $u_{x_{i}}K(u_{x_{i}})K(u_{x_{j}})K(u_{y_{j}})$, we have
\begin{align*}
    \Var[u_{x_{i}}K(u_{x_{i}})K(u_{x_{j}})K(u_{y_{j}})]
    =&\E[(u_{x_{i}}K(u_{x_{i}})K(u_{x_{j}})K(u_{y_{j}})
    -\E[u_{x_{i}}K(u_{x_{i}})K(u_{x_{j}})K(u_{y_{j}})])^{2}]\\
    =&P(i\not = j)\E[(u_{x_{i}}K(u_{x_{i}})K(u_{x_{j}})K(u_{y_{j}})
    -\E[u_{x_{i}}K(u_{x_{i}})K(u_{x_{j}})K(u_{y_{j}})])^{2}]\\
    &+P(i= j)\E[(u_{x_{i}}K(u_{x_{i}})K(u_{x_{j}})K(u_{y_{j}})
    -\E[u_{x_{i}}K(u_{x_{i}})K(u_{x_{j}})K(u_{y_{j}})])^{2}]\\
    =&\frac{n-1}{n}\left( \E[u^{2}_{x_{i}}K^{2}(u_{x_{i}})] \E[K^{2}(u_{x_{j}})K^{2}(u_{y_{j}})]
    -\E^{2}[u_{x_{i}}K(u_{x_{i}})] \E^{2}[K(u_{x_{j}})K(u_{y_{j}})] \right)
    \\
    &+\frac{1}{n}\left( \E[u^{2}_{x_{i}}K^{4}(u_{x_{i}})K^{2}(u_{y_{i}})]-\E^{2}[u_{x_{i}}K^{2}(u_{x_{i}})K(u_{y_i})] \right),
\end{align*}
\begin{align*}
    \E[u_{x_{i}}K(u_{x_{i}})K(u_{x_{j}})K(u_{y_{j}})]
    =&\frac{n-1}{n}\E[u_{x_{i}}K(u_{x_{i}})]
    \E[K(u_{x_{j}})K(u_{y_{j}})]
    +\frac{1}{n}\E[u_{x_{i}}K^{2}(u_{x_{i}})K(u_{y_{i}})],
\end{align*}
and the covariance of $u_{x_{i}}K(u_{x_{i}})K(u_{x_{j}})K(u_{y_{j}})$ and $K(u_{x_{i}})$,
\begin{align*}
    \Cov[u_{x_{i}}K(u_{x_{i}})&K(u_{x_{j}})K(u_{y_{j}}), K(u_{x_{i}})]\\
    =&\E[u_{x_{i}}K^{2}(u_{x_{i}})K(u_{x_{j}})K(u_{y_{j}})]
    -\E[u_{x_{i}}K(u_{x_{i}})K(u_{x_{j}})K(u_{y_{j}})]\E[K(u_{x_{i}})]\\
    =&\frac{n-1}{n}\E[u_{x_{i}}K^{2}(u_{x_{i}})]\E[K(u_{x_{j}})K(u_{y_{j}})]
    +\frac{1}{n}\E[u_{x_{i}}K^{3}(u_{x_{i}})K(u_{y_{i}})]\\
    &-\frac{n-1}{n}\E[u_{x_{i}}K(u_{x_{i}})]\E[K(u_{x_{j}})K(u_{y_{j}})]\E[K(u_{x_{i}})]
    -\frac{1}{n}\E[u_{x_{i}}K^{2}(u_{x_{i}})K(u_{y_{i}})]\E[K(u_{x_{i}})].
\end{align*}

Based on Lemma~\ref{exp_main} and $f_{X}(x)$ is the density function of the uniform distribution, we can have $\E[u_{x_{i}}K(u_{x_{i}})]=\E[u_{x_{i}}K^{2}(u_{x_{i}})]=0$, $\E[K(u_{x_{i}})]= h_{\mathsf x} f_{X}(x)$, $\Var[K(u_{x_{i}})]= h_{\mathsf x} f_{X}(x)(G_{20}(K)-1)$ and
$
\E[u^{2}_{x_{i}}K^{2}(u_{x_{i}})]=h_{\mathsf x} G_{22}(K)f_{X}(x).
$
In addition, based on Lemma~\ref{element_expect}, we can obtain that 
\begin{align}\label{eq:lamup_var}
\Var[u_{x_{i}}K(u_{x_{i}})K(u_{x_{j}})K(u_{y_{j}})]
  =&\frac{n-1}{n}[ h_{\mathsf x}^{2} h_{\mathsf y} G^{2}_{2,0}(K)G_{22}(K)f^{2}_{X}(x)f_{Y|X}(y,x) ]+O(h_{\mathsf x}^2 h_{\mathsf y}^3)+O(h_{\mathsf x}^4h_{\mathsf y})\nonumber\\
  &+O(h_{\mathsf x}^6)+O(\frac{h_{\mathsf x}h_{\mathsf y}}{n})
  +O(\frac{h_{\mathsf x}^5}{n}),
\end{align}
\begin{align}\label{eq:lamup_mean}
    \E[u_{x_{i}}K(u_{x_{i}})&K(u_{x_{j}})K(u_{y_{j}})]\nonumber\\
    =& \frac{1}{n} \!\left[ h_{\mathsf x}^{2} h_{\mathsf y} G_{22}(K)f_{X}(x)\frac{\mathrm{d}}{\mathrm{d}x}f_{Y|X}(y,x)
    \!+\!\frac{1}{2}h_{\mathsf x}^{2} h_{\mathsf y}^{3}G_{12}(K)G_{22}(K)f_{X}(x)\frac{\mathrm{d}^{3}}{\mathrm{d}x\mathrm{d}y^{2}}f_{Y|X}(y,x)\!+\!O(h_{\mathsf x}^{5})\! \right]\nonumber\\
    =& \frac{1}{n}  h_{\mathsf x}^{2} h_{\mathsf y} G_{22}(K)f_{X}(x)\frac{\mathrm{d}}{\mathrm{d}x}f_{Y|X}(y,x)+O(\frac{h_{\mathsf x}^2 h_{\mathsf y}^3}{n})+O(\frac{h_{\mathsf x}^5}{n}),
\end{align}
and
\begin{align}\label{eq:lamup_cov}
    \Cov[u_{x_{i}}K(u_{x_{i}})&K(u_{x_{j}})K(u_{y_{j}}), K(u_{x_{i}})]\nonumber\\
    =& \frac{1}{n} \bigg[ h_{\mathsf x}^{2} h_{\mathsf y} G_{32}(K)f_{X}(x)\frac{\mathrm{d}}{\mathrm{d}x}f_{Y|X}(y,x)
    +\frac{1}{2}h_{\mathsf x}^{2} h_{\mathsf y}^{3}G_{12}(K)G_{32}(K)f_{X}(x)\frac{\mathrm{d}^{3}}{\mathrm{d}x\mathrm{d}y^{2}}f_{Y|X}(y,x)\nonumber\\
    &+O(h_{\mathsf x}^{2} h_{\mathsf y}^{5})+O(h_{\mathsf x}^{6}) \bigg]
    -\frac{1}{n} h_{\mathsf x} f_{X}(x) \bigg[ h_{\mathsf x}^{2} h_{\mathsf y} G_{22}(K)f_{X}(x)\frac{\mathrm{d}}{\mathrm{d}x}f_{Y|X}(y,x)\nonumber\\
    &+\frac{1}{2}h_{\mathsf x}^{2} h_{\mathsf y}^{3}G_{12}(K)G_{22}(K)f_{X}(x)\frac{\mathrm{d}^{3}}{\mathrm{d}x\mathrm{d}y^{2}}f_{Y|X}(y,x)
    +O(h_{\mathsf x}^{6})+O(h_{\mathsf x}^{2} h_{\mathsf y}^{5}) \bigg] \nonumber\\
    =&\frac{h_{\mathsf x}^{2} h_{\mathsf y}}{n} f_{X}(x)\left[ G_{32}(K)-h_{\mathsf x} G_{22}(K)f_{X}(x) \right] \frac{\mathrm{d}}{\mathrm{d}x}f_{Y|X}(y,x)+O(\frac{h_{\mathsf x}^2 h_{\mathsf y}^3}{n})+O(\frac{h_{\mathsf x}^6}{n}),
\end{align}
if $h_{\mathsf x},h_{\mathsf y}\to 0$ and $n\to +\infty$.
Then, by substituting equation \eqref{eq:lamup_var}, \eqref{eq:lamup_mean} and \eqref{eq:lamup_cov} into equation \eqref{eq:mean_appro} and \eqref{eq:var_appro}, we can obtain the variance and mean of $\lambda$ as follows
\begin{align*}
    &\E[\lambda]    = \frac{ 1 }{n}h_{\mathsf x}h_{\mathsf y}G_{22}(K)\frac{\mathrm{d}}{\mathrm{d}x}f_{Y|X}(y,x)+O(\frac{h_{\mathsf x} h_{\mathsf y}^3}{n})+O(\frac{h_{\mathsf x}^4}{n})+O(\frac{h_{\mathsf y}}{n^2})+O(\frac{h_{\mathsf x}^3}{n^2}),
\end{align*}
and
\begin{align*}
    \Var[\lambda]
    = \frac{n-1}{n^{2}f_{X}(x)}h_{\mathsf y} G^{2}_{2,0}(K)G_{22}(K)f_{Y|X}(y,x)+O(\frac{h_{\mathsf x}h_{\mathsf y}^3}{n})+O(\frac{h_{\mathsf x}^3 h_{\mathsf y}}{n})+O(\frac{h_{\mathsf x}^5}{n})+O(\frac{h_{\mathsf y}}{h_{\mathsf x}n^2}).
\end{align*}

The next step is to
calculate the mean of $\lambda^*$.
To calculate its mean, we need to calculate the variance and mean of $K(u_{x_{i}})$ and $u_{x_{i}}K(u_{x_{i}})K^{2}(u_{x_{j}})K(u_{y_{j}})$, and their covariance.
Since by using Lemma~\ref{element_expect} we have 
\begin{align*}
    \E[u_{x_{i}}K(u_{x_{i}})K^{2}(u_{x_{j}})K(u_{y_{j}})]
    =&\frac{n-1}{n}\E[u_{x_{i}}K(u_{x_{i}})]\E[K^{2}(u_{x_{j}})K(u_{y_{j}})]
    +\frac{1}{n}\E[u_{x_{i}}K^{3}(u_{x_{i}})K(u_{y_{i}})]\\
    =&\frac{1}{n}h_{\mathsf x}^{2} h_{\mathsf y} G_{32}(K)f_{X}(x)\frac{\mathrm{d}}{\mathrm{d}x}f_{Y|X}(y,x)+O(\frac{h_{\mathsf x}^2 h_{\mathsf y}^3}{n})+O(\frac{h_{\mathsf x}^6}{n}),
\end{align*}
and 
\begin{align*}
    \Cov[u_{x_{i}}K(u_{x_{i}})&K^{2}(u_{x_{j}})K(u_{y_{j}}),K(u_{x_{i}})]\\
    =&\E[u_{x_{i}}K^{2}(u_{x_{i}})K^{2}(u_{x_{j}})K(u_{y_{j}})]
    -\E[u_{x_{i}}K(u_{x_{i}})K^{2}(u_{x_{j}})K(u_{y_{j}})]\E[K(u_{x_{i}})]\\
    =&\frac{n-1}{n}\E[u_{x_{i}}K^{2}(u_{x_{i}})]\E[K^{2}(u_{x_{j}})K(u_{y_{j}})]
    -\frac{n-1}{n}\E[u_{x_{i}}K(u_{x_{i}})]\E[K^{2}(u_{x_{j}})K(u_{y_{j}})]\E[K(u_{x_{i}})]\\
    &+\frac{1}{n}\E[u_{x_{i}}K^{4}(u_{x_{i}})K(u_{y_{i}})]
    -\frac{1}{n}\E[u_{x_{i}}K^{3}(u_{x_{i}})K(u_{y_{i}})]\E[K(u_{x_{i}})]\\
    =&\frac{h_{\mathsf x}^{2} h_{\mathsf y}}{n}f_{X}(x)[G_{42}(K)-h_{\mathsf x}G_{32}(K)f_{X}(x)]\frac{\mathrm{d}}{\mathrm{d}x}f_{Y|X}(y,x)+O(\frac{h_{\mathsf x}^2 h_{\mathsf y}^3}{n})+O(\frac{h_{\mathsf x}^6}{n}),
\end{align*}
the mean of $\lambda^{*}$ can be obtained as follows
\begin{align*}
    \E[\lambda^{*}]
    = \frac{ 1 }{n}h_{\mathsf x} h_{\mathsf y} G_{32}(K)\frac{\mathrm{d}}{\mathrm{d}x}f_{Y|X}(y,x)+O(\frac{h_{\mathsf x} h_{\mathsf y}^3}{n})+O(\frac{h_{\mathsf x}^5}{n})+O(\frac{h_{\mathsf y}}{n^2})+O(\frac{h_{\mathsf x}^4}{n^2}),
\end{align*}
if $h_{\mathsf x},h_{\mathsf y}\to 0$ and $n\to +\infty$.

Next, we need to calculate the mean of $\hat{\lambda}$.
Similarly, the variance and mean of $u_{x_{i}}K(u_{x_{i}})u_{x_{j}}K^{2}(u_{x_{j}})K^{2}(u_{y_{j}})$ and $K(u_{x_{i}})$, and their covariance, all need to be calculated.
By using Lemma~\ref{element_expect}, we have
\begin{align*}
    \Var[u_{x_{i}}K(u_{x_{i}})&u_{x_{j}}K^{2}(u_{x_{j}})K^{2}(u_{y_{j}})]\\
    =&\frac{n-1}{n} ( \E[u^{2}_{x_{i}}K^{2}(u_{x_{i}})]\E[u^{2}_{x_{j}}K^{4}(u_{x_{j}})K^{4}(u_{y_{j}})]
    -\E^{2}[u_{x_{i}}K(u_{x_{i}})]\E^{2}[u_{x_{j}}K^{2}(u_{x_{j}})K^{2}(u_{y_{j}})] )\\
    &+\frac{1}{n} ( \E[u^{4}_{x_{i}}K^{6}(u_{x_{i}})K^{4}(u_{y_{i}})]-\E^{2}[u^{2}_{x_{i}}K^{3}(u_{x_{i}})K^{2}(u_{y_{i}})] )\\
    =&\frac{n-1}{n}h_{\mathsf x}^{2} h_{\mathsf y} G_{22}(K)G_{40}(K)G_{42}(K)f^{2}_{X}(x)f_{Y|X}(y,x)+O(h_{\mathsf x}^2 h_{\mathsf y}^3)+O(h_{\mathsf x}^4 h_{\mathsf y})\\
    &+O(h_{\mathsf x}^6)+O(\frac{h_{\mathsf x} h_{\mathsf y}}{n})+O(\frac{h_{\mathsf x}^5}{n}),
\end{align*}
\begin{align*}
    \E[u_{x_{i}}K(u_{x_{i}})u_{x_{j}}K^{2}(u_{x_{j}})K^{2}(u_{y_{j}})]
    =&\frac{n-1}{n}\E[u_{x_{i}}K(u_{x_{i}})]\E[u_{x_{j}}K^{2}(u_{x_{j}})K^{2}(u_{y_{j}})]
    +\frac{1}{n}\E[u^{2}_{x_{i}}K^{3}(u_{x_{i}})K^{2}(u_{y_{i}})]\\
    =& \frac{h_{\mathsf x} h_{\mathsf y}}{n} G_{20}(K)G_{32}(K) f_{X}(x)f_{Y|X}(y,x)+O(\frac{h_{\mathsf x} h_{\mathsf y}^3}{n})+O(\frac{h_{\mathsf x}^5}{n})+O(\frac{h_{\mathsf x}^3 h_{\mathsf y}}{n}),
\end{align*}
and
\begin{align*}
\Cov[u_{x_{i}}K(u_{x_{i}})&u_{x_{j}}K^{2}(u_{x_{j}})K^{2}(u_{y_{j}}), K(u_{x_{i}})]\\
    =&\frac{n-1}{n}\E[u_{x_{i}}K^{2}(u_{x_{i}})]\E[u_{x_{j}}K^{2}(u_{x_{j}})K^{2}(u_{y_{j}})]
    +\frac{1}{n}\E[u^{2}_{x_{i}}K^{4}(u_{x_{i}})K^{2}(u_{y_{i}})]\\
    &-\frac{n-1}{n}\E[u_{x_{i}}K(u_{x_{i}})]\E[u_{x_{j}}K^{2}(u_{x_{j}})K^{2}(u_{y_{j}})]\E[K(u_{x_{i}})]\\
    &-\frac{1}{n}\E[u^{2}_{x_{i}}K^{3}(u_{x_{i}})K^{2}(u_{y_{i}})]\E[K(u_{x_{i}})]\\
    =&\frac{1}{n}h_{\mathsf x} h_{\mathsf y} G_{20}(K)f_{X}(x)f_{Y|X}(y,x)[ G_{42}(K)-h_{\mathsf x}G_{32}(K)f_{X}(x) ]\\
    &+O(\frac{h_{\mathsf x} h_{\mathsf y}^3}{n})+O(\frac{h_{\mathsf x}^3 h_{\mathsf y}}{n})+O(\frac{h_{\mathsf x}^5}{n}).
\end{align*}
Thus, the mean of $\hat{\lambda}$ can be obtained according to Lemma~\ref{baspro_cdf}, as follows
\begin{align*}
    \E[\hat{\lambda}]
    =& \frac{1}{n}h_{\mathsf x} G_{20}(K)G_{32}(K)f_{Y|X}(y,x)
    +O(\frac{h_{\mathsf y}^3}{n})+O(\frac{h_{\mathsf x}^2 h_{\mathsf y}}{n})+O(\frac{h_{\mathsf x}^4}{n})+O(\frac{h_{\mathsf y}}{h_{\mathsf x}n^2})+O(\frac{h_{\mathsf x}^3}{n^2}).
\end{align*}
if $h_{\mathsf x},h_{\mathsf y}\to 0$ and $n\to +\infty$.

Through combining the above results, we can obtain that 
\begin{align*}
    \Var[\lambda_{main} ]
    =&\frac{1}{h_{\mathsf x}^{2} h_{\mathsf y}^{2}}\Var[u_{x_{j}}K(u_{x_{j}})K(u_{y_{j}})]+O(\frac{1}{h_{\mathsf x}^2 h_{\mathsf y}n})+O(\frac{h_{\mathsf x}^3}{h_{\mathsf y}^2n})+O(\frac{1}{h_{\mathsf x}^3h_{\mathsf y}n^2})+O(\frac{h_{\mathsf x}^2}{h_{\mathsf y}^2n})\\
    =& \frac{1}{h_{\mathsf x}h_{\mathsf y}} G_{20}(K)G_{22}(K)f_{X}(x)f_{Y|X}(y,x)
    +O(\frac{h_{\mathsf y}}{h_{\mathsf x}})+O(\frac{h_{\mathsf x}}{h_{\mathsf y}})+O(\frac{1}{h_{\mathsf x}^2 h_{\mathsf y}n})\\
    &+O(\frac{h_{\mathsf x}^3}{h_{\mathsf y}^2 n})+O(\frac{1}{h_{\mathsf x}^3 h_{\mathsf y} n^2})+O(\frac{h_{\mathsf x}^2}{h_{\mathsf y}^2 n^2}) ,
\end{align*}
\begin{align*}
    \E[\lambda_{main}]
    =&h_{\mathsf x} f_{X}(x)\frac{\mathrm{d}}{\mathrm{d}x}f_{Y|X}(y,x)+O(h_{\mathsf x} h_{\mathsf y}^2)+O(\frac{h_{\mathsf x}^5}{h_{\mathsf y}})+O(\frac{1}{n})+O(\frac{h_{\mathsf x}^3}{h_{\mathsf y}n})+O(\frac{1}{h_{\mathsf x}n^2})+O(\frac{h_{\mathsf x}^2}{h_{\mathsf y}n^2}),
\end{align*}
and
\begin{align*}
    \Cov[\lambda_{main}]
    =& \frac{1}{h_{\mathsf x} h_{\mathsf y}} \E[ u_{x_{j}}K^{2}(u_{x_{j}})K(u_{y_{j}})]
    -\frac{1}{h_{\mathsf x} h_{\mathsf y}}\E[ u_{x_{j}}K(u_{x_{j}})K(u_{y_{j}})] \E[ K(u_{x_{j}}) ]\\
    &+O(\frac{1}{n})+O(\frac{h_{\mathsf x}^4}{h_{\mathsf y}n})+O(\frac{h_{\mathsf y}^2}{n^2})+O(\frac{h_{\mathsf x}^3}{h_{\mathsf y}n^2})\\
    =&  h_{\mathsf x} G_{22}(K)f_{X}(x)\frac{\mathrm{d}}{\mathrm{d}x}f_{Y|X}(y,x)
    -h_{\mathsf x}^{2}G_{12}(K)f^{2}_{X}(x) \frac{\mathrm{d}}{\mathrm{d}x}f_{Y|X}(y,x)+O(h_{\mathsf x} h_{\mathsf y}^2)+O(\frac{h_{\mathsf x}^5}{h_{\mathsf y}})\\
    &+O(\frac{1}{n})+O(\frac{h_{\mathsf x}^4}{h_{\mathsf y}n})+O(\frac{h_{\mathsf y}^2}{n^2})+O(\frac{h_{\mathsf x}^3}{h_{\mathsf y}n^2}),
\end{align*}
if $h_{\mathsf x},h_{\mathsf y}\to 0$ and $n\to +\infty$.

Finally, based on Lemma~\ref{baspro_cdf}, we can have 
\begin{align*}
    \Var\left[\frac{\mathrm{d}}{\mathrm{d}x}\hat{f}_{Y|X}(y,x)\right]
    =&\frac{1}{nh_{\mathsf x}^{3}h_{\mathsf y}f_{X}(x)} G_{20}(K)G_{22}(K)f_{Y|X}(y,x)
    +O(\frac{h_{\mathsf y}}{nh_{\mathsf x}^3})+O(\frac{1}{nh_{\mathsf x}h_{\mathsf y}})+O(\frac{h_{\mathsf x}^7}{nh_{\mathsf y}^2})+O(\frac{1}{n^2h_{\mathsf x}^4 h_{\mathsf y}^2})\\
    &+O(\frac{1}{n^3h_{\mathsf x}^5 h_{\mathsf y}^2})+O(\frac{1}{n^4h_{\mathsf x}^4 h_{\mathsf y}^2})+O(\frac{1}{n^5h_{\mathsf x}^5 h_{\mathsf y}^2})\\
    =&\frac{1}{nh_{\mathsf x}^{3} h_{\mathsf y}f_{X}(x)} G_{20}(K)G_{22}(K)f_{Y|X}(y,x)+O(\frac{h_{\mathsf y}}{nh_{\mathsf x}^{3}})+O(\frac{1}{nh_{\mathsf x}h_{\mathsf y}})+O(\frac{h_{\mathsf x}^7}{nh_{\mathsf y}^{2}})+O(\frac{1}{n^{2}h_{\mathsf x}^5h_{\mathsf y}^2})
\end{align*}
Also, we have that for large $n$ if $nh_{\mathsf x}^{3}h_{\mathsf y}\to +\infty$ and $h_{\mathsf x},h_{\mathsf y}\to 0$ as $n\to +\infty $, then 
$$
    \Var\left[\frac{\mathrm{d}}{\mathrm{d}x}\hat{f}_{Y|X}(y,x)\right]\lesssim \frac{1}{nh_{\mathsf x}^{3}h_{\mathsf y}}C_{1},
$$
where \[C_{1}=\Vol(D_X)(x)G_{20}(K)G_{22}(K)\max_{(x,y)\in D_X\times D_Y}f_{Y|X}(y,x),\]
and $\Vol(\cdot)$ indicates the volume (Lebesgue measure) of a set.
\end{proof}

\subsubsection{The Proof of Lemma~\ref{le_bias}}
\begin{proof}
Based on equation~\eqref{bias_from1996}, if $h_{\mathsf x}\to 0$, $h_{\mathsf y}\to 0$ and $n\to +\infty$,
we can derived 
\begin{align}\label{eq_bias}
    \E\left[\frac{\mathrm{d}}{\mathrm{d}x}\hat{f}_{Y|X}(y,x)\right]-\frac{\mathrm{d}}{\mathrm{d}x}f_{Y|X}(y,x)
    =&\frac{h_{\mathsf x}^{2}G_{12}(K)}{2}\left[\frac{h_{\mathsf y}^{2}}{h_{\mathsf x}^{2}}\frac{\mathrm{d}^{3}}{\mathrm{d}x \mathrm{d}y^{2}}f_{Y|X}(y,x)+\frac{\mathrm{d}^{3}}{\mathrm{d}x^{3}}f_{Y|X}(y,x)\right]+O(h_{\mathsf x}^{2}h_{\mathsf y}^{2})\nonumber\\
    &+O(h_{\mathsf y}^{4})+O(h_{\mathsf x}^{4})+O(\frac{1}{nh_{\mathsf x}})
   \lesssim \frac{h_{\mathsf x}^{2}}{2}A,
\end{align}
where 
\[A:=G_{12}(K)\left[\frac{h_{\mathsf y}^{2}}{h_{\mathsf x}^{2}}\max_{(x,y)\in D_X\times D_Y}\left|\frac{\mathrm{d}^{3}}{\mathrm{d}x \mathrm{d}y^{2}}f_{Y|X}(y,x)\right|+\max_{(x,y)\in D_X\times D_Y}\left|\frac{\mathrm{d}^{3}}{\mathrm{d}x^{3}}f_{Y|X}(y,x)\right|\right].\]
\end{proof}

\subsubsection{The Proof of Theorem~\ref{le_mse}}
\begin{proof}
Based on Lemmas~\ref{le_vari}--\ref{le_bias} and the following property
\begin{align*}
  &\E\left[\left(\frac{\mathrm{d}}{\mathrm{d}x}\hat{f}_{Y|X}(y,x)-\frac{\mathrm{d}}{\mathrm{d}x}f_{Y|X}(y,x)\right)^{2}\right]\\
  =&\E\left[\left(\frac{\mathrm{d}}{\mathrm{d}x}\hat{f}_{Y|X}(y,x)-\E[\frac{\mathrm{d}}{\mathrm{d}x}\hat{f}_{Y|X}(y,x)]\right)^{2}\right]
  +\left(\E\left[\frac{\mathrm{d}}{\mathrm{d}x}\hat{f}_{Y|X}(y,x)\right]-\frac{\mathrm{d}}{\mathrm{d}x}f_{Y|X}(y,x)\right)^{2},
\end{align*}
we have that for large $n$ if $nh_{\mathsf x}^{3}h_{\mathsf y}\to +\infty$ and $h_{\mathsf x},~h_{\mathsf y}\to 0$ as $n\to +\infty$
\begin{align*}
    \E\left[\left(\frac{\mathrm{d}}{\mathrm{d}x}\hat{f}_{Y|X}(y,x)-\frac{\mathrm{d}}{\mathrm{d}x}f_{Y|X}(y,x)\right)^{2}\right] \lesssim \epsilon_3, \text{ where $\epsilon_3:=\frac{C_{1}}{nh_{\mathsf x}^{3}h_{\mathsf y}}+\frac{h_{\mathsf x}^{4}}{4}A^{2}.$}
\end{align*}
\end{proof}

\subsubsection{The Proof of Theorem~\ref{LC_main}}
\begin{proof}
In order to give the bound, we need to introduce a sequence $\{(x_{j},y_{j})| \| (x_{j+1},y_{j+1})-(x_{j},y_{j})\|=\frac{D_B}{M},j=1,\ldots,M, M=n, \text{ and } D_B:=\max_{(x,y),(x',y')\in D_{X}\times D_{Y} }\|(x,y)-(x',y') \|\}$.
Then, the following inequality can be obtained
\begin{align}\label{bias_finaline}
    &\E\left[\max_{(x,y)\in D_{X}\times D_{Y}}\left|\frac{\mathrm{d}}{\mathrm{d}x}\hat{f}_{Y|X}(y,x)\right |\right]-\max_{(x,y)\in D_{X}\times D_{Y}} \left|\frac{\mathrm{d}}{\mathrm{d}x}f_{Y|X}(y,x)\right|  \nonumber\\
    \leq & \E\left[\max_{1\leq j\leq M} \left|\frac{\mathrm{d}}{\mathrm{d}x}\hat{f}_{Y|X}(y_{j},x_{j})\right|\right]-\max_{1\leq j\leq M} \left|\frac{\mathrm{d}}{\mathrm{d}x}f_{Y|X}(y,x)\right| \nonumber\\
    &+\E\left[\max_{\substack{(x,y),(x',y')\\ \in D_{X}\times D_{Y}:\\ \| (x,y)-(x',y') \|\leq \frac{D_B}{M} }}\left| \frac{\mathrm{d}}{\mathrm{d}x}\hat{f}_{Y|X}(y,x)-\frac{\mathrm{d}}{\mathrm{d}x}\hat{f}_{Y|X}(y',x') \right|\right]
\end{align}

To get the upper bound of the second line of equation \eqref{bias_finaline}, we need to consider 
\begin{align*}
     &\exp{ \left( t\E\left[\max_{1\leq j\leq M} \left|\frac{\mathrm{d}}{\mathrm{d}x}\hat{f}_{Y|X}(y_{j},x_{j})\right|\right]-t\max_{1\leq j\leq M} \left|\frac{\mathrm{d}}{\mathrm{d}x}f_{Y|X}(y_{j},x_{j})\right| \right) }\\
     =&\exp{ \E\left[ t\max_{1\leq j\leq M} \left|\frac{\mathrm{d}}{\mathrm{d}x}\hat{f}_{Y|X}(y_{j},x_{j})\right|-t\max_{1\leq j\leq M} \left|\frac{\mathrm{d}}{\mathrm{d}x}f_{Y|X}(y_{j},x_{j})\right|\right] }, \text{ where $t>0$.}
\end{align*}

Then, we can divide equality into two different situations: $ D_{\hat{f}}\ge 0$ and $  D_{\hat{f}}< 0$, where $D_{\hat{f}}:=\frac{\mathrm{d}}{\mathrm{d}x}\hat{f}_{Y|X}(y_{j^*},x_{j^*})$ and  $(x_{j^*},y_{j^*})=\argmax_{1\leq j\leq M}\left| \frac{\mathrm{d}}{\mathrm{d}x}\hat{f}_{Y|X}(y_{j},x_{j}) \right|$.
Thus, we have
\begin{align}\label{expon_bias}
    &\exp{ \E\left[ \max_{1\leq j\leq M} t\left|\frac{\mathrm{d}}{\mathrm{d}x}\hat{f}_{Y|X}(y_{j},x_{j})\right|-\max_{1\leq j\leq M} t\left|\frac{\mathrm{d}}{\mathrm{d}x}f_{Y|X}(y_{j},x_{j})\right|\right] }\nonumber\\
    =& \exp{ \E\left[  t\left|\frac{\mathrm{d}}{\mathrm{d}x}\hat{f}_{Y|X}(y_{j^*},x_{j^*})\right|\right] }   \exp{ \E\left[   -\max_{1\leq j\leq M} t\left|\frac{\mathrm{d}}{\mathrm{d}x}f_{Y|X}(y_{j},x_{j})\right| \right]  }    \nonumber\\
    =&\exp{ \bigg[ \E\left[  t\left|\frac{\mathrm{d}}{\mathrm{d}x}\hat{f}_{Y|X}(y_{j^*},x_{j^*})\right| \mid D_{\hat{f}}\ge 0\right] P(D_{\hat{f}}\ge 0) \bigg]}\nonumber\\
    &\exp{\bigg[ \E\left[  t\left|\frac{\mathrm{d}}{\mathrm{d}x}\hat{f}_{Y|X}(y_{j^*},x_{j^*})\right| \mid D_{\hat{f}}< 0\right]P(D_{\hat{f}}<0)\bigg] }  \exp{ \E\left[   -\max_{1\leq j\leq M} t\left|\frac{\mathrm{d}}{\mathrm{d}x}f_{Y|X}(y_{j},x_{j})\right| \right]  }\nonumber\\
    =&\exp{ \left(   -\max_{1\leq j\leq M} t\left|\frac{\mathrm{d}}{\mathrm{d}x}f_{Y|X}(y_{j},x_{j})\right| \right)  } \bigg[\exp{ \E\left[  t\left|\frac{\mathrm{d}}{\mathrm{d}x}\hat{f}_{Y|X}(y_{j^*},x_{j^*})\right| \mid D_{\hat{f}}\ge 0\right]  } \bigg]^{P(D_{\hat{f}}\ge 0)}\nonumber\\
    &\bigg[\exp{ \E\left[  t\left|\frac{\mathrm{d}}{\mathrm{d}x}\hat{f}_{Y|X}(y_{j^*},x_{j^*})\right| \mid D_{\hat{f}}< 0\right]  } \bigg] ^{P(D_{\hat{f}}<0)}.
\end{align}

In addition, we have 
\begin{align}\label{exp_part}
    &\exp{ \E\left[ \max_{1\leq j\leq M} t\left|\frac{\mathrm{d}}{\mathrm{d}x}\hat{f}_{Y|X}(y_{j},x_{j})\right|\right] }\nonumber\\
    \leq& \bigg[ \E\left[ \exp\left(  t\left|\frac{\mathrm{d}}{\mathrm{d}x}\hat{f}_{Y|X}(y_{j^*},x_{j^*})\right| \right) \mid D_{\hat{f}}\ge 0 \right]   \bigg]^{P(D_{\hat{f}}\ge 0)}
    \bigg[ \E\left[ \exp\left(  t\left|\frac{\mathrm{d}}{\mathrm{d}x}\hat{f}_{Y|X}(y_{j^*},x_{j^*})\right|\right) \mid D_{\hat{f}}< 0 \right]  \bigg] ^{P(D_{\hat{f}}<0)}\nonumber\\
    =&\bigg[ \E\left[ \exp\left(  t\frac{\mathrm{d}}{\mathrm{d}x}\hat{f}_{Y|X}(y_{j^*},x_{j^*}) \right) \mid D_{\hat{f}}\ge 0 \right]   \bigg]^{P(D_{\hat{f}}\ge 0)}
    \bigg[ \E\left[ \exp\left(  -t\frac{\mathrm{d}}{\mathrm{d}x}\hat{f}_{Y|X}(y_{j^*},x_{j^*})\right) \mid D_{\hat{f}}< 0 \right]  \bigg] ^{P(D_{\hat{f}}<0)}\nonumber\\
    \leq&\sum_{j=1}^{M} \bigg[ \E\left[ \exp\left(  t\frac{\mathrm{d}}{\mathrm{d}x}\hat{f}_{Y|X}(y_{j},x_{j}) \right) \mid D_{\hat{f}}\ge 0 \right]   \bigg]^{P(D_{\hat{f}}\ge 0)}
    \cdot \bigg[ \E\left[ \exp\left(  -t\frac{\mathrm{d}}{\mathrm{d}x}\hat{f}_{Y|X}(y_{j},x_{j})\right) \mid D_{\hat{f}}< 0 \right]  \bigg] ^{P(D_{\hat{f}}<0)}\nonumber\\
    \leq&M \max_{1\leq j\leq M} \bigg[ \E\left[ \exp\left(  t\frac{\mathrm{d}}{\mathrm{d}x}\hat{f}_{Y|X}(y_{j},x_{j}) \right) \mid D_{\hat{f}}\ge 0 \right]   \bigg]^{P(D_{\hat{f}}\ge 0)}
    \cdot \bigg[ \E\left[ \exp\left(  -t\frac{\mathrm{d}}{\mathrm{d}x}\hat{f}_{Y|X}(y_{j},x_{j})\right) \mid D_{\hat{f}}< 0 \right]  \bigg] ^{P(D_{\hat{f}}<0)}.
\end{align}

At the next step, we need to approximate $\E\left[ \exp{ (t \frac{\mathrm{d}}{\mathrm{d}x}\hat{f}_{Y|X}(y_{j},x_{j}))} \right]$ using the second-order Taylor expansion. Now assuming that $z:=\frac{\mathrm{d}}{\mathrm{d}x}\hat{f}_{Y|X}(y_{j},x_{j})$
which is a random variable, and 
\begin{align}\label{part_meanvar}
    &u_{lc1}:=\E[z\mid D_{\hat{f}}\ge 0],~v_{lc1}:=\Var [z\mid D_{\hat{f}}\ge 0]=\E[(z-u_{lc1})^2\mid D_{\hat{f}}\ge 0],\nonumber\\
    &u_{lc2}:=\E[z\mid D_{\hat{f}}< 0],~v_{lc2}:=\Var [z\mid D_{\hat{f}}<0]=\E[(z-u_{lc2})^2\mid D_{\hat{f}}< 0].
\end{align}
Then, the second-order Taylor expansion around the means and variances, as shown in equation \eqref{part_meanvar}, can be obtained, as following,
\begin{align}\label{condi_expecta1}
    &\E\left[ \exp{ (t \frac{\mathrm{d}}{\mathrm{d}x}\hat{f}_{Y|X}(y_{j},x_{j}))} \mid D_{\hat{f}}\ge 0 \right]=\E\left[ \exp{ (t z)} \mid D_{\hat{f}}\ge 0 \right]\nonumber\\
    \approx&\E\left[ \exp(t u_{lc1})+\frac{\mathrm{d}}{\mathrm{d}z}\exp(t u_{lc1})(z-u_{lc1})  +\frac{1}{2}\frac{\mathrm{d}^2}{\mathrm{d}z^2}\exp(t u_{lc1}) (z-u_{lc1})^2 \mid D_{\hat{f}}\ge 0\right]\nonumber\\
    =&\exp(t u_{lc1})+\frac{1}{2}\frac{\mathrm{d}^2}{\mathrm{d}z^2}\exp(t u_{lc1}) v_{lc1}=\exp(t u_{lc1})+\frac{1}{2}t^2\exp(t u_{lc1}) v_{lc1},
\end{align}
and
\begin{align}\label{condi_expecta2}
    &\E\left[ \exp{ (-t \frac{\mathrm{d}}{\mathrm{d}x}\hat{f}_{Y|X}(y_{j},x_{j}))} \mid D_{\hat{f}}< 0 \right]=\E\left[ \exp{ (-t z)} \mid D_{\hat{f}}< 0 \right]\nonumber\\
    \approx&\E\left[ \exp(-t u_{lc2})+\frac{\mathrm{d}}{\mathrm{d}z}\exp(-t u_{lc2})(z-u_{lc2})  +\frac{1}{2}\frac{\mathrm{d}^2}{\mathrm{d}z^2}\exp(-t u_{lc2}) (z-u_{lc2})^2 \mid D_{\hat{f}}> 0\right]\nonumber\\
    =&\exp(-t u_{lc2})+\frac{1}{2}\frac{\mathrm{d}^2}{\mathrm{d}z^2}\exp(-t u_{lc2}) v_{lc2}=\exp(-t u_{lc2})+\frac{1}{2}t^2\exp(-t u_{lc2}) v_{lc2}.
\end{align}
Thus, through plugging the conditional expectation \eqref{condi_expecta1} and \eqref{condi_expecta2} into equation \eqref{exp_part} ,we can have 
\begin{align}\label{impor_step}
     &\exp{ \E\left[ \max_{1\leq j\leq M} t\left|\frac{\mathrm{d}}{\mathrm{d}x}\hat{f}_{Y|X}(y_{j},x_{j})\right|\right] }\nonumber\\
     \lesssim& M\! \max_{1\leq j\leq M}\!  \left[ \exp(t u_{lc1})\!+\!\frac{1}{2}t^2\exp(t u_{lc1}) v_{lc1} \right]^{P(D_{\hat{f}}\ge 0)} \! \left[ \exp(-t u_{lc2})\!+\!\frac{1}{2}t^2\exp(-t u_{lc2}) v_{lc2} \right]^{P(D_{\hat{f}}< 0)}\nonumber\\
     \leq&M\max_{1\leq j\leq M} \bigg[ \exp\left( t |u_{lc1}|P(D_{\hat{f}}\ge 0)+t |u_{lc2}|P(D_{\hat{f}}< 0) \right) \nonumber\\
     &\cdot\left[ 1+\frac{1}{2}t^2 v_{lc1} \right]^{P(D_{\hat{f}}\ge 0)}
     \left[ 1+\frac{1}{2}t^2 v_{lc2} \right]^{P(D_{\hat{f}}< 0)}  \bigg]\nonumber\\
     =&M\max_{1\leq j\leq M} \bigg[ \exp\left( t|\E[z\mid D_f\ge 0]| P(D_{\hat{f}}\ge 0)+t |\E[z\mid D_f< 0]|P(D_{\hat{f}}< 0) \right) \nonumber\\
     &\cdot\left[ 1+\frac{1}{2}t^2 v_{lc1} \right]^{P(D_{\hat{f}}\ge 0)}
     \left[ 1+\frac{1}{2}t^2 v_{lc2} \right]^{P(D_{\hat{f}}< 0)}  \bigg]\nonumber\\
     \leq& {\color{black}M\max_{1\leq j\leq M} \bigg[ \exp\left( t\E[|z|\mid D_{\hat{f}}\ge 0] P(D_{\hat{f}}\ge 0)+t \E[|z|\mid D_{\hat{f}}< 0]P(D_{\hat{f}}< 0) \right) } \nonumber\\
     &\cdot\left[ 1+\frac{1}{2}t^2 v_{lc1} \right]^{P(D_{\hat{f}}\ge 0)}
     \left[ 1+\frac{1}{2}t^2 v_{lc2} \right]^{P(D_{\hat{f}}< 0)}  \bigg]\nonumber\\
     =&M\max_{1\leq j\leq M} \bigg[ \exp\left( t\E[|z|] \right) \left[ 1+\frac{1}{2}t^2 v_{lc1} \right]^{P(D_{\hat{f}}\ge 0)}
     \left[ 1+\frac{1}{2}t^2 v_{lc2} \right]^{P(D_{\hat{f}}< 0)}  \bigg].
\end{align}

Now, by substituting equation \eqref{impor_step} into equation \eqref{expon_bias}, we have 
\begin{align*}
    &\exp{ \E\left[ \max_{1\leq j\leq M} t\left|\frac{\mathrm{d}}{\mathrm{d}x}\hat{f}_{Y|X}(y_{j},x_{j})\right|-\max_{1\leq j\leq M} t\left|\frac{\mathrm{d}}{\mathrm{d}x}f_{Y|X}(y_{j},x_{j})\right|\right] }\nonumber\\
    \lesssim& M\max_{1\leq j\leq M} \bigg[ \exp\left( t\E[|z|] \right) \left[ 1+\frac{1}{2}t^2 v_{lc1} \right]^{P(D_{\hat{f}}\ge 0)}
     \left[ 1+\frac{1}{2}t^2 v_{lc2} \right]^{P(D_{\hat{f}}< 0)}  \bigg]\\
     &\cdot \exp{ \left(   -\E\left[\max_{1\leq j\leq M} t\left|\frac{\mathrm{d}}{\mathrm{d}x}f_{Y|X}(y_{j},x_{j})\right| \right] \right)   }\\
     =&M\max_{1\leq j\leq M} \bigg[ \exp\left( t\E\left[\left|z-\frac{\mathrm{d}}{\mathrm{d}x}f_{Y|X}(y_{j},x_{j})+\frac{\mathrm{d}}{\mathrm{d}x}f_{Y|X}(y_{j},x_{j})\right|\right] \right) \\
     &\cdot\left[ 1+\frac{1}{2}t^2 v_{lc1} \right]^{P(D_{\hat{f}}\ge 0)}
     \left[ 1+\frac{1}{2}t^2 v_{lc2} \right]^{P(D_{\hat{f}}< 0)}  \bigg]\exp{ \left(   -\max_{1\leq j\leq M} t\left|\frac{\mathrm{d}}{\mathrm{d}x}f_{Y|X}(y_{j},x_{j})\right| \right)  }\\
     \leq&M\max_{1\leq j\leq M} \bigg[ \exp\left( t\E\left[\left|z-\frac{\mathrm{d}}{\mathrm{d}x}f_{Y|X}(y_{j},x_{j})\right|\right]+t\E\left[\left|\frac{\mathrm{d}}{\mathrm{d}x}f_{Y|X}(y_{j},x_{j})\right|\right] \right)\\
     &\cdot\left[ 1+\frac{1}{2}t^2 v_{lc1} \right]^{P(D_{\hat{f}}\ge 0)}
     \left[ 1+\frac{1}{2}t^2 v_{lc2} \right]^{P(D_{\hat{f}}< 0)}  \bigg]\exp{ \left(   -\max_{1\leq j\leq M} t\left|\frac{\mathrm{d}}{\mathrm{d}x}f_{Y|X}(y_{j},x_{j})\right| \right)  }\\
     \leq&M\max_{1\leq j\leq M} \bigg[ \exp\left( t\E\left[\left|z-\frac{\mathrm{d}}{\mathrm{d}x}f_{Y|X}(y_{j},x_{j})\right|\right] \right)\left[ 1+\frac{1}{2}t^2 v_{lc1} \right]^{P(D_{\hat{f}}\ge 0)}
     \left[ 1+\frac{1}{2}t^2 v_{lc2} \right]^{P(D_{\hat{f}}< 0)}  \bigg]\\
     &\cdot \exp\left( t\E\left[\max_{1\leq j\leq M}\left|\frac{\mathrm{d}}{\mathrm{d}x}f_{Y|X}(y_{j},x_{j})\right|\right] \right)\exp{ \left(   -\E\left[\max_{1\leq j\leq M} t\left|\frac{\mathrm{d}}{\mathrm{d}x}f_{Y|X}(y_{j},x_{j})\right| \right] \right)  }\\
     =&M\max_{1\leq j\leq M} \bigg[ \exp\left( t\E\left[\left|z-\frac{\mathrm{d}}{\mathrm{d}x}f_{Y|X}(y_{j},x_{j})\right|\right] \right)\left[ 1+\frac{1}{2}t^2 v_{lc1} \right]^{P(D_{\hat{f}}\ge 0)}
     \left[ 1+\frac{1}{2}t^2 v_{lc2} \right]^{P(D_{\hat{f}}< 0)}  \bigg]\\
     \leq& M \exp\left( \max_{1\leq j\leq M}  t\left(\E\left[(z-\frac{\mathrm{d}}{\mathrm{d}x}f_{Y|X}(y_{j},x_{j}))^2\right]\right)^{1/2} \right) \\
     &\cdot \max\left\{  \max_{1\leq j\leq M}\left|1+\frac{1}{2}t^2v_{lc1}\right|, \max_{1\leq j\leq M}\left|1+\frac{1}{2}t^2v_{lc2}\right|  \right\}.
\end{align*}
Let $C_2=\max_{1\leq j\leq M}|v_{lc1}|$ and $C_3=\max_{1\leq j\leq M}|v_{lc2} |$.
By taking logarithm and then dividing $t$ to both sides of the above inequality, we have 
\begin{align*}
    &\E\left[ \max_{1\leq j\leq M} \left|\frac{\mathrm{d}}{\mathrm{d}x}\hat{f}_{Y|X}(y_{j},x_{j})\right|-\max_{1\leq j\leq M} \left|\frac{\mathrm{d}}{\mathrm{d}x}f_{Y|X}(y_{j},x_{j})\right|\right] \\
    \lesssim& \max_{1\leq j\leq M}  \left(\E\left[(z-\frac{\mathrm{d}}{\mathrm{d}x}f_{Y|X}(y_{j},x_{j}))^2\right]\right)^{1/2}+\frac{\log(M)}{t}+\frac{\log(1+C_4 t^2)}{t}\\
    \approx&\max_{1\leq j\leq M}  \left(\E\left[(z-\frac{\mathrm{d}}{\mathrm{d}x}f_{Y|X}(y_{j},x_{j}))^2\right]\right)^{1/2},
\end{align*}
where $t$ is arbitrary positive value and $C_4=\max\{C_2,C_3 \}$.

Since 
\begin{align*}
    \E\left[\max_{\substack{(x,y),(x',y')\\ \in D_{X}\times D_{Y}:\\ \| (x,y)-(x',y') \|\leq \frac{D_B}{M} }}\left| \frac{\mathrm{d}}{\mathrm{d}x}\hat{f}_{Y|X}(y,x)-\frac{\mathrm{d}}{\mathrm{d}x}\hat{f}_{Y|X}(y',x') \right|\right]\leq \hat{L}_{df}\frac{D_{B}}{M},
\end{align*}
where $\hat{L}_{df}$ is the LC of $\frac{\mathrm{d}}{\mathrm{d}x}\hat{f}_{Y|X}(y,x)$ among $D_{X}\times D_{Y}$, according to Theorem~\ref{le_mse}, we have
\begin{align*}
     &\E\left[\max_{(x,y)\in D_{X}\times D_{Y}}\left|\frac{\mathrm{d}}{\mathrm{d}x}\hat{f}_{Y|X}(y,x)\right|\right]-\max_{(x,y)\in D_{X}\times D_{Y}} \left|\frac{\mathrm{d}}{\mathrm{d}x}f_{Y|X}(y,x)\right|  \nonumber\\
     \lesssim& \max_{1\leq j\leq M}  \left(\E\left[(\frac{\mathrm{d}}{\mathrm{d}x}\hat{f}_{Y|X}(y_{j},x_{j})-\frac{\mathrm{d}}{\mathrm{d}x}f_{Y|X}(y_{j},x_{j}))^2\right]\right)^{1/2}
     \lesssim \epsilon_3^{1/2}.
\end{align*}

\end{proof}

\subsection{Validation Study of the LC Estimation}\label{supp_validation} 
In this section, we apply our estimation algorithm to two case studies. We show that our approach gives better results in comparison with the standard selection of bandwidths using Scott's formula. The CV method for selecting the bandwidths did not provide any solution for the optimisation in equation \eqref{cross-valid} within 6 hours. 

\begin{exmp}\label{examp1}
Consider a univariate $(Y|X)$ such that $Y=aX+w$, where $w$ is a Gaussian noise with mean $\mu$ and variance $\sigma^2$. The CoDF $f_{Y|X}(y,x)$ is 
\begin{equation*}
    f_{Y|X}(y,x)=\frac{1}{\sigma\sqrt{2\pi}}
    \exp{\left[-\frac{1}{2}\frac{ ( y-ax )^{2} }{ \sigma^2}\right]}.
\end{equation*}
We fix the parameters $a=0.5$, $\mu=0$, $\sigma=1$, and the domain $D_X=[-1,1]$ and $D_Y=[-4.38,4.24]$. We assume that Assumption~\ref{ass_1} holds with $C_{f}=1$ and $C_{b1}=C_{b2} = 0.5$. We run Algorithm~\ref{algo_enviro} with $m = 20$.

\begin{figure}[hbt!]
\centering
\includegraphics[width=\textwidth]{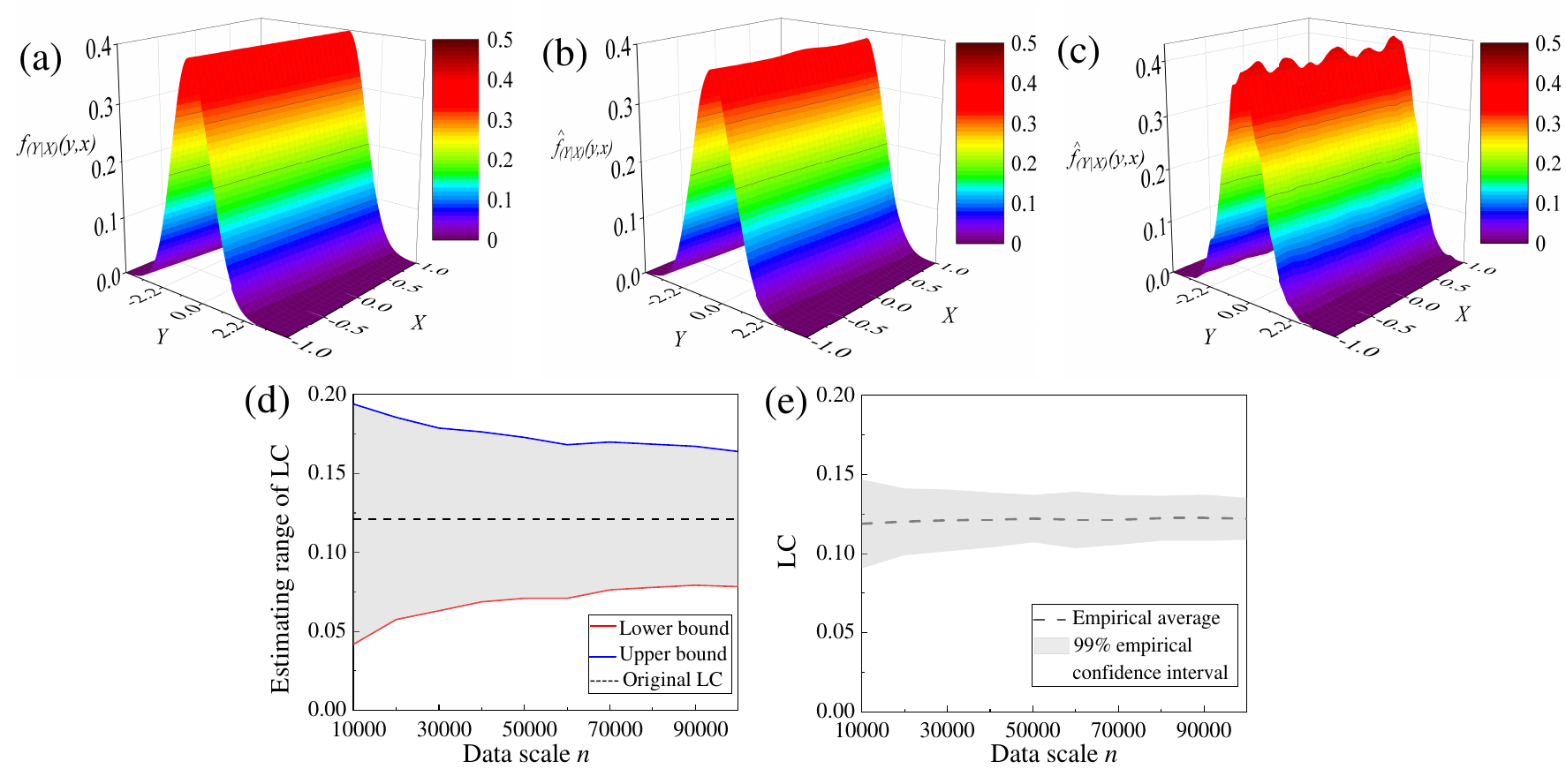}
\caption{(a) The original CoDF. (b) Estimated CoDF with $h_{\mathsf x}=h_{\mathsf y}=n^{-\frac{1}{8}}$ and data scale $n=6\times 10^{4}$.
(c) Estimated CoDF with bandwidth from the Scott's formula and data scale $n=6\times 10^{4}$.
(d) Asymptotic bound on the original LC provided by Theorem~\ref{LC_main} as a function of data scale $n$. The dashed line is the original LC of the CoDF, $L=0.1210$.
(e) The estimated LC averaged over 150 computations with the grey area indicating the 99\% empirical confidence interval (3 times the empirical standard deviation from the mean).}
\label{case1}
\end{figure}

Fig.~\ref{case1} shows the original CoDF together with the estimated CoDF using the bandwidth from Theorem~\ref{LC_main} and from the Scott's formula. The estimated CoDF based on the Scott's formula in Fig.~\ref{case1}(c) has many values larger than $0.42$ which are much higher than the largest value of the original CoDF, at $f_{Y|X}(y,x)=0.3989$ as shown in Fig.~\ref{case1}(a).
This estimation presents bad smoothness in comparison with the estimation using our theoretical bandwidths $h_{\mathsf x}=h_{\mathsf y}=n^{-\frac{1}{8}}$, which has a smooth peak around $0.381$ as shown in Fig.~\ref{case1}(b).
Thus, selecting of bandwidths according to the discussion in Section~\ref{theo_onedimen} gives a better smoothness than Scott's formula.

Fig.~\ref{case1}(d) gives the
asymptotic bounds on the original LC provided by Theorem~\ref{LC_main} as a function of data scale $n\in [10^4, 10^5]$.
For example, $L\in [0.042,0.194]$ using $n=10^4$,
$L\in [0.071,0.168]$ using
$n=6\times 10^4$,
and
$L\in [0.078,0.164]$
using $n=10^5$,
which confirms asymptotic convergence of the bound as a function of $n$.
The empirical mean and the 99\% empirical confidence interval are shown in Fig.~\ref{case1}(e) using 150 runs of the algorithm for each $n$.
All the values are below the analytical upper bound shown in Fig.~\ref{case1}(d).

\end{exmp}

\begin{exmp}\label{examp2}
\begin{figure}[hbt!]
\centering
\includegraphics[width=\textwidth]{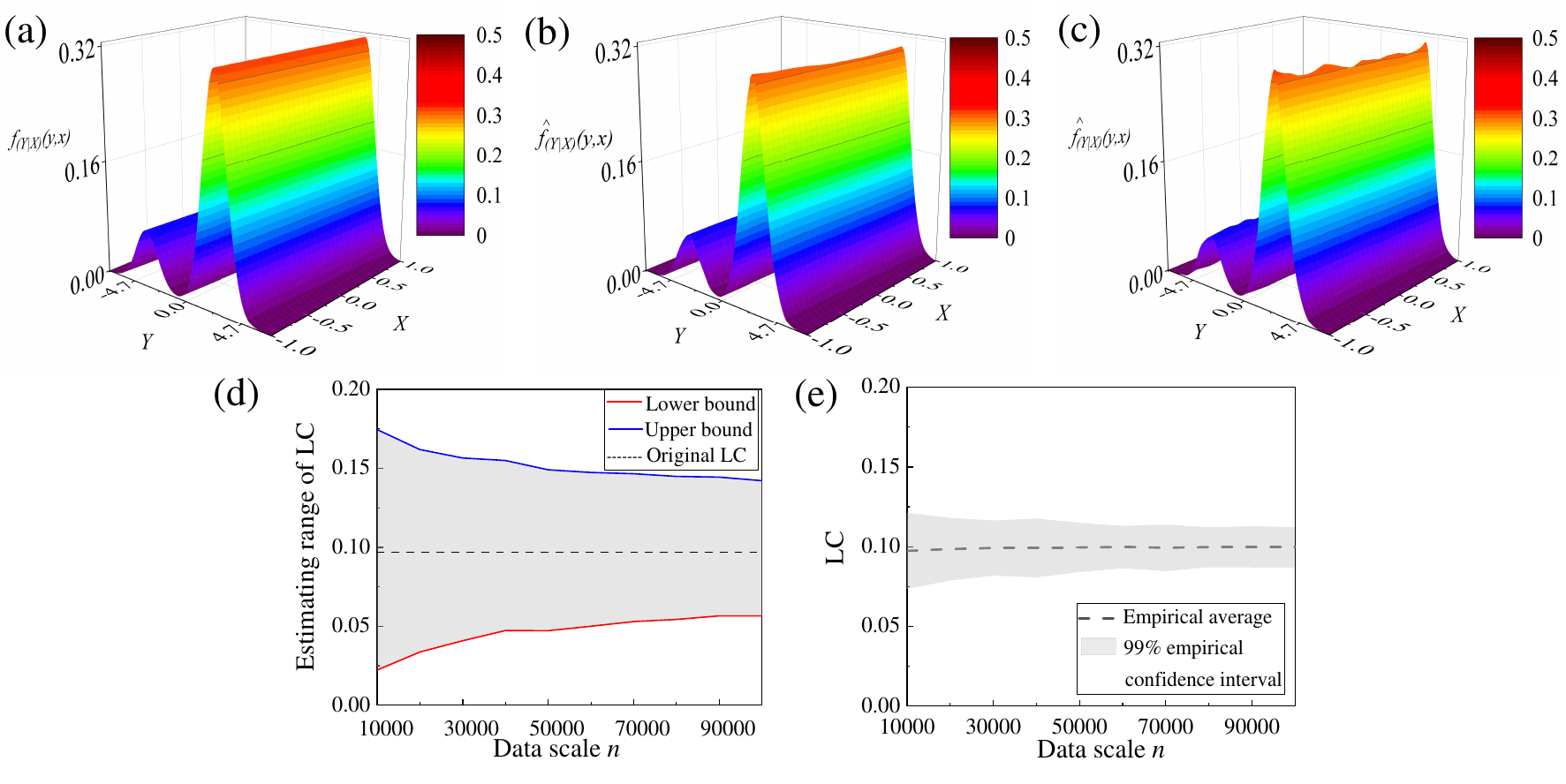}
\caption{(a) The original CoDF. (b) Estimated CoDF with $h_{\mathsf x}=h_{\mathsf y}=n^{-\frac{1}{8}}$ and data scale $n=6\times 10^{4}$.
(c) Estimated CoDF with bandwidth from the Scott's formula and data scale $n=6\times 10^{4}$.
(d) Asymptotic bound on the original LC provided by Theorem~\ref{LC_main} as a function of data scale $n$. The dashed line is the original LC of the CoDF, $L=0.0968$.
(e) The estimated LC averaged over 150 computations with the grey area indicating the 99\% empirical confidence interval (3 times the empirical standard deviation from the mean).}
\label{case2}
\end{figure}
Consider a univariate $(Y|X)$ such that $Y=aX+\delta w_{1}+(1-\delta)w_{2}$, where
$w_1$ and $w_{2}$ have Gaussian distributions with means $\mu_1$ and $\mu_2$ and variances $\sigma_1^{2}$ and $\sigma_2^{2}$. The variable $\delta\in\{0,1\}$ has Bernoulli distribution with success probability $P(\sigma=1)=p$. 
The CoDF $f_{Y|X}(y,x)$ is
\begin{align*}\label{examp2_density}
    f_{Y|X}(y,x)=& \frac{p}{\sigma_1\sqrt{2\pi}}\!\exp{( -\frac{1}{2\sigma_1^2}(y-ax-\mu_1)^{2} ) }
    \!+\!\frac{1-p}{\sigma_2\sqrt{2\pi}}\exp{ ( -\frac{1}{2\sigma_2^2}(y-ax-\mu_2)^{2} ) }.
\end{align*}
We fix the parameters $a=0.5$, $\mu_1=3$, $\mu_2=-3$, $\sigma_1=\sigma_2=1$, $p=0.8$, and the domain $D_X=[-1,1]$ and  $D_Y=[-7.177,6.965]$.
We assume that Assumption~\ref{ass_1} holds with $C_{f}=1$ and $C_{b1}=C_{b2} = 0.5$. We run Algorithm~\ref{algo_enviro} with $m = 20$.

Fig.~\ref{case2} shows the original CoDF together with the estimated CoDF using the bandwidth from Theorem~\ref{LC_main} and from the Scott's formula.
The estimated CoDF based on the bandwidths $h_{\mathsf x}=h_{\mathsf y}=n^{-\frac{1}{8}}$ as shown in Fig.~\ref{case2}(b), is more similar to the original CoDF in Fig.~\ref{case2}(a) and shows better smoothness in comparison with the estimation based on Scott's formula in Fig.~\ref{case2}(c).


Fig.~\ref{case2}(d) gives the
asymptotic bounds on the original LC $L=0.0968$ provided by Theorem~\ref{LC_main} as a function of data scale $n\in [10^4, 10^5]$.
For example, $L\in [0.022,0.175]$ using $n=10^4$,
$L\in [0.050,0.147]$ using
$n=6\times 10^4$,
and
$L\in [0.057,0.142]$
using $n=10^5$,
which confirms asymptotic convergence of the bound as a function of $n$.
The empirical mean and the 99\% empirical confidence interval are shown in Fig.~\ref{case2}(e) using 150 runs of the algorithm for each $n$.
All the values are below the analytical upper bound shown in Fig.~\ref{case2}(d).

\end{exmp}

\begin{exmp}
\label{example:3}
Consider a bivariate $(Y|X)$ with $Y=AX+W$ with $W$ having Gaussian distribution with mean $\mu$ and covariance matrix $\Sigma$. The CoDF is \begin{equation}\label{two_dimenscdf}
   f_{Y|X}(\bs{y},\bs{x})=\frac{1}{ 2\pi\sqrt{|\Sigma|} }\exp{ ( -\frac{1}{2}( \bs{y}-A\bs{x} )^{T} \Sigma^{-1} ( \bs{y}-A\bs{x} )  ) }.
\end{equation}
We consider two cases, one with small LC and another with large LC.
\begin{figure}[hbt!]
\centering
\includegraphics[width=\textwidth]{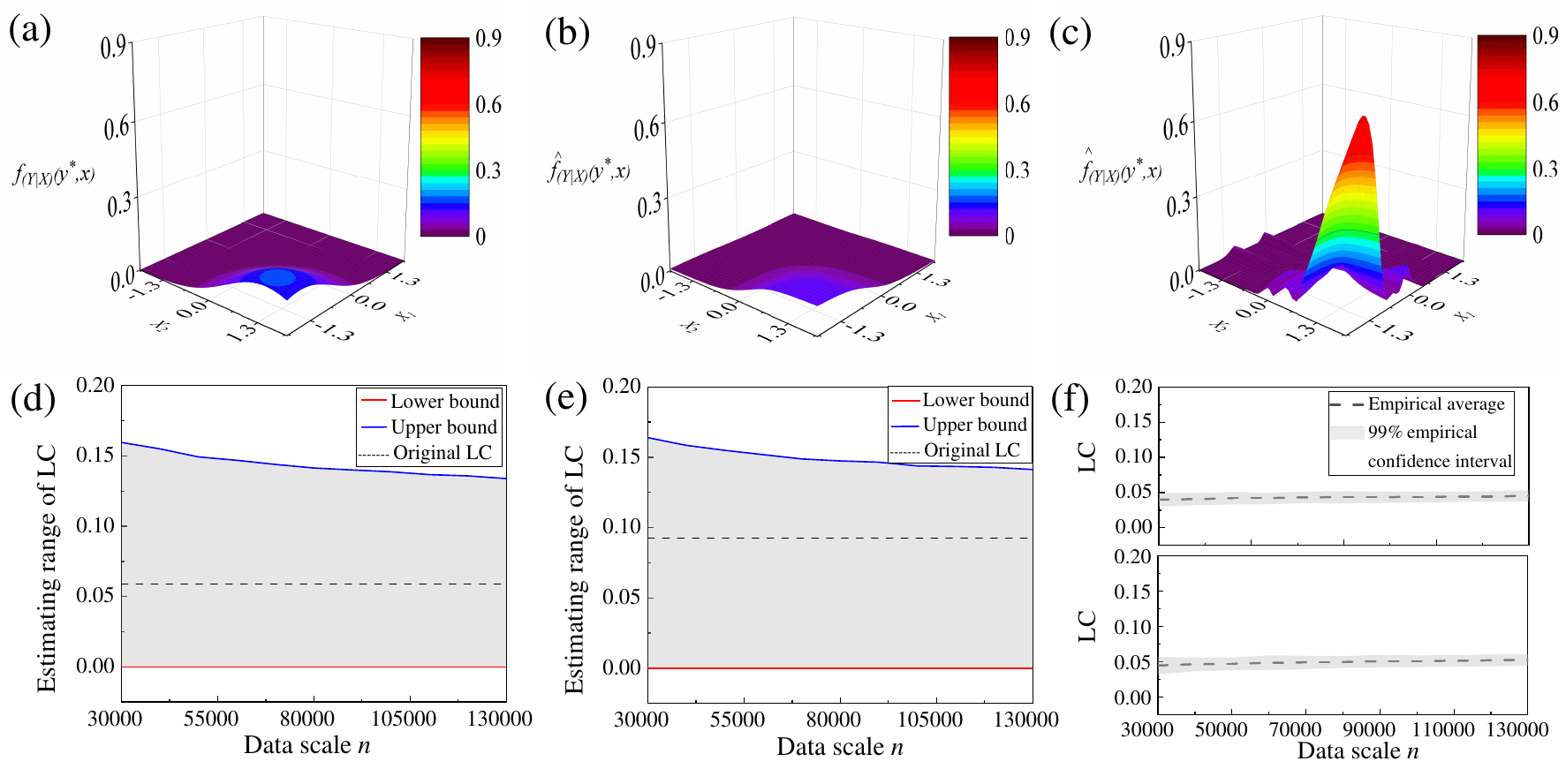}
\caption{(a) The original CoDF at point $\bs{y}^{*}=[1.4615;-1.6154]$. (b) Estimated CoDF with $h_{\mathsf x i}=h_{\mathsf y j}=n^{-\frac{1}{10}}$, $i,j=1,2$, with data scale $n=1.3\times 10^{5}$ at point $\bs{y}^{*}$.
(c) Estimated CoDF with bandwidth from the Scott's formula with data scale $n=1.3\times 10^{5}$ at point $\bs{y}^{*}$.
(d) Asymptotic bound on the original LC provided in Section~\ref{theo_twodimen} as a function of data scale $n$ for $ D_X = D_Y = [-0.2,0.2]^2$. The dashed line is the original LC of the CoDF, which is $L=0.0588$.
(e) Asymptotic bound on the original LC provided in Section~\ref{theo_twodimen} as a function of data scale $n$ for $ D_X = [0.2,0.6]^2$ and $D_Y = [-0.2,0.2]^2$. The dashed line is the original LC of the CoDF, which is $L=0.0925$.
(f) The estimated LCs averaged over 150 computations with the grey areas indicating the 99\% empirical confidence intervals (3 times the empirical standard deviation from the mean).
}\label{case3}
\end{figure}

\textbf{Case 1:} 
We fix the parameters
$
A=\Sigma=\begin{bmatrix}
1 & 0  \\
0 & 1 
\end{bmatrix}$, $\mu=\begin{bmatrix}
0   \\
0  
\end{bmatrix}$.
We assume that Assumption~\ref{asstwod_1} holds with
$C_{f}=0.5$,
$C_{ij}=0.5$, and $C_{xi}=0.5$, $i,j=1,2$.
We run Algorithm~\ref{algo_enviro} with $m = 20$.
Fig.~\ref{case3} shows the original CoDF together with the estimated CoDF using the bandwidth from Section~\ref{theo_twodimen} and from the Scott's formula. For better visualisation, we provide the estimation at $\bs{y}^{*}=[1.4615;-1.6154]$ under different bandwidths.
The original CoDF in Fig.~\ref{case3}(a) has a similar shape with its estimation based on bandwidth $h_{\mathsf x i}=h_{\mathsf y j}=n^{-\frac{1}{10}}$, $i,j=1,2$, as shown in Fig.~\ref{case3}(b). The largest values of $f_{Y|X}(\bs{y}^{*},\bs{x})$ in Fig.~\ref{case3}(a) and $\hat{f}_{Y|X}(\bs{y}^{*},\bs{x})$ in Fig.~\ref{case3}(b) are respectively $0.159$ and $0.111$. By contrast, the difference between the original CoDF in Fig.~\ref{case3}(a) and its estimation in Fig.~\ref{case3}(c) is large. The largest value of $\hat{f}_{Y|X}(\bs{y}^{*},\bs{x})$ in Fig.~\ref{case3}(c) is around $0.7$, which is much larger than the original CoDF. 
In addition, the CoDF in Fig.~\ref{case3}(a) and estimation in Fig.~\ref{case3}(b) are smoother than the estimation in
Fig.~\ref{case3}(c). 

Fig.~\ref{case3}(d) gives the
asymptotic bounds on the original LC provided by the inequality~\eqref{main_twodimen} as a function of data scale $n$ for $ D_X = D_Y = [-0.2,0.2]^2$.
For example, $L\in [0,0.160]$ using $n=3\times 10^{4}$,
$L\in [0,0.142]$ using $n=8\times 10^4$,
and
$L\in [0,0.134]$
using $n=1.3\times 10^{5}$,
which confirms asymptotic convergence of the bound as a function of $n$.
The original LC in this domain is $0.059$ and belongs to the above estimated ranges.
A similar experiment is reported in Fig.~\ref{case3}(e) for the domain $ D_X = [0.2,0.6]^2$ and $D_Y = [-0.2,0.2]^2$.
%
The empirical means and the 99\% empirical confidence intervals are shown in Fig.~\ref{case3}(f) using 150 runs of the algorithm for each $n$.
All the values are below the analytical upper bound shown in Fig.~\ref{case3}(d) and (e).

\textbf{Case 2:} 
We fix the parameters $
A=\begin{bmatrix}
1 & 0  \\
0 & 1 
\end{bmatrix}$, $\mu=\begin{bmatrix}
0   \\
0  
\end{bmatrix}$, and $\Sigma=\begin{bmatrix}
0.2 & 0  \\
0 & 0.2 
\end{bmatrix}$.
We assume that Assumption~\ref{asstwod_1} holds with
$C_{f}=1$,
$C_{ij}=10$, and $C_{xi}=10$, $i,j=1,2$.
This choice of $\Sigma$ makes the LC larger, which in turn requires a larger data scale $n$ for the estimation. The original LC on the domain $ D_X = [0,0.2]^2$ and $D_Y = [-0.2,-0.1]^2$ is $L=1.04$. The estimated LC and the asymptotic intervals computed using our approach are reported in Table~\ref{tab} for $n=5\times 10^6$, $n=5\times 10^7$ and $n=10^8$.

\begin{table}[hbt!]
\begin{center}
\begin{tabular}{|c|c|c|c|}
\hline
{Data scale $n$}  & {$5\times 10^{6}$}& {$5\times 10^{7}$} & {$1\times 10^{8}$} 
\\ \hline 
Range of LC &[0,1.428] & [0.143,1.238] & [0.247,1.2]\\
$\hat L$    &$0.563$ & $0.691$ &  $0.724$\\
\hline
\end{tabular}
\end{center}
\caption{The estimated LC and the asymptotic ranges computed using our approach for Example~\ref{example:3} (case 2) for different values of $n$.
The range contains the true LC $L=1.04$.}
\label{tab}
\end{table}
\end{exmp}

\begin{exmp}\label{example:4}
We consider the 7-dimensional model of a BMW 320i car \citep{althoff2019commonroad} reported also in Appendix~\ref{auto_car} to validate the effectiveness of our proposed method. We assume the space is $[x_{1}(k);x_{2}(k)]\in [0.8,1.2]^2$, $x_{3}(k)\in[0,0.3]$, $[x_{4}(k);x_{5}(k)]\in[0,0.1]^2$,
$x_{6}(k)\in [0.5,1]$, and $x_{7}(k)\in [0,0.2]$. The dynamics are affected by the standard normal variable.
%
We take the bound $A_{i} = 20$ with bandwidths $h_{\mathsf x i}=h_{\mathsf y j}=n^{-\frac{1}{14}}$, $i,j\in\{1,\ldots,7\}$, to generate the CoDF $\hat{T}_{i}(x_{i}(k+1)|\bs{x}(k))$, $i\in\{1,\ldots,7\}$ and estimate the LC $L_{i}$.
Table~\ref{tab2} shows the values of $L_{1}$, $L_{3}$ and $L_{7}$ with respect to $x_{3}(k)\in[0,0.1]$ for fixed $x_{1}(k)=x_{2}(k)=1$, $x_{4}(k)=x_{5}(k)=0.05$, $x_{6}(k)=0.8$, $x_{7}(k)=0.1$ and $\bs{x}(k+1) = \bs{0}$.
They are all located in the estimated ranges that are decreasing with increasing data scale $n$.
\begin{table}[hbt!]
\begin{center}
\begin{tabular}{|c|c|c|c|c|}
\hline
{Data scale $n$} & Value of $L_1$, $L_3$ and $L_7$ & Range of $L_{1}$ & Range of $L_{3}$ & Range of $L_{7}$
\\ \hline 
{$1\times 10^{7}$} &   & $[0,1.0582]$ & $[0,1.0198]$ & $[0,1.0062]$\\
{$5\times 10^{7}$} & $0.4319$, $0.3128$, $0.3128$  &$[0,0.8672]$ & $[0,0.8215]$ &  $[0,0.8034]$\\
{$1\times 10^{8}$} &   &$[0,0.7993]$ & $[0,0.7503]$ & $[0,0.7294]$\\
\hline
\end{tabular}
\end{center}
\caption{
The estimated LC of $T_{i}(x_{i}(k+1)|\bs{x}(k))$, $i\in\{1,3,7\}$, and their intervals computed using our approach for Example~\ref{example:4} for different values of $n$.}
\label{tab2}
\end{table}
\end{exmp}

\vspace{-0.8cm}

\subsection{Introduction of 7-Dimensional Autonomous Vehicle}\label{auto_car}
For $|x_{4}(k)|<0.1$:
\begin{align*}
    &x_{i}(k+1)=x_{i}(k)+\tau a_{i}+0.5 w_{i}(k),~i\in \{1,\ldots,7 \} \setminus \{3,4\},\\
    &x_{3}(k+1)=x_{3}(k)+\tau Sat_{1}(v_{1}(k))+0.5w_{3}(k),\\
    &x_{4}(k+1)=x_{4}(k)+\tau Sat_2(v_{2}(k))+0.5w_{4}(k),
\end{align*}
and for $|x_{4}(k)|\ge0.1$:
\begin{align*}
    &x_{i}(k+1)=x_{i}(k)+\tau b_{i}+0.5 w_{i}(k),~i\in \{1,\ldots,7 \}\setminus \{3,4\},\\
    &x_{3}(k+1)=x_{3}(k)+\tau Sat_{1}(v_{1}(k))+0.5w_{3}(k),\\
    &x_{4}(k+1)=x_{4}(k)+\tau Sat_2(v_{2}(k))+0.5w_{4}(k),
\end{align*}
where,
\begin{align*}
    a_1=&x_{4}(k)\cos(x_{5}(k)),
    ~a_{2}=x_{4}(k)\sin(x_{5}(k)),
    ~a_{5}=\frac{x_{4}(k)}{l_{wb}}\tan(x_{3}(k)),\\
    a_{6}=&\frac{v_{2}(k)}{l_{wb}}\tan(x_{3}(k))+\frac{x_{4}(k)}{l_{wb}\cos^{2}(x_{3}(k))}v_{1}(k),
    ~a_7=0,\\
    b_1=&x_{4}(k)\cos( x_{5}(k)+x_{7}(k) 
    ),
    ~b_{2}=x_{4}(k)\sin(x_{5}(k)+x_{7}(k)),
    ~b_{5}=x_{6}(k),\\
    b_{6}=&\frac{\mu m}{I_{z}(l_{r}+l_{f})}( l_{f}C_{S,f}(gl_{r}-v_{2}(k)h_{cg})x_{3}(k)+(l_{r}C_{S,r}(gl_{f}+v_{2}(k)h_{cg})-l_{f}C_{S,f}(gl_{r}\\
    &-v_{2}(k)h_{cg}) )x_{7}(k)-( l^2_{f}C_{S,f}(gl_{r}-v_{2}(k)h_{cg})+l^2_{r}C_{S,r}(gl_{f}+v_{2}(k)h_{cg})  )\frac{x_{6}(k)}{x_{4}(k)} ),\\
    b_{7}=&\frac{\mu}{x_{4}(k)(l_r+l_f)}( C_{S,f}(gl_{r}-v_{2}(k)h_{cg})x_{3}(k)-( C_{S,r}(gl_{f}+v_{2}(k)h_{cg})\\
    &+C_{S,f}(gl_{r}-v_{2}(k)h_{cg}) )x_{7}(k)-( l_{f}C_{S,f}(gl_r-v_{2}(k)h_{cg})
    -l_{r}C_{S,r}(gl_{f}+v_{2}(k)h_{cg}) )\frac{x_{6}(k)}{x_{4}(k)} )-x_{6}(k).
\end{align*}
We consider the variables and parameters for a BMW 320i car, as shown in Table~\ref{tab:syspars}.
In addition, $Sat_{1}(\cdot)$ and $Sat_{2}(\cdot)$ are input saturation functions introduced by \cite{althoff2019commonroad}.
\begin{table}[t]
  \centering
  \caption{State variables and system parameters.}
  \vspace{-0pt}
  \small
  \begin{tabular}[t]{|c|c|l|}
    \hline
    \textbf{Variable}&\textbf{Value}&\textbf{Description}\\ \hline
    $x_1$, $x_2$&$\mathbb{R}$ & Position coordinates\\
    $x_3$, $x_4$&$\mathbb{R}$ & Steering angle, heading velocity\\
    $x_5$, $x_6$&$\mathbb{R}$ & Yaw angle, Yaw rate\\
    $x_7$&$\mathbb{R}$ & Slip angle\\[0.5ex]
    $v_1$, $v_{2}$&$0$, $0$ & The inputs of controlling the steering angle and heading velocity\\[0.5ex]
    $l_{wb}$&$2.5789$ & Wheelbase $[\text{kg}]$\\
    $m$&$1.093.3$& Total mass of the vehicle $[\text{kg}]$\\
    $\mu$&$1.0489$& Friction coefficient\\
    $l_f$&$1.156$&Distance from the front axle to centre of gravity (CoG) $[\text{m}]$\\
    $l_r$&$1.422$& Distance from the rear axle to CoG $[\text{m}]$\\
    $h_{cg}$&$0.574$& Hight of CoG $[\text{m}]$\\
    $I_{z}$&$1791.6$& The Moment of inertia for entire mass around $z$ axis $[\text{kg}~\text{m}^2]$ \\
    $C_{S,f}$&$20.89$& The front cornering stiffness coefficient $[1/\text{rad}]$ \\
    $C_{S,r}$&$20.89$& The rear cornering stiffness coefficient $[1/\text{rad}]$ \\
    $\tau$&$0.001$& The update period $[\text{s}]$ \\\hline
  \end{tabular}
  \label{tab:syspars}
\end{table}

\section{SUPPLEMENTARY of SECTION~\ref{datadriv_imdp}}

\subsection{Supplementary of Section~\ref{empirical_method}}\label{proof_sec41}


\subsection{Proof of Lemma~\ref{truevalue_distan}}
\begin{proof}
The proof is based on induction on $k$ and utilising the recursive relation in \eqref{upprob_path}.

\end{proof}

\subsection{Proof of Theorem~\ref{empirical_guarantee}}
\begin{proof}
Let $\hat{\Sigma}_{ss}$ be the finite abstraction of $\Sigma_{ss}$. Based on Theorem~\ref{SA13_bound} and Lemma~\ref{truevalue_distan}  we have 
\begin{align*}
    \left|P(\Sigma_{ss}\vDash \psi)-P(\Bar{\Sigma}_{ss}\vDash \psi)\right|\leq& \left| P(\Sigma_{ss}\vDash \psi)- P(\hat{\Sigma}_{ss}\vDash \psi)+ P(\hat{\Sigma}_{ss}\vDash \psi)-P(\Bar{\Sigma}_{ss}\vDash \psi) \right|\\
    \leq& \left| P(\Sigma_{ss}\vDash \psi)- P(\hat{\Sigma}_{ss}\vDash \psi)\right|+\left|P(\hat{\Sigma}_{ss}\vDash \psi)-P(\Bar{\Sigma}_{ss}\vDash \psi) \right|\\
    \leq& \epsilon+\epsilon_g, \text{with $\epsilon=k\delta B_L \mathfrak{L}$,}
\end{align*}
where $P(\Sigma_{ss}\vDash \psi)$ and $P(\hat{\Sigma}_{ss}\vDash \psi)$ are the probabilities that $\Sigma_{ss}$ and $\hat{\Sigma}_{ss}$ satisfy the specification $\psi$ under strategy $\varpi$, $P(\Bar{\Sigma}_{ss}\vDash \psi)=P^{k}_{up}(q),~ q\in Q$, $k$ is the number of steps, $n_Q$ is the number of entries of $Q$, $\delta$ is the state discretisation parameter, $B_L$ is the asymptotic upper bound of LC, and $\mathfrak{L}$ is the Lebesgue measure of the specification set.
\end{proof}

\section{SUPPLEMENTARY of SECTION~\ref{case_sec}}

\subsection{Supplementary of Example~\ref{verif_case1}}\label{supp_verif_case1}
The unknown linear stochastic system is 
\begin{align*}
    X(k+1)=AX(k)+W(k),
\end{align*}
where $
A=\begin{bmatrix}
0.4 & 0.1  \\
0 & 0.5 
\end{bmatrix}$. $W$ has Gaussian distribution with mean $\mu=\begin{bmatrix}
0  \\
0\end{bmatrix}$ and variance $\Sigma=\begin{bmatrix}
1 & 0  \\
0 & 1 
\end{bmatrix}$. The corresponding CoDF is 
\begin{equation*}
   f_{X_{k+1}|X_{k}}(X(k+1),X(k))=\frac{1}{ 2\pi\sqrt{|\Sigma|} }\exp{ ( -\frac{1}{2}( X(k+1)-AX(k) )^{T} \Sigma^{-1} ( X(k+1)-AX(k) )  ) }.
\end{equation*}

\subsection{Supplementary of Example~\ref{verif_case2}}\label{supp_verif_case2}
The unknown switched stochastic system with two actions $S_{\mathfrak a}=\{a_1,a_2 \}$ is 
\begin{align*}
    X(k+1)=f(X(k),a(k))=\begin{cases}
     A_1 X(k)+W(k)  ,~\text{if action is $a_1$},\\
    A_2 X(k)+W(k)  ,~\text{if action is $a_2$},
\end{cases}
\end{align*}
where $
A_1=\begin{bmatrix}
0.4 & 0.1  \\
0 & 0.5 
\end{bmatrix}$, $
A_2=\begin{bmatrix}
0.4 & 0.1  \\
-0.2 & 0.5 
\end{bmatrix}$. $W$ has Gaussian distribution with mean $\mu=\begin{bmatrix}
0  \\
0\end{bmatrix}$ and variance $\Sigma=\begin{bmatrix}
1 & 0  \\
0 & 1 
\end{bmatrix}$. The corresponding CoDF is 
\begin{align*}
   f_{X_{k+1}|X_{k}}(X(k+1),X(k))=\begin{cases}
        \text{if action is $a_1$},\\ 
        \frac{1}{ 2\pi\sqrt{|\Sigma|} }\exp{ ( -\frac{1}{2}( X(k+1)-A_1 X(k) )^{T} \Sigma^{-1} ( X(k+1)-A_1 X(k) )  ) }; \\
        \text{if action is $a_2$},\\
        \frac{1}{ 2\pi\sqrt{|\Sigma|} }\exp{ ( -\frac{1}{2}( X(k+1)-A_2 X(k) )^{T} \Sigma^{-1} ( X(k+1)-A_2 X(k) )  ) }.
    \end{cases}
\end{align*}

\begin{figure}[hbt!]
\centering
\includegraphics[width=0.9\textwidth]{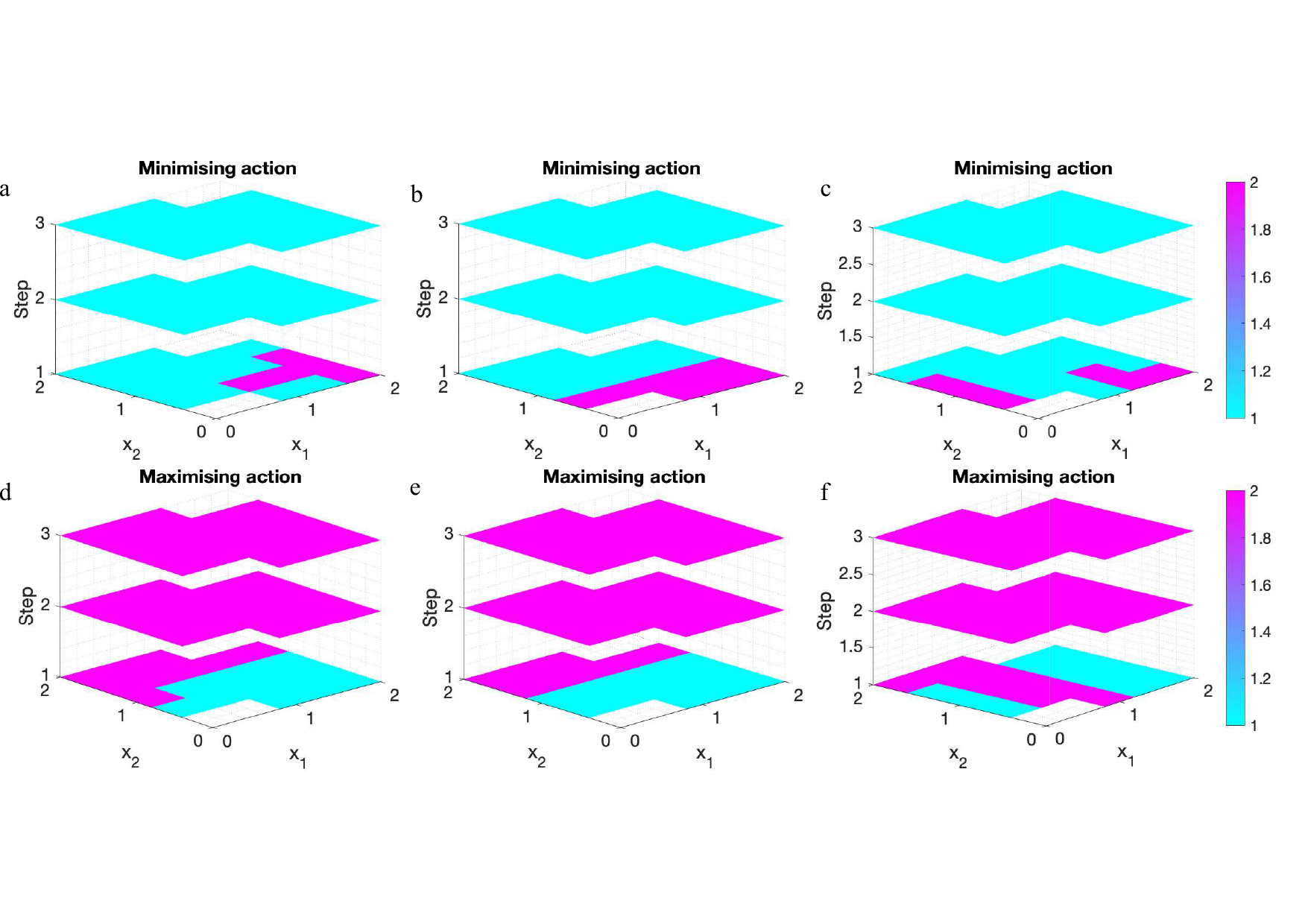}
\caption{The control policies as a function of state for minimising and maximising the satisfaction probability.
The action $a_1$ and $a_2$ are represented by cyan and magenta, respectively.
The panels (a) and (d) show the results from a model-based approach. The panels (b) and (e) show the results of the data-driven approximation using the empirical approach. The panels (c) and (f) are for the NPE under $\delta=0.4$.
}
\label{allaction04}
\end{figure}
\vspace{-0.3cm}
\vspace{-0.3cm}
\begin{figure}[hbt!]
\centering
\includegraphics[width=0.9\textwidth]{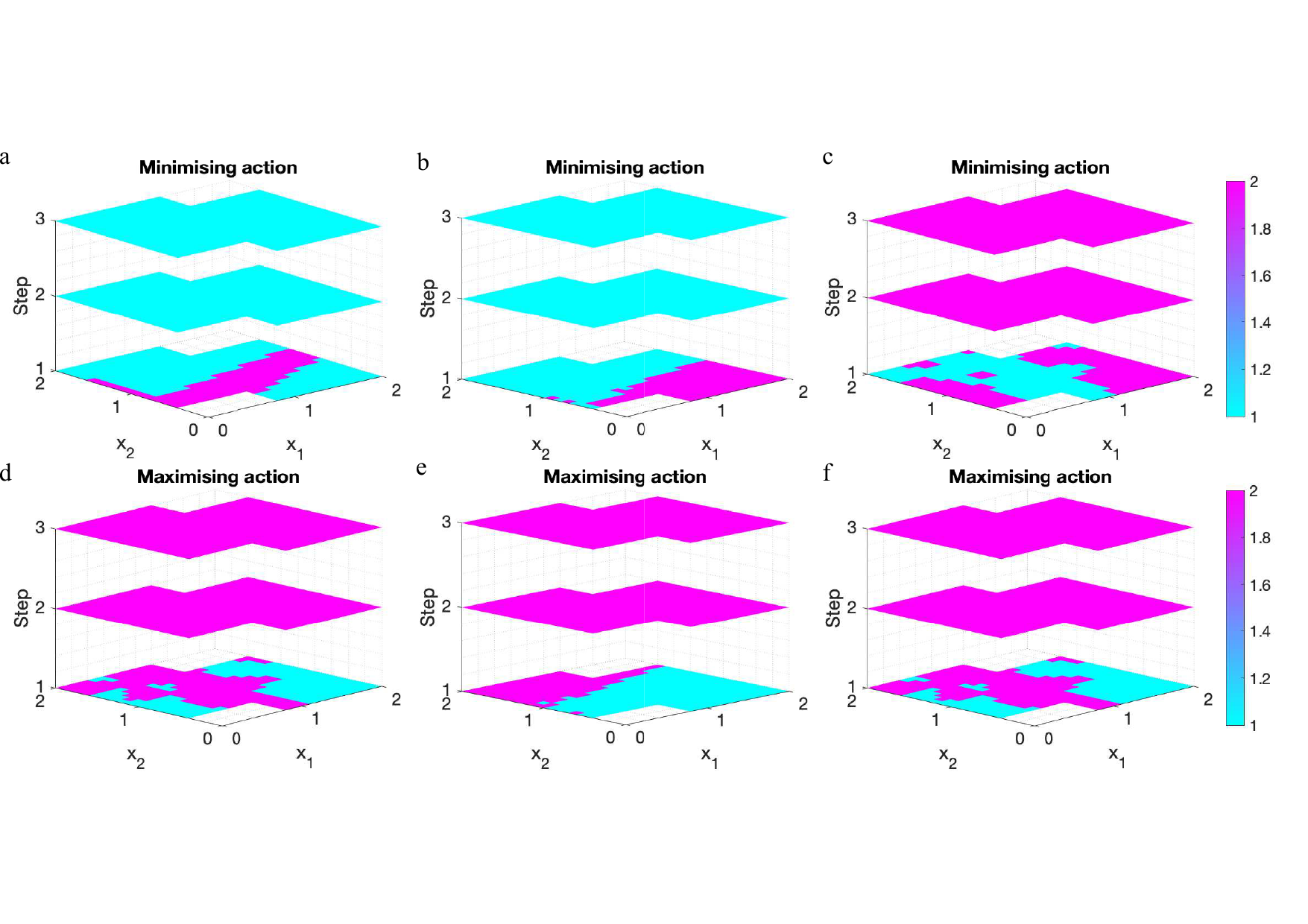}
\caption{
The control policies as a function of state for minimising and maximising the satisfaction probability.
The action $a_1$ and $a_2$ are represented by cyan and magenta, respectively.
The panels (a) and (d) show the results from a model-based approach. The panels (b) and (e) show the results of the data-driven approximation using the empirical approach. The panels (c) and (f) are for the NPE under $\delta=0.1$.
}
\label{allaction01}
\end{figure}

\end{document}